\documentclass{article}
\usepackage{amsmath,amsfonts,amsthm}
\usepackage{geometry}
\geometry{width=150mm, top=25mm,bottom=25mm}
\newtheorem{thm}{Theorem}[section]
\newtheorem{lemma}[thm]{Lemma}
\newtheorem{prop}[thm]{Proposition}
\newtheorem{coro}[thm]{Corollary}

\theoremstyle{remark}
\newtheorem{rem}[thm]{Remark}
\newtheorem{ex}[thm]{Example}

\newcommand{\vertiii}[1]{{\left\vert\kern-0.25ex\left\vert\kern-0.25ex\left\vert #1 
    \right\vert\kern-0.25ex\right\vert\kern-0.25ex\right\vert}}
\def\Me{\mathcal M}
\def\Se{\mathcal S}

\def\Ne{\mathcal N}
\def\Ha{\mathcal H}

\def\Fe{\mathcal F}

\def\Tr{\mathrm{Tr}\,}
\def\states{\mathfrak S}
\def\<{\langle}
\def\>{\rangle}
\def\supp{\mathrm{supp}}
\begin{document}

\title{R\'enyi relative entropies and noncommutative $L_p$-spaces}
\author{Anna Jen\v cov\'a \thanks{jenca@mat.savba.sk}\\ \small \emph{Mathematical Institute, Slovak Academy of Sciences}\\
\small { \v Stef\'anikova 49, 814 73 Bratislava, Slovakia}}
\date{}

\maketitle

\begin{abstract} We propose an extension of the sandwiched R\'enyi relative $\alpha$-entropy to normal positive
functionals on arbitrary von Neumann algebras, for the values $\alpha>1$. For this, we use Kosaki's definition of
noncommutative $L_p$-spaces with respect to a state. We show that these extensions coincide with the previously defined
Araki-Masuda divergences [M. Berta et al., {\em Annales Henri Poincar{\'e}}, 19:1843--1867, 2018] and prove some of their properties,  in particular the  data processing inequality with respect to positive normal unital maps. 
As a consequence, we obtain monotonicity of the  Araki relative entropy with respect to such maps,
 extending the results of [A.~M\"uller-Hermes and D.~Reeb. {\em Annales Henri Poincar\'e}, {18}:{1777--1788}, {2017}] to  arbitrary von Neumann algebras.  It is also shown that equality in data processing inequality characterizes sufficiency (reversibility) of quantum channels.
\end{abstract}

\section{Introduction}

The classical R\'enyi relative entropies were introduced by an axiomatic approach in \cite{renyi1961onmeasures},  as the unique family of divergences satisfying certain natural properties. As it turned out, these quantities play a central role in many information-theoretic tasks, see e.g. \cite{csiszar1995generalized} for an overview. A  straightforward quantum generalization is given by standard quantum R\'enyi relative $\alpha$-entropies, defined for density matrices $\rho,\sigma$ as
\[
 D_\alpha(\rho\|\sigma)= \left\{\begin{array}{cc}    \frac1{\alpha-1} \log\left( \Tr \rho^\alpha\sigma^{1-\alpha}\right)& \mathrm{if }\  \alpha\in (0,1)\mbox{  or } \supp(\rho)\subseteq \supp(\sigma)\\ & \\
 \infty & \mathrm{otherwise,}
 \end{array}
 \right. 
\]
where $\supp(\rho)$ denotes the support of $\rho$ and $\alpha>0$, $\alpha\ne 1$.
These quantities share the useful properties of the classical R\'enyi relative entropy, but not for all values of the parameter $\alpha$. 
In particular, for a quantum channel $\Phi$, the data processing inequality (DPI)
\begin{equation}\label{eq:dpi}
 D_\alpha(\Phi(\rho)\|\Phi(\sigma))\le  D_\alpha(\rho\|\sigma)
\end{equation}
 holds for $\alpha$ in the range $(0,2]$, \cite{petz1984quasi,hmpb2011divergences}. 
Moreover, for  $\alpha\in (0,1)$ the standard R\'enyi relative entropies appear as error exponents and cutoff rates in hypothesis testing \cite{ansv2008chernoff,hmo2008exponents, himo2011onthequantum}.

Another quantum version of  R\'enyi relative entropy was introduced in \cite{wwy2014strong, mldsft20130nquantum}. It is the 
sandwiched R\'enyi relative $\alpha$-entropy, defined as 
\begin{align}\label{eq:sandwiched}
\tilde D_\alpha(\rho \| \sigma)=\left\{\begin{array}{cc} \frac{1}{\alpha-1}\log \Tr \left[\left( \sigma^{\frac{1-\alpha}{2\alpha}}\rho
\sigma^{\frac{1-\alpha}{2\alpha}}\right)^\alpha\right] & \mathrm{if }\ \supp(\rho)\subseteq \supp(\sigma)
\\  & \\
\infty & \mathrm{otherwise}
 \end{array}
 \right.
\end{align}
for $\alpha>0$, $\alpha\ne 1$. The sandwiched entropies satisfy DPI for $\alpha\in [1/2,1)\cup (1,\infty)$, \cite{mldsft20130nquantum, wwy2014strong, beigi2013sandwiched, frli2013monotonicity}. 
For $\alpha>1$, $\tilde D_\alpha$ have an operational meaning  as strong converse exponents in quantum hypothesis testing and channel coding, \cite{moog2015quantum,moog2014strong}. Moreover, 
both $ D_\alpha$ and $\tilde D_\alpha$ yield the Umegaki relative entropy 
\[
D_1(\rho\|\sigma)=\left\{\begin{array}{cc}  \Tr \rho(\log(\rho)-\log(\sigma)) & \mathrm{if }\ \supp(\rho)\subseteq \supp(\sigma)\\ & \\
 \infty & \mathrm{otherwise}
 \end{array}
 \right.
\]
in the limit as $\alpha\to 1$. On the other hand, in the limit $\alpha\to \infty$, $\tilde D_\alpha$ gives the relative max entropy 
\begin{equation}\label{eq:maxent}
\tilde D_\infty(\rho\|\sigma)= \log(\inf\{\lambda>0,\ \rho\le \lambda \sigma\},
\end{equation}
see \cite{wwy2014strong, mldsft20130nquantum} for the proofs of these properties.
\begin{rem} Both $D_\alpha$ and $\tilde D_\alpha$ are contained in the family of entropic pressure functionals introduced in \cite[Section 3.3]{jopp2012entropic} as a tool for studying entropic fluctuations in quantum statistical mechanics. The same family of functionals in the context of quantum information theory was studied in \cite{auda2015alpha}.

\end{rem}

Let $(\Phi,\rho,\sigma)$ be a triple consisting of a quantum  channel  $\Phi$ and a  pair of states $\rho$, $\sigma$ on the input space of $\Phi$. A channel $\Psi$ satisfying $\Psi\circ\Phi(\rho)=\rho$ and  $\Psi\circ\Phi(\sigma)=\sigma$ is called a recovery map for $(\Phi,\rho,\sigma)$. If a recovery map exists, we say that the  channel 
$\Phi$ is sufficient (or reversible) with respect to $\{\rho,\sigma\}$. This terminology was introduced in \cite{petz1986sufficient, petz1988sufficiency}, by analogy with the classical notion of a sufficient statistic. Clearly, if $\Phi$ is sufficient with respect to 
$\{\rho,\sigma\}$, equality must be attained in DPI. It is much less obvious that the opposite implication holds in some cases. This was first observed in \cite{petz1986sufficient,petz1988sufficiency} for $D_1$  and $ D_{1/2}$ and  later extended to a large class of quantum divergences, including  $D_\alpha$ with $\alpha\in (0,2)$, \cite{hmpb2011divergences}. 
The same property for  $\tilde D_\alpha$ with $\alpha>1$ was proved in \cite{jencova2016preservation}.

Quantum versions of relative entropies are usually studied in the finite dimensional setting. Nevertheless, the standard version $ D_\alpha$ is derived from 
the quasi-entropies \cite{petz1984quasi}, which were defined  in \cite{petz1985quasi} also in the more general context of von Neumann algebras.
Moreover, the entropies $D_\alpha$ have similar properties and play a similar role in quantum hypothesis testing in this setting, \cite{jops2012quantum}. 
A definition of sandwiched R\'enyi entropies for states on von Neumann algebras was recently 
 proposed in \cite{berta2018renyi}. These entropies are called the Araki-Masuda divergences and are  based on the Araki-Masuda definition of non-commutative $L_p$ spaces with respect to a state. It is conjectured that these quantities characterize the strong converse exponents in binary quantum hypothesis testing, as in the finite dimensional case.

 The aim of the present work is to propose a von Neumann algebraic extension of $\tilde D_\alpha$ for $\alpha>1$ using the interpolating family of Kosaki's  $L_p$-spaces, \cite{terp1982interpolation,kosaki1984applications}. This approach was inspired by the work by Beigi \cite{beigi2013sandwiched}, where a similar family of norms (in finite dimensions)  was used to prove DPI for $\tilde D_\alpha$ with 
$\alpha>1$. It was later observed \cite{mhre2015monotonicity} that this method works even for positive trace-preserving maps and taking the limit $\alpha\to 1$ implies that the quantum relative entropy is monotone under such mappings. This important result was also extended to density operators on infinite dimensional Hilbert spaces.  The  framework of interpolation norms was also used  in \cite{jencova2016preservation} to show that in finite dimensions, equality in DPI for $\tilde D_\alpha$, $\alpha>1$ implies sufficiency of the channel.

As one of the main results, we prove that the proposed quantities coincide 
 with the Araki-Masuda divergences of \cite{berta2018renyi}. This  was independently proved by Hiai \cite{hiai2017unpublished}, using  different methods.
For normal states  $\psi$, $\varphi$ of an arbitrary von Neumann algebra $\Me$, we  further  prove the following properties of $\tilde D_\alpha$: 
 \begin{enumerate}
\item[(a)] Positivity: $\tilde D_\alpha(\psi\|\varphi)\ge 0$, with equality if and only if $\psi=\varphi$.
\item[(b)] Monotonicity: if $\psi\ne\varphi$ and $\tilde D_\alpha(\psi\|\varphi)<\infty$ for some $\alpha>1$, then 
the function $\alpha'\mapsto \tilde D_{\alpha'}(\psi\|\varphi)$ is strictly increasing for $\alpha'\in (1,\alpha]$.
\item[(c)] Limit values: for $\alpha\to 1$, the Araki relative entropy $D_1(\psi\|\varphi)$ is obtained, $\alpha\to \infty$ yields the relative max-entropy $\tilde D_\infty(\psi\|\varphi)$.
\item[(d)] Relation to the standard R\'enyi relative entropy:  for $\alpha>1$,
\[
D_{2-1/\alpha}(\psi\|\varphi)\le \tilde D_\alpha(\psi\|\varphi)\le D_\alpha(\psi\|\varphi).
\]
\item[(e)] Order relations: $\tilde D_\alpha$ can be extended to all positive normal functionals on $\Me$. With this extension, $\psi_0\le \psi$ and $\varphi_0\le\varphi$ imply \[
\tilde D_\alpha(\psi_0\|\varphi)\le \tilde D_\alpha(\psi\|\varphi),\quad 
\tilde D_\alpha(\psi\|\varphi_0)\ge \tilde D_\alpha(\psi\|\varphi).
\]
\item[(f)] Lower semicontinuity: the map $(\psi,\varphi)\mapsto \tilde D_\alpha(\psi\|\varphi)$
 is jointly lower semicontinuous (on the positive part of the predual of $\Me$)
\item[(g)] Generalized mean: let $\psi=\psi_1\oplus \psi_2$, 
$\varphi=\varphi_1\oplus\varphi_2$. Then
\begin{align*}
\exp\{(\alpha-1)\tilde D_\alpha(\psi\|\varphi)\}=\exp&\{(\alpha-1)\tilde D_\alpha(\psi_1\|\varphi_1)\}\\
&+\exp\{(\alpha-1)\tilde D_\alpha(\psi_1\|\varphi_1)\}.
\end{align*}
\item[(h)] Data processing inequality: $\tilde D_\alpha(\Phi(\psi)\|\Phi(\varphi))\le \tilde D_\alpha(\psi\|\varphi)$ holds for any $\alpha>1$ and any  positive trace preserving map $\Phi$. We also give some 
lower and upper bounds on the value of $ \tilde D_\alpha(\psi\|\varphi)- \tilde D_\alpha(\Phi(\psi)\|\Phi(\varphi))$.
 \end{enumerate}
 We also prove a characterization of sufficiency: if $1<\alpha<\infty$ and $\tilde D_\alpha(\psi\|\varphi)$ is finite, then 
 equality in DPI for a 2-positive trace preserving map $\Phi$  implies that $\Phi$  is sufficient with respect to $\{\psi,\varphi\}$.

The properties (a)-(d) and (h) for the Araki-Masuda divergences were shown in \cite{berta2018renyi}. Nevertheless we give 
independent proofs in our setting, which is closely related to the required interpolation techniques. 
Note also that only the second inequality was proved in (d) and complete positivity was required for  (h). Our proof of DPI is close to that of \cite{beigi2013sandwiched} and only positivity is assumed. Note also that (c) and (h) together imply that the relative entropy $D_1(\psi\|\varphi)$ is monotone under positive trace preserving maps between the preduals, which extends the result of \cite{mhre2015monotonicity}.

The outline of the paper is as follows. In Section \ref{sec:nclp}, we introduce the Kosaki's $L_p$-spaces and give an overview of their properties, together with some 
technical results needed later. In Section \ref{sec:sandwiched}, we give the definition of $\tilde D_\alpha$ and prove the equality with Araki-Masuda divergences as well as the properties (a)-(h). The last section deals with sufficiency of maps. 
Some more technical details and  a brief review on the complex interpolation method are given in the Appendices.

\section{Non-commutative $L_p$ spaces with respect to a state} \label{sec:nclp}

Let $\Me$ be a ($\sigma$-finite) von Neumann algebra acting on a Hilbert space $\Ha$ and let $\Me^+$ be the cone of positive elements in $\Me$. We denote the predual by $\Me_*$, its positive part by $\Me_*^+$ and the set of normal states by $\states_*(\Me)$. For $\psi\in \Me_*^+$, we will denote by $s(\psi)$ the support projection  of $\psi$.
For $1\le p\le \infty$, let $L_p(\Me)$ be the 
Haagerup's $L_p$-space over $\Me$ \cite{haagerup1979Lp}, precise definitions and further details on $L_p(\Me)$ can be found in the notes \cite{terp1981lpspaces}.
 
We will use the identification $\Me_*\ni \psi \leftrightarrow h_\psi\in L_1(\Me)$ and the notation $\Tr h_\psi=\psi(1)$ for the trace in $L_1(\Me)$. It this way,
 $\states_*(\Me)$ is identified with the subset of elements in the positive cone $L_1(\Me)^+$ with unit trace. We will also assume the standard form 
$(\lambda(\Me), L_2(\Me), J, L_2(\Me)^+)$ for $\Me$,  see \cite[Thm. 3.6]{terp1981lpspaces}, where $\lambda$ is the left action
\[
\lambda(x): h\mapsto xh, \qquad h\in L_2(\Me),\ x\in \Me
\]
and the conjugation $J$ is defined by $Jh=h^*$, $h\in L_2(\Me)$, see \cite{stzh1979lectures, takesaki2003TOAII} for the definition of a standard form. We  denote
 the inner product in $L_2(\Me)$ by
\begin{equation}\label{eq:Haage_innerp}
(h,k):=\Tr k^*h,\qquad h,k\in L_2(\Me).
\end{equation}
For $k\in L_2(\Me)$, let $\omega_k\in \Me_*^+$ be the linear functional determined by $k$, that is,
\[
\omega_k(a)=(ak,k),\qquad a\in \Me.
\]
For any $\varphi\in \Me_*^+$, $h_\varphi^{1/2}$ is the  unique vector representative of $\varphi$ in the positive cone $L_2(\Me)^+$.

In this section, we describe the noncommutative $L_p$-spaces with respect to a faithful normal state $\varphi$ obtained by complex interpolation. These spaces were defined 
in \cite{kosaki1984applications, trunov1979anoncommutative, zolotarev1982lpspaces} and also in \cite{terp1982interpolation}, where $\varphi$ is allowed to be a weight.  
We will follow the construction by Kosaki, details can be found in  \cite{kosaki1984applications}. 

\subsection{The space $L_\infty(\Me,\varphi)$}
Fix a  faithful normal state $\varphi$ on $\Me$.  To apply the complex interpolation method, we first show that  $\Me$ can be continuously embedded into $L_1(\Me)\simeq  \Me_*$. 
For $x\in \Me$, we put 
\[
h_x:=h_\varphi^{1/2}xh_\varphi^{1/2}.
\]
 By H\"older's inequality \cite[Thm. 23]{terp1981lpspaces}, we have $h_x\in L_1(\Me)$ and $\|h_x\|_1\le \|x\|$. Moreover, $x\mapsto h_x$ is injective and $h_x\in L_1(\Me)^+$ if  and only if $x$ is positive. Note also that for $y\in \Me$, 
\[
\Tr h_xy=\Tr h^{1/2}_\varphi xh^{1/2}_\varphi y= \Tr h^{1/2}_\varphi yh^{1/2}_\varphi x=\Tr h_yx. 
\]
The map $x\mapsto h_x$ is obviously linear and defines a continuous positive embedding of $\Me$ into $L_1(\Me)$.  The image of $\Me$ is the  dense  linear subspace  
\[
L_\infty(\Me, \varphi):=\{h_x,\ x\in \Me\}\subseteq L_1(\Me).
\] 
 The  norm in $L_\infty(\Me, \varphi)$ is introduced as 
\[
\|h_x\|_{\infty,\varphi}:=\|x\|.
\]
The next lemma shows that positive elements in $L_\infty(\Me, \varphi)$  
can be easily characterized. This result is a straightforward consequence of the commutant Radon-Nikodym theorem,  we give a proof for completeness.

\begin{lemma}\label{lemma:linftyplus} 
Let $k\in L_1(\Me)^+$. Then $k=h_x$ for some $x\in \Me^+$ if and only if 
$k\le \lambda h_\varphi$ for some $\lambda>0$. In this case, 
\[
\|k\|_{\infty,\varphi}=\|x\|=\inf\{\lambda>0, k\le\lambda h_\varphi\}.
\]

\end{lemma}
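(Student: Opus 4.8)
The plan is to prove the two implications separately and to read off the norm identity from the constructions. For the ``only if'' direction I would argue directly: if $k=h_x$ with $x\in\Me^+$, then $\|x\|\cdot 1-x=c^*c$ for $c:=(\|x\|\cdot 1-x)^{1/2}\in\Me$, and since $ch_\varphi^{1/2}\in L_2(\Me)$ by H\"older's inequality,
\[
\|x\|\,h_\varphi-k=h_\varphi^{1/2}c^*ch_\varphi^{1/2}=(ch_\varphi^{1/2})^*(ch_\varphi^{1/2})\ge 0
\]
in $L_1(\Me)$. Thus $k\le\lambda h_\varphi$ with $\lambda=\|x\|$, and in particular $\inf\{\lambda>0:\ k\le\lambda h_\varphi\}\le\|x\|=\|k\|_{\infty,\varphi}$.

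For the ``if'' direction, suppose $k\le\lambda h_\varphi$. Under the identification $L_1(\Me)\simeq\Me_*$ this means that the normal positive functional $\omega$ corresponding to $k$ satisfies $\omega\le\lambda\varphi$; recall that in the standard form $h_\varphi^{1/2}\in L_2(\Me)^+$ is a cyclic (and separating) vector for $l(\Me)$ with $\varphi(y)=\langle h_\varphi^{1/2},l(y)h_\varphi^{1/2}\rangle$. Using the Cauchy--Schwarz inequality for $\omega$ together with $\omega\le\lambda\varphi$, the sesquilinear form sending $\big(l(y_1)h_\varphi^{1/2},l(y_2)h_\varphi^{1/2}\big)$ to $\omega(y_1^*y_2)$ is well defined and bounded on the dense subspace $l(\Me)h_\varphi^{1/2}\subseteq L_2(\Me)$, so it is implemented by a bounded operator $T\ge 0$ with $\|T\|=\inf\{\mu>0:\ \omega\le\mu\varphi\}$. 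A short computation shows $Tl(z)=l(z)T$ for every $z\in\Me$, so $T$ lies in $l(\Me)'=r(\Me)$ and hence is right multiplication $r(x)\colon\xi\mapsto\xi x$ by a unique $x\in\Me^+$, with $\|x\|=\|T\|$. Then, for all $y\in\Me$,
\[
\Tr(ky)=\omega(y)=\langle h_\varphi^{1/2},r(x)l(y)h_\varphi^{1/2}\rangle=\Tr\big(h_\varphi^{1/2}\,y\,h_\varphi^{1/2}\,x\big)=\Tr(h_xy),
\]
using $r(x)l(y)h_\varphi^{1/2}=yh_\varphi^{1/2}x$, the definition of the $L_2$-inner product, and the tracial identity $\Tr(\xi\eta)=\Tr(\eta\xi)$ for $\xi,\eta\in L_2(\Me)$ applied with $\xi=h_\varphi^{1/2}y$, $\eta=h_\varphi^{1/2}x$. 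Since an element of $L_1(\Me)$ is determined by its pairing with $\Me$, this forces $k=h_x$ with $x\in\Me^+$ and $\|x\|\le\lambda$.

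Putting the two directions together proves the equivalence. For the norm identity, note that if $k=h_x$ then the first step gives $k\le\|x\|h_\varphi$, so the second step applies and, since $y\mapsto h_y$ is injective, the element it produces is the given $x$; this yields $\|x\|=\|T\|=\inf\{\mu>0:\ k\le\mu h_\varphi\}$, and $\|k\|_{\infty,\varphi}=\|h_x\|_{\infty,\varphi}=\|x\|$ by definition of the norm on $L_\infty(\Me,\varphi)$. The one genuinely non-trivial point, and where I expect the work to lie, is the ``if'' direction: one must manufacture an honest element $x\in\Me$ rather than merely an unbounded operator affiliated with $\Me$ or an element of the commutant, which is exactly what the standard form and the identification $l(\Me)'=r(\Me)$ provide — in essence this is the Radon--Nikodym theorem for normal positive functionals dominated by a fixed faithful normal state. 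One also has to keep track of the H\"older exponents when multiplying elements of various $L_p(\Me)$ and taking traces.
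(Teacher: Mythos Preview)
Your proof is correct and follows essentially the same route as the paper. For the ``only if'' direction you write the difference $\|x\|h_\varphi-k$ explicitly as a square, whereas the paper tests against $a\in\Me^+$ via H\"older's inequality; both are immediate. For the ``if'' direction the paper simply invokes the commutant Radon--Nikodym theorem \cite[Section~5.19]{stzh1979lectures} to produce $x\in\Me$ with $0\le x\le\lambda$ and $\Tr ky=(Jxh_\varphi^{1/2},yh_\varphi^{1/2})$, while you unpack the proof of that theorem in the standard form: build the bounded sesquilinear form, identify the implementing operator as an element of $l(\Me)'=r(\Me)$, and read off $x$. The norm identity is then handled the same way in both. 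So your argument is a more self-contained version of the paper's, at the cost of a few extra lines.
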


\begin{proof} Let $x\in \Me^+$, then for all $a\in \Me^+$, 
\[
\Tr h_xa\le \|h_xa\|_1=\|xh_\varphi^{1/2}ah_\varphi^{1/2}\|_1\le \|x\|\Tr h_\varphi a
\]
by H\"older's inequality, so that $h_x\le \|x\|h_\varphi$. Conversely, let $0\le k\le \lambda h_\varphi$. By the commutant Radon-Nikodym theorem 
\cite[Section 5.19]{stzh1979lectures}, there is some $x\in \Me$ such that $0\le x \le \lambda$ and for all $y\in \Me$,
\[
\Tr ky=(yh_\varphi^{1/2}, Jxh_\varphi^{1/2})=(yh_\varphi^{1/2}, h_\varphi^{1/2}x^*)=
\Tr xh_\varphi^{1/2}yh_\varphi^{1/2}=\Tr h_\varphi^{1/2}xh_\varphi^{1/2}y.
\]
It follows that $k=h_x$. The last assertion follows from the fact that for positive $x\in \Me$, $\|x\|=\inf\{\lambda>0,x\le \lambda\}$.  

\end{proof}

To characterize arbitrary elements in $L_\infty(\Me,\varphi)$, let  $\Me_2:=M_2(\Me)$ be the algebra of $2\times 2$ matrices over $\Me$. The predual of $\Me_2$ can be identified with $M_2(\Me_*)$, where for $\psi\in (\Me_2)_*$, we put $\psi_{ij}(a)=\psi(a\otimes |i\>\<j|)$. This means that we also identify  $L_1(\Me_2)$  with $M_2(L_1(\Me))$.

\begin{lemma}\label{lemma:linfty} Let $k\in L_1(\Me)$. Let $h_2,k_2\in L_1(\Me_2)$ be defined as  
\[
h_2:=\left(\begin{array}{cc} h_\varphi & 0\\
                               0 & h_\varphi \end{array}\right),\quad 
k_2:=\left(\begin{array}{cc} 0 & k\\
                             k^*&  0 \end{array}\right).
\]
Then $k\in L_\infty(\Me, \varphi)$ if and only if $k_2\le \lambda h_2$ for some $\lambda>0$. In this  case,  
\[
\|k\|_{\infty,\varphi}=\inf\{\lambda>0, k_2\le\lambda h_2\}.
\]

\end{lemma}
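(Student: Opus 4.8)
The plan is to reduce the characterization of arbitrary elements of $L_\infty(\Me,\varphi)$ to the positive case already settled in Lemma~\ref{lemma:linftyplus}, applied in the matrix algebra $\Me_2 = M_2(\Me)$. The key observation is that $k_2$ is a self-adjoint element of $L_1(\Me_2)$, and that $L_\infty(\Me_2,\varphi_2)$ for the state $\varphi_2$ with density $h_2$ (i.e.\ $\varphi_2 = \frac12(\varphi\oplus\varphi)$ up to normalization) consists exactly of elements $h_2^{1/2} X h_2^{1/2}$ with $X \in \Me_2$; I would first check that under the identifications $L_1(\Me_2)\simeq M_2(L_1(\Me))$, an off-diagonal matrix $k_2$ as displayed lies in $L_\infty(\Me_2,\varphi_2)$ if and only if $k \in L_\infty(\Me,\varphi)$, with matching norms. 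This is a direct block computation: $h_2^{1/2} X h_2^{1/2}$ has $(1,2)$-entry $h_\varphi^{1/2} X_{12} h_\varphi^{1/2} = h_{X_{12}}$, so requiring the matrix to equal $k_2$ forces $X_{11}=X_{22}=0$, $X_{21}=X_{12}^*$, and $h_{X_{12}} = k$; and $\|k_2\|_{\infty,\varphi_2} = \|X\|_{M_2(\Me)}$, where for an off-diagonal $2\times2$ operator matrix $\begin{pmatrix} 0 & a \\ a^* & 0\end{pmatrix}$ one has operator norm equal to $\|a\|$.

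With that bookkeeping in place, the lemma follows by applying Lemma~\ref{lemma:linftyplus} verbatim inside $\Me_2$: since $k_2 \in L_1(\Me_2)^+$? — no, $k_2$ is self-adjoint but not positive. So the correct route is slightly different. I would instead consider $k_2 + \lambda h_2$ for large $\lambda$, or better, observe that $k_2 \le \lambda h_2$ is equivalent to $h_2^{-1/2}$-type bounds; the cleanest argument is: $k_2 \le \lambda h_2$ for some $\lambda>0$ \emph{together with} the automatic lower bound is not guaranteed, so one must be a little careful. The honest approach: note that $k_2 \le \lambda h_2$ iff $\lambda h_2 - k_2 \ge 0$, and separately one always has, for $k \in L_\infty(\Me,\varphi)$ with $k = h_x$, that $-\|x\| h_\varphi \le$ (real part considerations) — here the $2\times 2$ trick is precisely designed so that self-adjointness of $k_2$ combined with the off-diagonal structure makes $k_2 \le \lambda h_2$ equivalent to $-\lambda h_2 \le k_2 \le \lambda h_2$, i.e.\ to $\|k_2\|_{\infty,\varphi_2}\le\lambda$ via the positive-cone characterization applied to both $\lambda h_2 \pm k_2$. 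Concretely: $k_2 \le \lambda h_2$ means $\lambda h_2 - k_2 \in L_1(\Me_2)^+$ and $\lambda h_2 - k_2 \le 2\lambda h_2$; by Lemma~\ref{lemma:linftyplus} this holds iff $\lambda h_2 - k_2 = h_2^{1/2} Y h_2^{1/2}$ with $0 \le Y \le 2\lambda$, hence $k_2 = h_2^{1/2}(\lambda - Y)h_2^{1/2} = h_2^{1/2} X h_2^{1/2}$ with $X = \lambda\mathbf 1 - Y$ self-adjoint, $\|X\|\le\lambda$. Conversely if $k_2 = h_2^{1/2} X h_2^{1/2}$ with $\|X\| \le \lambda$ then $\lambda h_2 - k_2 = h_2^{1/2}(\lambda\mathbf 1 - X) h_2^{1/2} \ge 0$.

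Combining, $k_2 \le \lambda h_2$ for some $\lambda>0$ iff $k_2 \in L_\infty(\Me_2,\varphi_2)$ iff $k \in L_\infty(\Me,\varphi)$, and the infimum of such $\lambda$ equals $\|k_2\|_{\infty,\varphi_2} = \|X\|_{M_2(\Me)} = \|X_{12}\| = \|x\| = \|k\|_{\infty,\varphi}$, which is the claimed formula. I expect the main obstacle to be purely notational: carefully matching the Haagerup $L_1(\Me_2) \simeq M_2(L_1(\Me))$ identification with the operator-space structure of $M_2(\Me)$ and verifying that the embedding $x \mapsto h_x$ for $\Me_2$ restricted to off-diagonal matrices reproduces the one for $\Me$, including the constant coming from $\varphi_2$ being a (possibly unnormalized) direct sum. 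Once those identifications are pinned down, the inequality manipulations above are routine and the proof is short.
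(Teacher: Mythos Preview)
Your approach is essentially the paper's: reduce to Lemma~\ref{lemma:linftyplus} in $\Me_2$ by establishing $0 \le \lambda h_2 - k_2 \le 2\lambda h_2$, apply it to obtain $k_2 = h_2^{1/2} X h_2^{1/2}$ with $\|X\|\le\lambda$, and read off the block entries. The one step you assert but do not justify is the implication $k_2 \le \lambda h_2 \Rightarrow -k_2 \le \lambda h_2$ (equivalently $\lambda h_2 - k_2 \le 2\lambda h_2$); you say the ``off-diagonal structure'' makes this automatic, but the mechanism needs to be named: conjugation by $u=\left(\begin{smallmatrix}1&0\\0&-1\end{smallmatrix}\right)\in\Me_2$ sends $k_2\mapsto -k_2$ and fixes $h_2$, so positivity of $\lambda h_2 - k_2$ transfers to $\lambda h_2 + k_2$. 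With that one line inserted, your argument is complete and coincides with the paper's.
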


\begin{proof} Let  $k=h_x$ and let $\lambda\in \mathbb R$.  Note that $\lambda h_2-k_2=
h_2^{1/2}x_\lambda h_2^{1/2}$, where 
\[
x_\lambda:=\left(\begin{array}{cc} \lambda & -x\\ -x^* & \lambda \end{array}\right),
\]
and that $\|x\|=\|-x\|=\inf\{\lambda>0,\  x_\lambda\ge 0\}$.
Hence $k_2\le \lambda h_2$ for any  $\lambda\ge \|x\|$.  It is also 
clear that $\|x\|$ is the smallest $\lambda$ such that this inequality holds. 

Conversely, assume that $k_2\le \lambda h_2$ for some $\lambda>0$, which is equivalent to $\Tr (\lambda h_2-k_2)a\ge 0$ for any $a\in \Me_2^+$, \cite[Thm. 33]{terp1981lpspaces}. Let $a=\left(\begin{array}{cc} a_{11} & a_{12}\\ a_{12}^* & a_{22}\end{array}\right)\in \Me_2^+$, then also 
\[
a_-:= \left(\begin{array}{cc} a_{11} & -a_{12}\\ -a_{12}^* & a_{22}\end{array}\right)
=\left(\begin{array}{cc} 1 &0  \\ 0 & -1\end{array}\right)a\left(\begin{array}{cc} 1 &0 \\  0 & -1\end{array}\right) \in \Me_2^+
\]
and note that $\Tr k_2a_-=- \Tr k_2a$, $\Tr h_2 a_-=\Tr h_2 a$.
 It follows that we have $\pm k_2\le \lambda h_2$, 
 so that $0\le k_2+\lambda h_2\le 2\lambda h_2$. 
Since $h_2$ defines a faithful positive normal linear functional on $\Me_2$, Lemma 
\ref{lemma:linftyplus} applies, so that there is some $y=\left(\begin{array}{cc} 
y_{11} & x \\ x^* &y_{22} \end{array}\right)\in \Me_2^+$ such that 
\[
\left(\begin{array}{cc} \lambda h_\varphi & k\\
     k^* & \lambda h_\varphi \end{array}\right)=
k_2+\lambda h_2=h_2^{1/2}yh_2^{1/2}=\left(\begin{array}{cc} h_\varphi^{1/2} y_{11} h_\varphi^{1/2} & h_\varphi^{1/2} x h_\varphi^{1/2}\\
h_\varphi^{1/2} x^* h_\varphi^{1/2} & h_\varphi^{1/2} y_{22} h_\varphi^{1/2}\end{array} \right).
\]
  It follows that   $y_{11}=y_{22}=\lambda$ and $k=h_x$. Moreover, since $y=\left(\begin{array}{cc} \lambda & x\\ x^* & \lambda \end{array}\right)$ is positive, $\|x\|\le \lambda$.

\end{proof}

\subsection{The interpolation spaces $L_p(\Me,\varphi)$}

We now define the $L_p$-space over $\Me$ with respect to $\varphi$ as
\[
L_p(\Me,\varphi):=C_{1/p}(L_\infty(\Me, \varphi),L_1(\Me)).
\]
For definition of the space $C_\theta$ see Appendix \ref{sec:interp}.

The norm in $L_p(\Me, \varphi)$ will be denoted by 
$\|\cdot\|_{p,\varphi}$. For $1\le p\le \infty$ and $1/q+1/p=1$, put
\[
i_p: L_p(\Me)\to L_1(\Me),\qquad k\mapsto h_\varphi^{1/2q}kh_\varphi^{1/2q}.
\]
\begin{thm}[{\cite[Theorem 9.1]{kosaki1984applications}}]\label{thm:lpfi} The map $i_p$ is an isometric isomorphism of $L_p(\Me)$ onto $L_p(\Me,\varphi)$.
\end{thm}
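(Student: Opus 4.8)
\emph{Setup.} The plan is to realize $i_p$ as (the restriction of) a single analytic family of maps acting between measurable operators and to apply complex interpolation of operators in both directions. The endpoint cases are immediate: for $p=\infty$ we have $q=1$, so $i_\infty(x)=h_\varphi^{1/2}xh_\varphi^{1/2}=h_x$ is exactly the embedding defining $L_\infty(\Me,\varphi)$, with $\|i_\infty(x)\|_{\infty,\varphi}=\|x\|$ by definition; and for $p=1$ we have $q=\infty$, so $i_1$ is the identity of $L_1(\Me)$. Fix therefore $1<p<\infty$ with $1/p+1/q=1$. It suffices to prove: (i) $\|i_p(k)\|_{p,\varphi}\le\|k\|_p$ for all $k\in L_p(\Me)$; (ii) $i_p$ is injective; (iii) $\|i_p^{-1}(a)\|_p\le\|a\|_{p,\varphi}$ for every $a$ in the range of $i_p$.

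\emph{Step 1: $i_p$ is a contraction.} Take $k\in L_p(\Me)$ with $\|k\|_p=1$ and let $k=u|k|$ be its polar decomposition, so that $u\in\Me$ is a partial isometry, $|k|\in L_p(\Me)^+$, and $\Tr|k|^p=1$. Define, for $z$ in the closed strip $\{0\le\Re z\le1\}$,
\[
F(z):=h_\varphi^{(1-z)/2}\,u\,|k|^{pz}\,h_\varphi^{(1-z)/2}.
\]
Then $F(1/p)=h_\varphi^{1/2q}\,u|k|\,h_\varphi^{1/2q}=h_\varphi^{1/2q}kh_\varphi^{1/2q}=i_p(k)$. On the line $z=it$, using that $h_\varphi^{-it/2}$ is a unitary of $\Me$ and $|k|^{pit}$ a contraction, $F(it)=h_\varphi^{1/2}(h_\varphi^{-it/2}u|k|^{pit}h_\varphi^{-it/2})h_\varphi^{1/2}=h_{x_t}$ with $x_t\in\Me$, $\|x_t\|\le1$, so $F(it)\in L_\infty(\Me,\varphi)$ and $\|F(it)\|_{\infty,\varphi}\le1$. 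On the line $z=1+it$, $F(1+it)=h_\varphi^{-it/2}\,u\,|k|^p\,|k|^{pit}\,h_\varphi^{-it/2}\in L_1(\Me)$, and since conjugation by the unitary $h_\varphi^{-it/2}$ is isometric on $L_1(\Me)$ and $\Tr|k|^p=1$, we get $\|F(1+it)\|_1\le1$. For $\Re z=\theta\in(0,1)$ the generalized Hölder inequality for Haagerup $L_p$-spaces gives $F(z)\in L_1(\Me)$ with $\|F(z)\|_1\le1$ (the exponents add as $\tfrac{1-\theta}{2}+0+\theta+\tfrac{1-\theta}{2}=1$), and $z\mapsto F(z)$ is holomorphic on the open strip and continuous and bounded on its closure in the relevant sense. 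Thus $F$ is an admissible function for the couple $(L_\infty(\Me,\varphi),L_1(\Me))$ with norm $\le1$, and by the definition of the Calderón space $\|i_p(k)\|_{p,\varphi}=\|F(1/p)\|_{C_{1/p}}\le1=\|k\|_p$.

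\emph{Step 2: injectivity, reverse estimate, surjectivity.} Since $\varphi$ is faithful, $h_\varphi$ has full support among the measurable operators, so $k\mapsto h_\varphi^{1/2q}kh_\varphi^{1/2q}$ is injective, which is (ii). Moreover the range of $i_p$ contains every $h_x$, since $h_\varphi^{1/2p}xh_\varphi^{1/2p}\in L_p(\Me)$ and $i_p(h_\varphi^{1/2p}xh_\varphi^{1/2p})=h_x$; hence the range of $i_p$ contains $L_\infty(\Me,\varphi)$, which is dense in $L_p(\Me,\varphi)$. For the reverse estimate I would repeat Step 1 with the inverse family. Regard $T_z(a):=h_\varphi^{(1-z)/2}ah_\varphi^{(1-z)/2}$ and $T_z^{-1}(b):=h_\varphi^{-(1-z)/2}bh_\varphi^{-(1-z)/2}$ as an analytic family acting on the measurable operators; exactly as above, $T_z^{-1}$ maps the line $\Re z=0$ isometrically into $L_\infty(\Me)=\Me$ (indeed $T_{it}^{-1}(h_x)=h_\varphi^{it/2}xh_\varphi^{it/2}\in\Me$ with norm $\|x\|$) and the line $\Re z=1$ isometrically onto $L_1(\Me)$. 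Complex interpolation of analytic operator families then shows that $T_{1/p}^{-1}=i_p^{-1}$ is a contraction from $L_p(\Me,\varphi)=C_{1/p}(L_\infty(\Me,\varphi),L_1(\Me))$ into $C_{1/p}(L_\infty(\Me),L_1(\Me))$, which by the basic interpolation property of the Haagerup spaces coincides isometrically with $L_p(\Me)$ (as subspaces of the measurable operators). This gives (iii); combined with Step 1 it shows that $i_p$ is an isometry of $L_p(\Me)$ onto its range, which is closed and dense, hence onto $L_p(\Me,\varphi)$. Alternatively, one can avoid invoking the interpolation property of the Haagerup spaces and deduce (iii) from the duality between $L_p(\Me)$ and $L_q(\Me)$ together with the duality theorem for complex interpolation spaces.

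\emph{Main obstacle.} The substance of the argument is the rigorous handling of the analytic operator families. One must check that $z\mapsto h_\varphi^{z}$ and $z\mapsto|k|^{z}$ extend to analytic families of (in general unbounded) operators with the stated mapping properties, that the triple products $h_\varphi^{(1-z)/2}u|k|^{pz}h_\varphi^{(1-z)/2}$ are genuine elements of $L_1(\Me)$ — the factors being neither bounded nor closable a priori — that depend holomorphically on $z$, and that everything is continuous and bounded up to the two boundary lines, so that the three-lines estimate built into the definition of $C_{1/p}$ (respectively the Stein--Weiss interpolation theorem for operators) applies. The ingredients that make this work are the generalized Hölder inequality for the Haagerup $L_p$-spaces, the density of $\sigma^\varphi$-analytic elements, and, crucially, the fact that $h_\varphi^{it}$ ($t\in\mathbb R$) is a unitary of $\Me$ implementing the modular automorphism group, so that conjugation by it is an isometry of every $L_r(\Me)$; this is exactly what collapses the boundary estimates to the clean bounds $\le1$ used above.
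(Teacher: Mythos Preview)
The paper does not give a proof of this theorem; it is quoted from Kosaki with a citation. Your Step~1 is indeed the heart of Kosaki's argument, and the analytic family $F(z)=h_\varphi^{(1-z)/2}u|k|^{pz}h_\varphi^{(1-z)/2}$ is exactly the function the paper later calls $f_{h,p}$ (equation defining $f_{h,p}$). So the contractivity direction is on the right track.

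There is, however, a recurring false claim that undermines both steps: $h_\varphi^{it}$ is \emph{not} a unitary in $\Me$. In the Haagerup picture $h_\varphi^{it}=\lambda(t)$ is a unitary in the crossed product $\mathcal R=\Me\rtimes_{\sigma^\varphi}\mathbb R$, and only \emph{conjugation} $a\mapsto h_\varphi^{it}ah_\varphi^{-it}=\sigma^\varphi_t(a)$ preserves $\Me$. Your expressions are not conjugations: they have the same sign on both sides. In Step~1 the conclusion $x_t\in\Me$ is nevertheless true, but for a different reason: writing $|k|^p=h_\omega$ one has $|k|^{pit}=(D\omega:D\varphi)_t\,h_\varphi^{it}$ with the Connes cocycle $(D\omega:D\varphi)_t\in\Me$, whence $x_t=\sigma^\varphi_{-t/2}\bigl(u\,(D\omega:D\varphi)_t\bigr)\in\Me$ with $\|x_t\|\le 1$. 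You should supply this argument.

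In Step~2 the error is fatal. Your computation gives $T_{it}^{-1}(h_x)=h_\varphi^{it/2}xh_\varphi^{it/2}=\sigma^\varphi_{t/2}(x)\,h_\varphi^{it}$, which lies in $\Me\cdot\lambda(t)$ and \emph{not} in $\Me$ for $t\ne 0$; so the boundary estimate at $\Re z=0$ fails and the Stein-type interpolation does not go through. Independently, the appeal to ``the basic interpolation property of the Haagerup spaces'' --- i.e.\ that $C_{1/p}(\Me,L_1(\Me))\cong L_p(\Me)$ for some unspecified embedding --- is circular: identifying the Calder\'on interpolation space with the Haagerup $L_p$-space is precisely the content of the theorem you are proving. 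The alternative you mention in your last sentence is in fact what Kosaki does: one uses Calder\'on's duality theorem $C_\theta(X_0,X_1)^*\cong C_\theta(X_1^*,X_0^*)$ (valid here since $L_\infty(\Me,\varphi)$ is dense in $L_1(\Me)$) together with the Haagerup duality $L_p(\Me)^*\cong L_q(\Me)$ to turn the contractivity of $i_q$ into the reverse inequality for $i_p$. That argument needs to be written out; as it stands, surjectivity and the isometry are not established.
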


Using the polar decomposition  in $L_p(\Me)$ (\cite[Proposition 12]{terp1981lpspaces}), we obtain that elements in 
$L_p(\Me,\varphi)$ have the form $h_\varphi^{1/2q}u h_\psi^{1/p} h_\varphi^{1/2q}$, where 
$\psi\in \Me_*^+$ and $u\in \Me$ is a partial isometry such that $u^*u=s(\psi)$
with norm 
\[
\|h^{1/2q}_\varphi uh_\psi^{1/p} h^{1/2q}_\varphi\|_{p,\varphi}=(\Tr h_\psi)^{1/p}=\psi(1)^{1/p}.
\]

\begin{ex}\label{ex:semifinite}
Assume that $\Me$ is semifinite and let $\tau$ be a faithful normal semifinite trace on $\Me$. By \cite[p. 62]{terp1981lpspaces}, $L_p(\Me)$ can be identified with 
the space $L_p(\tau)$ of closed densely defined operators $X$ affiliated with $\Me$, such that $\tau(|X|^p)<\infty$, with the norm $\|X\|_p=\tau(|X|^p)^{1/p}$.
There is an operator $\rho_\varphi\in L_1(\tau)$ such that 
\[
\varphi(x)=\tau(\rho_\varphi x), \qquad x\in \Me
\]
 and we can define the embedding $\Me\subseteq L_1(\tau)\equiv \Me_*$
 as $x\mapsto \rho_\varphi^{1/2}x\rho_\varphi^{1/2}$. The space $L_p(\Me,\varphi)$ can be identified with the subspace in $L_1(\tau)$ of elements of the form 
$X=\rho_\varphi^{1/2q}Y\rho_\varphi^{1/2q}$ with $Y\in L_p(\tau)$, and $\|X\|_{p,\varphi}=\|Y\|_p$. 
In particular, if $\Me$ is finite dimensional, then $L_p(\Me,\varphi)\equiv L_p(\Me)\equiv \Me$ as linear spaces and we have
\[
\|X\|_{p,\varphi}=\|\rho_\varphi^{-1/2q}X\rho_\varphi^{-1/2q}\|_p=(\Tr| \rho_\varphi^{\frac{1-p}{2p}}X\rho_\varphi^{\frac{1-p}{2p}}|^p)^{1/p}.
\]

\end{ex}

We now list some important properties of the spaces $L_p(\Me,\varphi)$.
Let $1\le p\le p'\le \infty$. Then $L_{p'}(\Me,\varphi)\subseteq L_p(\Me,\varphi)$ and 
 \begin{equation}\label{eq:embedding}
 \|k\|_{p,\varphi}\le \|k\|_{p',\varphi},\qquad \forall k\in L_{p'}(\Me,\varphi).
 \end{equation}
 This follows easily by Theorem \ref{thm:lpfi} and  H\"older's inequality, 
but it is also a consequence of the abstract theory of complex interpolation, see 
\cite[Theorem 4.2.1]{belo1976interpolation}. The space $L_\infty(\Me, \varphi)$ is dense in $L_1(\Me)$   and therefore also in $L_p(\Me,\varphi)$ for each $p>1$ by \cite[Theorem 4.2.2]{belo1976interpolation}. It follows that $L_{p'}(\Me,\varphi)$ and $L_p(\Me,\varphi)$ are compatible Banach spaces. By the reiteration theorem 
(\cite[Theorem 4.6.1]{belo1976interpolation}), we have 
\begin{equation}\label{eq:reiteration}
 C_\eta(L_{p'}(\Me,\varphi),L_p(\Me,\varphi))=L_{p_\eta}(\Me,\varphi),\qquad 0\le \eta\le 1,
\end{equation}
where $1/p_\eta= \eta/p+(1-\eta)/p'$.

Let now $1\le p\le \infty$, $1/p+1/q=1$. The duality
\[
\<k,h_x\>:= \Tr kx, \qquad x\in \Me,\ k\in L_1(\Me)
\]
extends to a duality between 
$L_p(\Me,\varphi)$ and $L_q(\Me,\varphi)$,  given by
\begin{equation}\label{eq:duality}
\<h_\varphi^{1/2q}k_1h_\varphi^{1/2q},h_\varphi^{1/2p}k_2h_\varphi^{1/2p}\>=\Tr k_1k_2,\quad  k_1\in L_p(\Me), k_2\in L_q(\Me).
\end{equation}
 For $1\le p<\infty$, $L_q(\Me,\varphi)$ is isometrically isomorphic to the Banach space dual of $L_p(\Me,\varphi)$. This follows immediately from Theorem \ref{thm:lpfi}
and duality of Haagerup $L_p$-spaces \cite[Thm. 32]{terp1981lpspaces}.

 For each $1\le p\le\infty$,  we have the following Clarkson type  inequalities. 

\begin{thm}[{\cite{kosaki1984applications},\cite[Thm. 5.1]{pixu2003noncommutative}}] \label{thm:clarkson} Let $h,k\in L_p(\Me,\varphi)$, $1\le p\le \infty$, $1/p+1/q=1$. 
 For $2\le p\le\infty$ we have 
\[
\left[ \frac12(\|h+k\|_{p,\varphi}^p+\|h-k\|_{p,\varphi}^p\right]^{1/p}\le \left(\|h\|_{p,\varphi}^q+\|k\|_{p,\varphi}^q\right)^{1/q}.
\]
For $1\le p\le 2$ the inequality reverses. 
\end{thm}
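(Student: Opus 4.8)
The plan is to reduce both inequalities to a single norm bound for the \emph{sum-and-difference} operator $S\colon(h,k)\mapsto(h+k,h-k)$, viewed as a map between $\ell_r^2$-direct sums of the Kosaki spaces, and to prove that bound by complex interpolation of the scale $\{L_p(\Me,\varphi)\}$. Here, for $1\le r\le\infty$, $\ell_r^2(X)$ denotes $X\oplus X$ with norm $\|(h,k)\|=(\|h\|_X^r+\|k\|_X^r)^{1/r}$. One remark up front: by Theorem~\ref{thm:lpfi} the isometric isomorphism $i_p$ intertwines addition and subtraction, so the inequalities for $L_p(\Me,\varphi)$ are equivalent to the same inequalities in Haagerup's $L_p(\Me)$, where they are the standard noncommutative Clarkson inequalities; hence one could simply invoke \cite[Thm.~5.1]{pixu2003noncommutative} (cf.\ \cite{kosaki1984applications}). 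I describe the interpolation proof instead, since it lives entirely within the present framework.

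For $h,k\in L_p(\Me,\varphi)$ one has $\|S(h,k)\|_{\ell_p^2(L_p(\Me,\varphi))}^p=\|h+k\|_{p,\varphi}^p+\|h-k\|_{p,\varphi}^p$ and $\|(h,k)\|_{\ell_q^2(L_p(\Me,\varphi))}^q=\|h\|_{p,\varphi}^q+\|k\|_{p,\varphi}^q$. Dividing the asserted inequality by $2^{1/p}$, the case $2\le p\le\infty$ becomes exactly the estimate $\|S\|_{\ell_q^2(L_p(\Me,\varphi))\to\ell_p^2(L_p(\Me,\varphi))}\le 2^{1/p}$. For $1\le p\le 2$, performing the bijective change of variables $u=h+k$, $v=h-k$, the reverse inequality is equivalent to $\|S\|_{\ell_p^2(L_p(\Me,\varphi))\to\ell_q^2(L_p(\Me,\varphi))}\le 2^{1/q}$.

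Next I record the endpoint estimates. At $p=2$ (so $q=2$): by Theorem~\ref{thm:lpfi} the space $L_2(\Me,\varphi)$ is isometric to the Hilbert space $L_2(\Me)$, so the parallelogram law gives $\|h+k\|_{2,\varphi}^2+\|h-k\|_{2,\varphi}^2=2\|h\|_{2,\varphi}^2+2\|k\|_{2,\varphi}^2$, i.e.\ $S$ acts on $\ell_2^2(L_2(\Me,\varphi))$ with norm exactly $\sqrt2$; the sharpness of the constant comes from this equality. At $p=\infty$, the triangle inequality in $L_\infty(\Me,\varphi)$ gives $\max(\|h+k\|_{\infty,\varphi},\|h-k\|_{\infty,\varphi})\le\|h\|_{\infty,\varphi}+\|k\|_{\infty,\varphi}$, so $\|S\|_{\ell_1^2(L_\infty(\Me,\varphi))\to\ell_\infty^2(L_\infty(\Me,\varphi))}\le 1$; likewise at $p=1$ (where $L_1(\Me,\varphi)=L_1(\Me)$) one gets $\|S\|_{\ell_1^2(L_1(\Me))\to\ell_\infty^2(L_1(\Me))}\le 1$. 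Now I interpolate $S$ using the reiteration identity \eqref{eq:reiteration}, which realizes $L_p(\Me,\varphi)$ as an intermediate space of the couple $(L_2(\Me,\varphi),L_\infty(\Me,\varphi))$ for $2\le p\le\infty$ and of $(L_2(\Me,\varphi),L_1(\Me))$ for $1\le p\le 2$, together with the vector-valued interpolation formula $C_\theta(\ell_{r_0}^2(X_0),\ell_{r_1}^2(X_1))=\ell_{r_\theta}^2(C_\theta(X_0,X_1))$ with $1/r_\theta=(1-\theta)/r_0+\theta/r_1$ (see \cite{belo1976interpolation}). At the parameter value that produces $L_p(\Me,\varphi)$, the interpolated norm of $S$ is at most $(\sqrt2)^{\theta}$, and a short bookkeeping of the interpolation exponents (matching both the $L$-exponent and the $\ell_r^2$-exponent against $1/p$ and $1/q$) turns this into $2^{1/p}$ in the first case and $2^{1/q}$ in the second. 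Reversing the reductions recorded above yields the two stated inequalities.

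The only step that needs genuine care is the simultaneous interpolation of the $\ell_r^2$-exponent and the $L_p$-exponent, i.e.\ the identity $C_\theta\bigl(\ell_{r_0}^2(L_{p_0}(\Me,\varphi)),\ell_{r_1}^2(L_{p_1}(\Me,\varphi))\bigr)=\ell_{r_\theta}^2(L_{p_\theta}(\Me,\varphi))$, which follows by combining the vector-valued complex interpolation theorem with the reiteration property \eqref{eq:reiteration} (the couples being compatible, as noted in the excerpt). The endpoint computations and the exponent arithmetic are then routine.
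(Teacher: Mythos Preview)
The paper does not prove Theorem~\ref{thm:clarkson}; it is stated with a citation to \cite{kosaki1984applications} and \cite[Thm.~5.1]{pixu2003noncommutative}. Your opening remark---transport the inequality to Haagerup's $L_p(\Me)$ via the isometry $i_p$ of Theorem~\ref{thm:lpfi} and invoke the known noncommutative Clarkson inequalities---is exactly what the paper does, so at that point you are already done in the paper's sense.

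Your additional interpolation argument is correct and is in fact the classical route to Clarkson-type inequalities. The reduction of both cases to a single operator bound for $S:(h,k)\mapsto(h+k,h-k)$ between $\ell_r^2$-sums is standard; the endpoint estimates at $p=2$ (parallelogram law) and $p=1,\infty$ (triangle inequality) are immediate; and the simultaneous interpolation of the outer $\ell_r^2$-exponent and the inner $L_p$-exponent follows from the finite-dimensional vector-valued complex interpolation formula (e.g.\ \cite[Thm.~5.1.2]{belo1976interpolation}) combined with the reiteration identity~\eqref{eq:reiteration}. The exponent bookkeeping you defer does work out: with the paper's convention $C_\eta(L_{p'}(\Me,\varphi),L_p(\Me,\varphi))$ and the endpoint pair $(L_\infty,L_2)$ one gets $\eta=2/p$ and bound $(\sqrt2)^{\eta}=2^{1/p}$; with the pair $(L_2,L_1)$ one gets $1-\eta=2/q$ and bound $(\sqrt2)^{1-\eta}=2^{1/q}$. (Your blanket ``$(\sqrt2)^\theta$'' is literally correct only in the first case, but as you say the bookkeeping fixes this.) So compared to the paper you are supplying a genuine proof rather than a citation; what you gain is self-containment within the interpolation framework of Section~\ref{sec:nclp}, at the cost of invoking the vector-valued interpolation theorem from \cite{belo1976interpolation}.
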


This implies that for $1<p<\infty$ the space $L_p(\Me,\varphi)$ is uniformly convex and uniformly smooth. 
We also have:   

\begin{thm}[{\cite[Thm 5.3]{pixu2003noncommutative}}] \label{thm:pixu} Let $h,k\in L_p(\Me,\varphi)$, $1< p\le 2$, then
\[
\left( \|h\|_p^2+(p-1)\|k\|_p^2\right)^{1/2}\le \left[\frac12( \|h+k\|_p^p+\|h-k\|_p^p)\right]^{1/p}.
\]
For $2< p<\infty$ the inequality reverses.
\end{thm}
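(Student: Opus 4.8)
The plan is to reduce the statement to the corresponding inequality in the Haagerup space $L_p(\Me)$ and then invoke the scalar two-point inequality that underlies it. By Theorem \ref{thm:lpfi}, the map $i_p\colon L_p(\Me)\to L_p(\Me,\varphi)$ is an isometric isomorphism, and it is plainly linear, so $i_p(\tilde h\pm\tilde k)=i_p(\tilde h)\pm i_p(\tilde k)$ for all $\tilde h,\tilde k\in L_p(\Me)$. Hence, given $h,k\in L_p(\Me,\varphi)$, I would set $\tilde h:=i_p^{-1}(h)$ and $\tilde k:=i_p^{-1}(k)$ in $L_p(\Me)$; then $\|h\|_{p,\varphi}=\|\tilde h\|_p$, $\|k\|_{p,\varphi}=\|\tilde k\|_p$ and $\|h\pm k\|_{p,\varphi}=\|\tilde h\pm\tilde k\|_p$, so that the asserted inequality for $(h,k)$ is \emph{verbatim} the same inequality for $(\tilde h,\tilde k)$ in $L_p(\Me)$. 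Thus it suffices to prove the inequality in the Haagerup $L_p$-space, which is exactly \cite[Thm.~5.3]{pixu2003noncommutative}.

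For the Haagerup-space inequality itself I would treat the range $1<p\le 2$ first. The engine is the elementary scalar inequality
\[
\bigl(s^2+(p-1)t^2\bigr)^{p/2}\le \tfrac12\bigl(|s+t|^p+|s-t|^p\bigr),\qquad s,t\in\mathbb R,\ 1<p\le 2,
\]
proved by homogeneity (normalizing $s=1$) followed by a one-variable convexity/Taylor argument, and reducing to the parallelogram identity at $p=2$. The real content of Pisier--Xu's theorem is that this scalar inequality survives the passage to trace norms on a noncommutative $L_p$-space. Following their approach, I would first establish it for matrix algebras $M_n(\mathbb C)$ and then for semifinite $\Me$, using the noncommutative Clarkson inequalities (Theorem \ref{thm:clarkson}) together with a duality/interpolation argument, and finally transfer it to the general Haagerup $L_p(\Me)$ by the standard reduction of the type-III case to the semifinite crossed product. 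The case $2<p<\infty$ (reversed inequality) I would then deduce from the case $1<q\le 2$ with $1/p+1/q=1$, via the usual duality between $p$-uniform convexity and $q$-uniform smoothness, using the duality \eqref{eq:duality} between $L_p(\Me,\varphi)$ and $L_q(\Me,\varphi)$ (equivalently between $L_p(\Me)$ and $L_q(\Me)$).

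The main obstacle is precisely the noncommutative upgrade of the scalar inequality: it cannot be applied pointwise to operators, and making it work requires the full apparatus of noncommutative $L_p$-space theory. Within the framework of the present paper, however, this difficulty is entirely absorbed into the cited result \cite[Thm.~5.3]{pixu2003noncommutative}, and the only thing one needs to verify here is the routine transport of the inequality along the isometric isomorphism $i_p$ of Theorem \ref{thm:lpfi}.
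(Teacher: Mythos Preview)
Your proposal is correct and matches the paper's treatment: the paper does not give a proof but simply cites \cite[Thm.~5.3]{pixu2003noncommutative}, relying implicitly on the isometric isomorphism $i_p$ of Theorem~\ref{thm:lpfi} to transport the inequality from $L_p(\Me)$ to $L_p(\Me,\varphi)$. Your additional sketch of how the Pisier--Xu inequality itself is proved goes beyond what the paper provides, but the reduction step you identify is exactly the (trivial) content needed here.
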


The space  $L_p(\Me,\varphi)$ is strictly convex, hence for each $0\ne h\in L_p(\Me,\varphi)$, there is a unique element $T_{q,\varphi}(h)$ in the unit ball of $L_q(\Me,\varphi)$
 such that \[
\<T_{q,\varphi}(h),h\>=\|h\|_{p,\varphi}.
\]
Let $h=h_\varphi^{1/2q}uk^{1/p}h_\varphi^{1/2q}$ for some $k\in L_1(\Me)^+$ and a 
partial isometry $u\in \Me$ such that $u^*u=s(k)$.  Then by \eqref{eq:duality} we  have 
\begin{equation}\label{eq:dual}
T_{q,\varphi}(h)=\|h\|_{p,\varphi}^{1-p}h_\varphi^{1/2p}k^{1/q}u^*h_\varphi^{1/2p}.
\end{equation}
Restricted to the unit sphere of $L_p(\Me,\varphi)$, the map $T_{q,\varphi}$ is a uniformly continuous bijection onto the unit sphere of $L_q(\Me,\varphi)$ \cite{diestel1975geometry} and we have 
$T_{q,\varphi}^{-1}=T_{p,\varphi}$ for this restriction.

\subsection{Hadamard three lines theorem}\label{sec:hadamard}

We first note that  the infimum in the definition of the interpolation norm $\|\cdot\|_{p,\varphi}$   is attained, see \eqref{eq:theta}.
Let $h\in L_p(\Me, \varphi)$ be of the form $h=h_\varphi^{1/2q}kh_\varphi^{1/2q}$ for some $k\in L_p(\Me)$ and let 
$k=ul^{1/p}$ be the polar decomposition of $k$. Let $S\subset \mathbb C$ be the strip $S=\{z\in \mathbb C,\ 0\le Re(z)\le 1\}$ and put
\begin{equation}\label{eq:fkp}
f_{h,p}(z):=\|l\|_1^{1/p-z}h_\varphi^{(1-z)/2}ul^{z}h_{\varphi}^{(1-z)/2},\qquad z\in S.
\end{equation}
Then $f_{h,p}\in \Fe:=\Fe(L_\infty(\Me,\varphi), L_1(\Me))$, $f_{h,p}(1/p)=h$  and we have $\|h\|_{p,\varphi}=\vertiii{f_{h,p}}_\Fe$, cf. \cite[proof of Theorem 9.1]{kosaki1984applications}, see Appendix \ref{sec:interp} for the necessary definitions.

\begin{lemma}\label{lemma:equal} Let $f\in \Fe$ and assume that  $\|f(\theta)\|_{1/\theta,\varphi}=\vertiii{f}_\Fe$ for some $\theta\in(0,1)$. Then 
\[
\|f(x+it)\|_{1/x,\varphi}=\vertiii{f}_\Fe,\qquad \forall x\in [0,1],\ t\in \mathbb R.
\]

\end{lemma}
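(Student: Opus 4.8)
The statement to prove is a rigidity result for the Hadamard three-lines theorem: in the interpolation space setting, if the maximum of $\|f(z)\|_{1/\mathrm{Re}\,z,\varphi}$ over the strip is attained at some interior point on the line $\mathrm{Re}\,z = \theta$, then it is attained everywhere on the strip. The natural strategy is to reduce to the scalar Hadamard three-lines theorem by applying a suitable linear functional. First I would recall that the interpolation norm of $f$ on the boundary lines is controlled: $\|f(it)\|_{\infty,\varphi} \le \vertiii{f}_\Fe$ and $\|f(1+it)\|_{1,\varphi} \le \vertiii{f}_\Fe$ for all $t$, and the definition of $\vertiii{\cdot}_\Fe$ gives $\|f(z)\|_{1/\mathrm{Re}\,z,\varphi} \le \vertiii{f}_\Fe$ throughout $S$ (this is essentially the content of the complex interpolation method as recalled in Appendix~\ref{sec:interp}, together with \eqref{eq:embedding}).

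The key step is to pick, for the point $h := f(\theta)$ with $\|h\|_{1/\theta,\varphi} = \vertiii{f}_\Fe$, the dual functional $T_{q,\varphi}(h) \in L_q(\Me,\varphi)$ from \eqref{eq:dual}, where $q = 1/(1-\theta)$, normalized so that $\langle T_{q,\varphi}(h), h\rangle = \|h\|_{1/\theta,\varphi}$ and $\|T_{q,\varphi}(h)\|_{q,\varphi} \le 1$. Then I would form the scalar holomorphic function $g(z) := \langle T_{q,\varphi}(h), f(z)\rangle$ on $S$. Using the duality \eqref{eq:duality} between $L_p(\Me,\varphi)$ and $L_q(\Me,\varphi)$ together with the embedding inequality \eqref{eq:embedding} (noting $L_{1/\mathrm{Re}\,z}(\Me,\varphi) \subseteq L_{1/\theta}(\Me,\varphi)$ need not hold in the right direction — one must instead use that $T_{q,\varphi}(h) \in L_q(\Me,\varphi) \subseteq L_{q(x)}(\Me,\varphi)$ whenever $q(x) \le q$, i.e. $\mathrm{Re}\,z = x \ge \theta$, and handle $x \le \theta$ by noting the dual element lies in all the larger $L$-spaces as well since $q(x) = 1/(1-x)$ is monotone), I get $|g(z)| \le \|T_{q,\varphi}(h)\|_{q(x),\varphi}\,\|f(z)\|_{1/x,\varphi} \le \vertiii{f}_\Fe$ on the appropriate lines, and $g$ is bounded and continuous on $\bar S$, holomorphic in the interior. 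Since $g(\theta) = \|h\|_{1/\theta,\varphi} = \vertiii{f}_\Fe = \sup_{\bar S}|g|$, the scalar Hadamard three-lines theorem (strong form / maximum principle for the strip) forces $|g(z)| \equiv \vertiii{f}_\Fe$ on all of $\bar S$. In particular $|g(x+it)| = \vertiii{f}_\Fe$, and since $|g(x+it)| \le \|T_{q,\varphi}(h)\|_{q(x),\varphi}\|f(x+it)\|_{1/x,\varphi} \le \|f(x+it)\|_{1/x,\varphi} \le \vertiii{f}_\Fe$, all inequalities are equalities, giving $\|f(x+it)\|_{1/x,\varphi} = \vertiii{f}_\Fe$ as claimed.

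The main obstacle will be the bookkeeping with the variable exponent: the function $z \mapsto \|f(z)\|_{1/\mathrm{Re}\,z,\varphi}$ involves a norm whose index changes with $z$, so applying a single fixed dual functional $T_{q,\varphi}(h)$ requires checking that it pairs correctly against $f(z)$ in the right dual pairing for every $x = \mathrm{Re}\,z \in [0,1]$, not just $x = \theta$. The resolution is that $T_{q,\varphi}(h)$ as given by the explicit formula \eqref{eq:dual} actually lies in $L_\infty(\Me,\varphi)$ when $h$ does, or more generally in $L_{q'}(\Me,\varphi)$ for all $q' \ge$ the minimal admissible index, and the duality \eqref{eq:duality} is consistent across the scale; one also needs that $T_{q,\varphi}(h)$ has norm $\le 1$ in each of these spaces, which follows from \eqref{eq:embedding} and the fact that its $L_q(\Me,\varphi)$-norm is $\le 1$ — but \eqref{eq:embedding} goes the wrong way for $q' > q$, so in fact one should only invoke it for $x \le \theta$, and for $x \ge \theta$ use that the formula \eqref{eq:dual} exhibits $T_{q,\varphi}(h)$ concretely with $\|T_{q,\varphi}(h)\|_{q(x),\varphi} = \|h\|_{1/\theta,\varphi}^{1-p}(\Tr k)^{1/q(x)}$ which one checks is $\le 1$. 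A cleaner route, which I would adopt if the above gets unwieldy, is: replace $f$ by $\tilde f(z) := f(z)/\vertiii{f}_\Fe$ so that $\vertiii{\tilde f}_\Fe = 1$, note the hypothesis says $\tilde f$ achieves the norm of the interpolation space at an interior point, and invoke the general fact (maximum principle in the form used to prove the three-lines theorem itself, cf. \cite[§4.1]{belo1976interpolation}) that $\log \|\tilde f(z)\|_{1/\mathrm{Re}\,z,\varphi}$ is subharmonic-dominated and bounded by $0$, hence if it attains $0$ at an interior point it is identically $0$ — the scalarization via $T_{q,\varphi}$ is precisely the tool that makes "subharmonic-dominated" rigorous here.
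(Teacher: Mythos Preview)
Your overall strategy---scalarize by pairing with a dual element and apply the maximum modulus principle---is the same as the paper's, but the execution has a real gap that you yourself flag and do not resolve. Pairing $f(z)$ against the \emph{fixed} functional $T_{q,\varphi}(h)$ (with $q=1/(1-\theta)$) does not give $|g(z)|\le\vertiii{f}_\Fe$ on the whole strip: on the right boundary $x=1$ you would need $T_{q,\varphi}(h)\in L_\infty(\Me,\varphi)$ with norm $\le 1$, and more generally for $x>\theta$ you would need $\|T_{q,\varphi}(h)\|_{1/(1-x),\varphi}\le 1$. By \eqref{eq:embedding} the norms increase with the index, so the bound goes the wrong way, and your proposed explicit computation is incorrect: the element $T_{q,\varphi}(h)=\|h\|_{p,\varphi}^{1-p}h_\varphi^{1/2p}k^{1/q}u^*h_\varphi^{1/2p}$ has the $h_\varphi$-weights matched to the exponent $q$, not to $q(x)$, so it is generically not in $L_{1/(1-x)}(\Me,\varphi)$ at all for $x>\theta$, let alone with the norm you write. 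The vaguer ``subharmonic-dominated'' route you sketch at the end is not a proof as stated; making it rigorous still requires producing, for each $z$, a norming functional in the correct dual space.

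The paper fixes exactly this by letting the pairing element move with $z$: instead of the constant $T_{q,\varphi}(h)$, it uses $g(1-z)$ where $g:=f_{T_{q,\varphi}(h),q}$ is the optimal function \eqref{eq:fkp} for $T_{q,\varphi}(h)$. Then $K(z):=\langle g(1-z),f(z)\rangle$ is analytic, and since $g(1-z)\in L_{1/(1-x)}(\Me,\varphi)$ with $\|g(1-z)\|_{1/(1-x),\varphi}\le\vertiii{g}_\Fe=1$ for every $x\in[0,1]$, the duality bound $|K(z)|\le\|f(z)\|_{1/x,\varphi}\le\vertiii{f}_\Fe$ holds throughout $S$. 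Equality $K(\theta)=\vertiii{f}_\Fe$ then forces $K$ constant by the maximum modulus principle, and the chain of inequalities collapses. The missing idea in your attempt is precisely this use of the extremal interpolation function for the dual element, rather than the dual element itself.
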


\begin{proof} Let $p=1/\theta$, $q=1/(1-\theta)$. Put $h:=f(\theta)$, then $h\in L_p(\Me,\varphi)$ and $g:=  
f_{T_{q,\varphi}(h),q}$ is in $\Fe$. Let
\[
K(z):=\<g(1-z),f(z)\>,\qquad z\in S.
\]
Note that for $z=x+it$, $f(z)\in L_{1/x}(\varphi)$, $g(1-z)\in L_{1/(1-x)}(\varphi)$ and 
$\|f(z)\|_{1/x,\varphi}\le \vertiii{f}_\Fe$, $\|g(1-z)\|_{1/(1-x),\varphi}\le \vertiii{g}_\Fe=1$. It follows that $K$ is continuous on $S$, analytic in the interior and bounded by
\[
|K(x+it)|\le \|g(1-x-it)\|_{1/(1-x),\varphi}\|f(x+it)\|_{1/x,\varphi}\le \vertiii{f}_\Fe.
\]
Moreover, $K(\theta)=\|f(\theta)\|_{p,\varphi}=\vertiii{f}_\Fe$. By the maximum modulus principle, $K$ must be a constant, so that 
$K(z)=\vertiii{f}_\Fe$ for all $z\in S$. It follows that we must have $\|f(x+it)\|_{1/x,\varphi}= \vertiii{f}_\Fe$ for all $x$ and $t$.

\end{proof}

The next lemma shows that the infimum in the definition of the interpolation norm is attained also for the reiterated spaces.

\begin{lemma} \label{lemma:reiteration}
Let $1\le p\le p'\le \infty$ and let $\eta\in (0,1)$, $p_\eta= \eta/p+(1-\eta)/p'$. Let $h\in L_{p_\eta}(\Me,\varphi)$ and put 
$g(z)=f_{h,p_\eta}(z/p+(1-z)/p')$, $z\in S$. Then $g\in \Fe_{p',p}:=\Fe(L_{p'}(\Me,\varphi), L_p(\Me,\varphi))$, $g(\eta)=h$ and 
$\|h\|_{p_\eta,\varphi}=\vertiii{g}_{\Fe_{p',p}}$.
\end{lemma}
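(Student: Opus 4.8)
The plan is to realize $g$ as a linear reparametrization of the optimal function $f_{h,p_\eta}$ appearing in \eqref{eq:fkp}, and to read off its optimality from Lemma \ref{lemma:equal} together with the reiteration identity \eqref{eq:reiteration}. If $p=p'$ then $p_\eta=p$, $g$ is the constant function equal to $h$, and the claim is immediate, so assume $1\le p<p'\le\infty$; from $1/p_\eta=\eta/p+(1-\eta)/p'$ and $0<\eta<1$ one checks that $1<p_\eta<\infty$, hence $1/p_\eta\in(0,1)$. Set $\lambda(z):=z/p+(1-z)/p'=1/p'+z(1/p-1/p')$. Since $0\le 1/p'<1/p\le 1$, $\lambda$ is an entire bijection carrying $S$ onto the substrip $\{w:1/p'\le\Re w\le1/p\}\subseteq S$, the interior of $S$ into the interior of $S$, the line $\Re z=0$ onto $\Re w=1/p'$, the line $\Re z=1$ onto $\Re w=1/p$, and the point $\eta$ onto $\lambda(\eta)=\eta/p+(1-\eta)/p'=1/p_\eta$. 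By construction $g=f_{h,p_\eta}\circ\lambda$.

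Next I collect the properties of $f:=f_{h,p_\eta}$: by the remark preceding Lemma \ref{lemma:equal}, $f\in\Fe=\Fe(L_\infty(\Me,\varphi),L_1(\Me))$, $f(1/p_\eta)=h$, $\vertiii{f}_\Fe=\|h\|_{p_\eta,\varphi}$, and $f(x+it)\in L_{1/x}(\Me,\varphi)$ for all $x\in[0,1]$, $t\in\mathbb R$. Since $1/p_\eta\in(0,1)$ and $\|f(1/p_\eta)\|_{p_\eta,\varphi}=\vertiii{f}_\Fe$, Lemma \ref{lemma:equal} gives
\[
\|f(x+it)\|_{1/x,\varphi}=\|h\|_{p_\eta,\varphi},\qquad x\in[0,1],\ t\in\mathbb R.
\]
Evaluating this along $\Re w=1/p'$ and $\Re w=1/p$ and using the previous paragraph, I obtain $g(it)\in L_{p'}(\Me,\varphi)$ with $\|g(it)\|_{p',\varphi}=\|h\|_{p_\eta,\varphi}$, $g(1+it)\in L_p(\Me,\varphi)$ with $\|g(1+it)\|_{p,\varphi}=\|h\|_{p_\eta,\varphi}$, and $g(\eta)=f(1/p_\eta)=h$; moreover $\|g(z)\|_{p,\varphi}\le\|g(z)\|_{1/\Re\lambda(z),\varphi}=\|h\|_{p_\eta,\varphi}$ for all $z\in S$ by \eqref{eq:embedding}, so $g$ is bounded and takes values in $L_p(\Me,\varphi)\supseteq L_{p'}(\Me,\varphi)$. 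The remaining task is to check $g\in\Fe_{p',p}$, i.e.\ that $g$ is continuous on $S$ and analytic in its interior as an $L_p(\Me,\varphi)$-valued map, with $t\mapsto g(it)$ continuous into $L_{p'}(\Me,\varphi)$ and $t\mapsto g(1+it)$ continuous into $L_p(\Me,\varphi)$; this is done exactly as for $f_{h,p_\eta}$ itself, by using the explicit product form \eqref{eq:fkp} (so $g$ is an analytic, locally $L_r$-bounded product of the $\Me$-valued maps $w\mapsto h_\varphi^{(1-w)/2}$ and $w\mapsto ul^w$) together with the continuous embeddings \eqref{eq:embedding} and the reflexivity of the intermediate spaces. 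Hence $\vertiii{g}_{\Fe_{p',p}}=\max\bigl(\sup_t\|g(it)\|_{p',\varphi},\,\sup_t\|g(1+it)\|_{p,\varphi}\bigr)=\|h\|_{p_\eta,\varphi}$.

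Finally, to see that $g$ attains the infimum, I invoke \eqref{eq:reiteration}: it gives $C_\eta(L_{p'}(\Me,\varphi),L_p(\Me,\varphi))=L_{p_\eta}(\Me,\varphi)$ isometrically, so
\[
\inf\{\vertiii{\tilde g}_{\Fe_{p',p}}\ :\ \tilde g\in\Fe_{p',p},\ \tilde g(\eta)=h\}=\|h\|_{p_\eta,\varphi}\le\vertiii{g}_{\Fe_{p',p}}=\|h\|_{p_\eta,\varphi},
\]
and equality holds throughout. I expect the one genuinely technical point to be the verification that $g\in\Fe_{p',p}$, i.e.\ upgrading analyticity and boundary continuity from the ambient space $L_1(\Me)$ to the intermediate spaces $L_{p'}(\Me,\varphi)$ and $L_p(\Me,\varphi)$; everything else reduces to bookkeeping with the affine substitution $\lambda$ and a single application of Lemma \ref{lemma:equal}. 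Alternatively one can bypass this verification by observing that $g$ is exactly the rescaled restriction of the $\Fe$-function $f_{h,p_\eta}$ to a vertical substrip, which is the construction underlying the proof of the reiteration theorem.
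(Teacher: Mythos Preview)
Your proof is correct and follows essentially the same approach as the paper. The paper's primary route delegates the membership $g\in\Fe_{p',p}$ and the bound $\vertiii{g}_{\Fe_{p',p}}\le\vertiii{f_{h,p_\eta}}_\Fe$ to a general result of Calder\'on \cite[32.3]{calderon1964intermediate} and then sandwiches with the reiteration identity \eqref{eq:reiteration}, while you compute the boundary norms exactly via Lemma~\ref{lemma:equal}; the paper in fact remarks at the end of its proof that this second route works as well, so the two arguments coincide.
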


\begin{proof}
By \cite[32.3]{calderon1964intermediate}, for any  $f\in \Fe$ the function 
$q(z)=f(z/p+(1-z)/p')$ belongs to 
$\Fe_{p',p}$ and 
\[
\vertiii{q}_{\Fe_{p',p}}=\max\{\sup_t\|q(it)\|_{p',\varphi},\sup_t\|q(1+it)\|_{p,\varphi}\}\le \vertiii{f}_\Fe.\]
By reiteration \eqref{eq:reiteration},
\[
\|h\|_{p_\eta,\varphi}\le \vertiii{g}_{\Fe_{p',p}}\le \vertiii{
f_{h,p_\eta}}_\Fe= \|h\|_{p_\eta,\varphi}.
\]
The statement follows also by Lemma \ref{lemma:equal}, by noticing that 
for any $x\in [0,1]$, $t\in \mathbb R$, 
\[
\|g(x+it)\|_{p_x,\varphi}=\vertiii{f_{h,p_\eta}}_\Fe=\|h\|_{p_\eta,\varphi}.
\]

\end{proof}

Assume that $h\in L_p(\Me,\varphi)$, $\|h\|_{p,\varphi}=1$. Note that by Lemma \ref{lemma:equal}, the values of the 
function $f_{h,p}$ run through the unit spheres  of all the spaces $L_{p'}(\Me,\varphi)$. The next lemma shows that by applying the map $T_{q,\varphi}$ we again obtain an element of $\Fe$.

\begin{lemma}\label{lemma:Tq} Let $1<p<\infty$, and let $h\in L_p(\Me,\varphi)$, with 
$\|h\|_{p,\varphi}=1$.   Then for all $z=x+it$, 
$x\in (0,1)$, 
\[
T_{1/(1-x),\varphi}(f_{h,p}(z))=f_{T_{q,\varphi}(h),q}(1-z).
\]

\end{lemma}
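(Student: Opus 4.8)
The plan is to unwind both sides using the explicit formula \eqref{eq:dual} for $T_{q,\varphi}$ together with the explicit form \eqref{eq:fkp} of the function $f_{h,p}$. First I would write $h = h_\varphi^{1/2q} u l^{1/p} h_\varphi^{1/2q}$ for the polar decomposition of the associated element $k = ul^{1/p} \in L_p(\Me)$, with $\|l\|_1 = 1$ since $\|h\|_{p,\varphi}=1$. Then $f_{h,p}(z) = h_\varphi^{(1-z)/2} u l^z h_\varphi^{(1-z)/2}$, and by Lemma \ref{lemma:equal} this has $\|f_{h,p}(z)\|_{1/x,\varphi}=1$ for $z=x+it$. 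Writing $f_{h,p}(z) = h_\varphi^{(1-z)/2} u l^z h_\varphi^{(1-z)/2}$ in the ``interpolation coordinates'' for $L_{1/x}(\Me,\varphi)$ — i.e.\ peeling off $h_\varphi^{x/2 \cdot \text{(conjugate exponent factor)}}$ on each side — I would identify the underlying element of $L_{1/x}(\Me)$, apply formula \eqref{eq:dual} for $T_{1/(1-x),\varphi}$ to it (the norm factor is $1$), and simplify. The claim is that what comes out is exactly $h_\varphi^{(1-\bar z\,')/2} u^* \bar l^{\,\bar z\,'} h_\varphi^{(1-\bar z\,')/2}$-type expression matching $f_{T_{q,\varphi}(h),q}(1-z)$; one should check the norming constant for $T_{q,\varphi}(h)$ in $L_q(\Me,\varphi)$ is also $1$, which it is since $\|h\|_{p,\varphi}=1$ forces $\|T_{q,\varphi}(h)\|_{q,\varphi}=1$.

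The key steps, in order: (1) fix the polar decomposition of $k$ and record $T_{q,\varphi}(h) = h_\varphi^{1/2p} l^{1/q} u^* h_\varphi^{1/2p}$ from \eqref{eq:dual} with $\|h\|_{p,\varphi}=1$; (2) compute $f_{T_{q,\varphi}(h),q}(w)$ from its definition \eqref{eq:fkp} — note the element of $L_q(\Me)$ attached to $T_{q,\varphi}(h)$ is $l^{1/q}u^*$ with polar decomposition $u^* \cdot (u l^{1/q} u^*)$ or, more directly, $(l^{1/q}u^*) = u^* (u l^{1/q} u^*)$, so its polar decomposition has partial isometry $u^*$ and positive part $(l^{1/q}u^* )^*(l^{1/q}u^*) {}^{1/2}$-style term; getting this bookkeeping right is a small but real point; (3) substitute $w = 1-z$ and simplify the exponents, using $1/q = 1 - 1/p$ and that $z/p + (1-z)\cdot 0$-type arithmetic collapses nicely; (4) separately compute $T_{1/(1-x),\varphi}(f_{h,p}(z))$ by expressing $f_{h,p}(z)$, which lies in $L_{1/x}(\Me,\varphi)$ with norm $1$, in the canonical form $h_\varphi^{x/2q_x} v m^{x} h_\varphi^{x/2q_x}$ (where $1/q_x = 1-x$) and applying \eqref{eq:dual}; (5) match the two expressions.

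The main obstacle I expect is the exponent/support bookkeeping in steps (2) and (4): one has to correctly extract the ``$L_p(\Me)$-part'' of an element presented in the form $h_\varphi^{(1-z)/2} u l^z h_\varphi^{(1-z)/2}$, i.e.\ verify that this equals $i_{1/x}$ (in the sense of Theorem \ref{thm:lpfi}, but for the value $1/x$ rather than $p$) applied to some explicit element of $L_{1/x}(\Me)$, and track the partial isometry and positive part of that element through the non-commutative polar decomposition in Haagerup's $L_p$-spaces. The analyticity/continuity of $z \mapsto l^z$ and $z \mapsto h_\varphi^{(1-z)/2}$ is standard (these are the functions already used to define $f_{h,p}$), so no new analytic input is needed; the content is purely the algebraic identity, and once the canonical forms on both sides are written out the equality should be a direct comparison of exponents. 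A minor additional check is that $u^*u = s(l)$ is the support of the relevant positive element on each side, so that the formula \eqref{eq:dual} genuinely applies; this follows from $u^*u = s(k) = s(l^{1/p})$ and the fact that conjugation by $h_\varphi^{1/2}$ (which is faithful) does not alter supports in the required sense.
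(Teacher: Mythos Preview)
Your plan is correct and would succeed, but the paper takes a different, considerably shorter route. Rather than computing both sides explicitly via polar decompositions, the paper exploits the \emph{defining} property of the map $T_{q,\varphi}$: by strict convexity of $L_{1/x}(\Me,\varphi)$, the element $T_{1/(1-x),\varphi}(f_{h,p}(z))$ is the \emph{unique} element of the unit ball of $L_{1/(1-x)}(\Me,\varphi)$ whose duality pairing with $f_{h,p}(z)$ equals $\|f_{h,p}(z)\|_{1/x,\varphi}=1$. So it suffices to check two things about $f_{T_{q,\varphi}(h),q}(1-z)$: that it lies in the unit ball (immediate from Lemma~\ref{lemma:equal} applied to $T_{q,\varphi}(h)$), and that its pairing with $f_{h,p}(z)$ equals $1$. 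The latter is a single-line computation with \eqref{eq:duality}: writing $h=h_\varphi^{1/2q}uh_\psi^{1/p}h_\varphi^{1/2q}$, one gets $\langle f_{h,p}(z),f_{T_{q,\varphi}(h),q}(1-z)\rangle=\Tr h_\psi=1$.

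Your direct approach trades this uniqueness argument for explicit algebra. It works --- the polar decomposition of $h_\varphi^{-it/2}ul^{z}h_\varphi^{-it/2}\in L_{1/x}(\Me)$ is $(h_\varphi^{-it/2}ul^{it}h_\varphi^{-it/2})\cdot(h_\varphi^{it/2}l^{x}h_\varphi^{-it/2})$, and feeding this into \eqref{eq:dual} yields $h_\varphi^{z/2}l^{1-z}u^*h_\varphi^{z/2}$, which one checks equals $f_{T_{q,\varphi}(h),q}(1-z)$ after identifying the polar decomposition $l^{1/q}u^*=u^*\cdot(ul^{1/q}u^*)$ --- but the bookkeeping you flag in steps (2) and (4) is genuinely the bulk of the work, whereas the paper's duality argument sidesteps it entirely. (A small slip: in your step (4) the outer factors should be $h_\varphi^{1/2q_x}=h_\varphi^{(1-x)/2}$, not $h_\varphi^{x/2q_x}$.) The advantage of your route is that it gives the identity constructively without appealing to strict convexity; the paper's route is that it is two lines.
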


\begin{proof} Since $\|h\|_{p,\varphi}=1$, we have  
 $h=h_\varphi^{1/2q}uh_\psi^{1/p}h_\varphi^{1/2q}$ for some 
 $\psi\in \states_*(\Me)$ and $u^*u=s(\psi)$. 
By Lemma \ref{lemma:equal}, $\|f_{h,p}(x+it)\|_{1/x,\varphi}=\vertiii{f_{h,p}}_\Fe=1$
and $\|f_{T_{q,\varphi}(h),q}(1-x-it)\|_{1/(1-x),\varphi}=\vertiii{f_{T_{q,\varphi}(h),q}}_\Fe=1$
 for all $x\in [0,1]$ and $t\in \mathbb R$. 
 By \eqref{eq:duality}, we have
\begin{align*}
\<f_{h,p}(z), f_{T_{q,\varphi}(h),q}(1-z)\>&= \Tr \left(h^{-it/2}_\varphi uh_\psi^z h_\varphi^{-it/2}\right)\left(h_\varphi^{it/2}h_\psi^{1-z}u^*h_\varphi^{it/2}\right)\\
&=\Tr h_\psi=1.
\end{align*}
By uniqueness, we must have $T_{1/(1-x),\varphi}(f_{h,p}(z))=f_{T_{q,\varphi}(h),q}(1-z)$ for all $x\in (0,1)$, $t\in \mathbb R$. 
 
\end{proof}

The inequality part of the following result  is a version of  Hadamard's three lines theorem. For  convenience of the reader, we  add a proof.

\begin{thm}\label{thm:htl} Let $1\le p\le p'\le\infty$ and let $0<\eta<1$. Then for $f\in \Fe(L_{p'}(\Me,\varphi),L_p(\Me, \varphi))$,
\[
\|f(\eta)\|_{p_\eta,\varphi}\le (\sup_{t\in\mathbb R}\|f(it)\|_{p',\varphi})^{1-\eta}(\sup_{t\in\mathbb R}\|f(1+it)\|_{p,\varphi})^{\eta},
\]
where $1/p_\eta=\eta/p+(1-\eta)/p'$. Moreover, equality is attained if and only if
\[
f(z)=f_{h,p_\eta}(z/p+(1-z)/p')M^{z-\eta},
\]
 for $h=f(\eta)$ and  $M>0$. 

\end{thm}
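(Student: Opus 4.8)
The plan is to reduce the three-lines inequality for the reiterated spaces to the basic Hadamard three-lines statement by pairing $f$ against a suitably normalized test function valued in the dual family, exactly as in the proof of Lemma~\ref{lemma:equal}. First I would dispose of the degenerate cases: if one of the two suprema on the right-hand side is zero or infinite the inequality is trivial, and if $f(\eta)=0$ it is trivial as well, so assume $f(\eta)\neq 0$ and, after rescaling $f$ by a positive constant and by $M^{z-\eta}$ for an appropriate $M>0$, reduce to the case $\sup_t\|f(it)\|_{p',\varphi}=\sup_t\|f(1+it)\|_{p,\varphi}=:c$; the claim then becomes $\|f(\eta)\|_{p_\eta,\varphi}\le c$. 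Write $p_\eta=1/\theta$ with $\theta=\eta/p+(1-\eta)/p'$, so $1/p_\eta+1/q_\eta=1$, and set $h:=f(\eta)$. Let $g:=f_{T_{q_\eta,\varphi}(h),q_\eta}$ be the optimizing function from \eqref{eq:fkp} for the dual element, and then reparametrize it as in Lemma~\ref{lemma:reiteration} so that it lives in $\Fe(L_{q}(\varphi),L_{q'}(\varphi))$ with $g$ taking the value $T_{q_\eta,\varphi}(h)$ at the point conjugate to $\eta$; by Lemma~\ref{lemma:reiteration} (applied to the conjugate exponents) we have $\vertiii{g}=\|T_{q_\eta,\varphi}(h)\|_{q_\eta,\varphi}=1$ and, by Lemma~\ref{lemma:equal}, $\|g\|$ stays on the boundary of the appropriate unit ball along every vertical line.

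Next I would form $K(z):=\<f(z),g(1-z)\>$ using the duality \eqref{eq:duality} between $L_{1/x}(\varphi)$ and $L_{1/(1-x)}(\varphi)$; by the mapping property of the functions in $\Fe_{p',p}$ and the corresponding dual family, $K$ is continuous on the strip $S$, analytic in its interior, and on the line $\mathrm{Re}\,z=x$ satisfies the Hölder-type bound $|K(x+it)|\le \|f(x+it)\|_{1/x,\varphi}\,\|g(1-x-it)\|_{1/(1-x),\varphi}$. On the left edge $x=0$ this is $\le \sup_t\|f(it)\|_{p',\varphi}\cdot 1=c$ and on the right edge $x=1$ it is $\le \sup_t\|f(1+it)\|_{p,\varphi}\cdot 1=c$. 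The classical Hadamard three-lines theorem for scalar functions then gives $|K(\eta)|\le c$. But $K(\eta)=\<h,T_{q_\eta,\varphi}(h)\>=\|h\|_{p_\eta,\varphi}$ by the defining property of $T_{q_\eta,\varphi}$, which yields $\|f(\eta)\|_{p_\eta,\varphi}\le c$, i.e.\ the asserted inequality.

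For the equality characterization I would trace back the conditions for equality in each step. Equality in the final inequality forces $|K(\eta)|=c$, hence by the maximum modulus principle $K$ is constant equal to $c$ on $S$; in particular $|K(x+it)|=c$ on both edges, which by the Hölder case of equality (strict convexity of $L_{1/x}(\varphi)$ for $0<x<1$ and the uniqueness of $T_{q,\varphi}$) forces $g(1-z)$ to be the dual functional of $f(z)$ up to phase along every vertical line; combined with analyticity this pins down $f$ up to an analytic scalar factor of modulus one on the boundary. Unwinding with Lemma~\ref{lemma:Tq}, which identifies $T_{1/(1-x),\varphi}(f_{h,p_\eta}(\cdot))$ with the corresponding optimizer of the dual element, one concludes that $f(z)M^{\eta-z}$ must coincide with $f_{h,p_\eta}(z/p+(1-z)/p')$ for some $M>0$, and conversely any such $f$ attains equality by a direct computation with \eqref{eq:fkp}. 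The main obstacle I anticipate is the equality half: turning the chain ``$K$ constant $\Rightarrow$ pointwise saturation of Hölder $\Rightarrow$ $f$ equals the optimizer'' into a rigorous argument requires the strict convexity / uniqueness of duality maps on $L_p(\Me,\varphi)$ for $1<p<\infty$ (available from Theorem~\ref{thm:clarkson} and the discussion around \eqref{eq:dual}) together with a careful handling of the endpoints $p,p'\in\{1,\infty\}$, where strict convexity fails and one must argue instead via analyticity and the explicit form \eqref{eq:fkp}.
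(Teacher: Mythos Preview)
Your proposal is correct and follows essentially the same route as the paper: pair $f$ against the reparametrized dual optimizer $g(z)=f_{T_{q_\eta,\varphi}(h),q_\eta}(z/q'+(1-z)/q)$, apply the scalar Hadamard three-lines theorem to $K(z)=\langle g(1-z),f(z)\rangle$, and for equality use the maximum modulus principle together with Lemma~\ref{lemma:Tq} and the uniqueness of $T_{q_x,\varphi}$ to force $f(z)M^{\eta-z}=f_{h,p_\eta}(z/p+(1-z)/p')$. The only cosmetic difference is that the paper keeps $M_0,M_1$ separate throughout rather than pre-normalizing to $M_0=M_1$, and your worry about the endpoints $p,p'\in\{1,\infty\}$ is unnecessary since the identification is carried out for $x\in(0,1)$ (where $1<p_x<\infty$ and strict convexity holds) and then extends by continuity.
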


\begin{proof} Let $q$, $q'$ and $q_\eta$ be the duals of $p$, $p'$ and $p_\eta$, so that $1/q_{\eta}=\eta/q+(1-\eta)/q'$ and 
$L_{q_\eta}(\Me, \varphi)=C_{1-\eta}(L_q(\Me,\varphi),L_{q'}(\Me,\varphi))$. Put  
$h=f(\eta)$ and let 
\[
g(z)=f_{T_{q_\eta,\varphi} (h),q_\eta}(z/q'+(1-z)/q). 
\]
By Lemma \ref{lemma:reiteration}, $g  \in \Fe_{q,q'}$ 
and \[
g(1-\eta)=T_{q_\eta,\varphi}(h),\qquad 
\vertiii{g}_{\Fe_{q,q'}}=1=\|T_{q_\eta,\varphi}(h)\|_{q_\eta,\varphi}.
\]
As in the proof of lemma \ref{lemma:equal}, $K(z):=\<g(1-z),f(z)\>$ defines a bounded continuous function on $S$, analytic in the interior of $S$.
By the usual Hadamard's three lines theorem,
\begin{align*}
\|f(\eta)\|_{p_\eta,\varphi}&=|K(\eta)|\le (\sup_{t\in\mathbb R}|K(it)|)^{1-\eta}(\sup_{t\in\mathbb R}|K(1+it)|)^{\eta}\\
&\le 
(\sup_{t\in\mathbb R}\|f(it)\|_{p',\varphi})^{1-\eta}(\sup_{t\in\mathbb R}\|f(1+it)\|_{p,\varphi})^{\eta}.
\end{align*}
Now assume that equality is attained. Let $M_0=\sup_{t\in\mathbb R}\|f(it)\|_{p',\varphi}$, $M_1=\sup_{t\in\mathbb R}\|f(1+it)\|_{p,\varphi}$ and let
\[
F(z):=K(z)M_0^{z-1}M_1^{-z},\qquad z\in S.
\]
Then $|F(z)|\le 1$ for all $z\in S$ and $F(\eta)=1$. By the maximum modulus principle, $F(z)=F(\eta)=1$ for all $z$, that is,
\begin{equation}\label{eq:hadeq}
\<g(1-z),f(z)M_0^{z-1}M_1^{-z}\>=1,\qquad z\in S.
\end{equation}
Suppose first that $\|h\|_{p_\eta,\varphi}=1$. Note that by Lemma \ref{lemma:Tq}, 
\[
g(1-z)=f_{T_{q_\eta,\varphi}(h),q_\eta}(u)=T_{1/Re(u),\varphi}(f_{h,p_\eta}(1-u)),
\]
where $u=z/q+(1-z)/q'$. Hence
\[
g(1-z)=T_{q_x,\varphi}(f_{h,p_\eta}(z/p+(1-z)/p')).
\]
Since $\|f(x+it)M_0^{x+it-1}M_1^{-x-it}\|_{p_x,\varphi}\le 1$ by the first part of the 
proof, \eqref{eq:hadeq} implies that we must have 
\[
f(z)M_0^{z-1}M_1^{-z}=f_{h,p_\eta}(z/p+(1-z)/p'),
\]
by definition and properties  of $T_{q_x,\varphi}$.
If  $\|h\|_{p_\eta,\varphi}=a\ne 1$, then we may replace $f$ by $a^{-1}f$. Note that the above equality still holds, with $M_0$ and $M_1$ replaced by $a^{-1}M_0$ and $a^{-1}M_1$. We obtain
\begin{align*}
f(z)&=a(a^{-1}f(z))=af_{a^{-1}h,p_\eta}(z/p+(1-z)/p')(a^{-1}M_0)^{1-z}(a^{-1}M_1)^z
\\
&=f_{a^{-1}h,p_\eta}(z/p+(1-z)/p')M_0^{1-z}M_1^z\\
&=a^{-1/p_\eta}f_{h,p_\eta}(z/p+(1-z)/p')M_0^{1-z}M_1^z\\
&= f_{h,p_\eta}(z/p+(1-z)/p')AM^z,
\end{align*}
where $A>0$ and $M=M_1/M_0$. Since $f(\eta)=f_{h,p_\eta}(1/p_\eta)=h$, we must have 
$AM^\eta=1$. It follows that 
\[
f(z)=f_{h,p_\eta}(z/p+(1-z)/p')M^{z-\eta}.
\]

For the converse, note that using Lemma \ref{lemma:equal}, we obtain  
\begin{align*}
M_0&=\sup_t\|f(it)\|_{p',\varphi}=\|h\|_{p_\eta,\varphi}M^{-\eta}\\
M_1&=\sup_t\|f(1+it)\|_{p,\varphi}=\|h\|_{p_\eta,\varphi}M^{1-\eta}.
\end{align*}
It follows that $M_1/M_0=M$ and $M_0^{1-\eta}M_1^\eta=\|h\|_{p_\eta,\varphi}$.
 
\end{proof}

\subsection{The positive cone in $L_p(\Me,\varphi)$.}

Let us denote $L_p(\Me,\varphi)^+:=L_p(\Me,\varphi)\cap L_1(\Me)^+$. Then it is clear that for $1<p<\infty$,
\[
L_p(\Me,\varphi)^+=
\{h_\varphi^{1/2q}h^{1/p}h_\varphi^{1/2q}, \ h\in L_1(\Me)^+\}.
\]
It follows by the properties of $L_p(\Me)^+$ (\cite{terp1981lpspaces})  that $L_p(\Me,\varphi)^+$ is a closed convex cone which is pointed and generates all $L_p(\Me,\varphi)$. 
Note also that 
\[
L_\infty(\Me,\varphi)^+=\{h_\varphi^{1/2}xh_\varphi^{1/2}, \ x\in\Me^+\}
\]
 is dense in $L_p(\Me,\varphi)^+$, for any $1\le p$.

Let $1<p<\infty$ and let $k\in L_p(\Me,\varphi)$,  $k=h_\varphi^{1/2q}u h^{1/p}h_\varphi^{1/2q}$, $h\in L_1(\Me)^+$. Then $k$  has a polar decomposition of the form
\[
k=h_\varphi^{1/2q}uh_\varphi^{-1/2q}|k|_{p,\varphi}=\sigma^\varphi_{-i/2q}(u)|k|_{p,\varphi},
\]
where $|k|_{p,\varphi}=h_\varphi^{1/2q}h^{1/p}h_\varphi^{1/2q}\in L_p(\Me,\varphi)^+$ and 
$\sigma^\varphi$ denotes the modular group of $\varphi$.  
We next look at  self-adjoint elements in $L_p(\Me,\varphi)$. 

\begin{lemma}\label{lemma:jdecomp} Let $1\le p<\infty$ and $k=k^*\in L_p(\Me,\varphi)$. Then there is a decomposition 
\[
k=k_{p,\varphi,+}-k_{p,\varphi,-},
\]
where $k_{p,\varphi,\pm}\in L_p(\Me,\varphi)^+$ and we have 
\[
\|k\|_{p,\varphi}=(\|k_{p,\varphi,+}\|_{p,\varphi}^p+\|k_{p,\varphi,-}\|_{p,\varphi}^p)^{1/p}.
\]

\end{lemma}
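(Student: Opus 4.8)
The plan is to transport everything to the Haagerup space $L_p(\Me)$ through the isometric isomorphism $i_p$ of Theorem \ref{thm:lpfi}, where the desired decomposition is simply the ordinary Jordan decomposition of a self-adjoint element. Writing $k=i_p(a)=h_\varphi^{1/2q}ah_\varphi^{1/2q}$ for the (unique) $a\in L_p(\Me)$ and using that $h_\varphi$ is positive, one gets $k^*=i_p(a^*)$; injectivity of $i_p$ then forces $a=a^*$, so $a$ is a self-adjoint element of $L_p(\Me)$.

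Next I would invoke the Jordan decomposition in $L_p(\Me)$: a self-adjoint element has a spectral decomposition, which yields $a=a_+-a_-$ with $a_\pm\in L_p(\Me)^+$, $a_+a_-=a_-a_+=0$ and $|a|=a_++a_-$ (see \cite{terp1981lpspaces}). Since the positive elements of $L_p(\Me)$ are exactly the $p$-th roots of elements of $L_1(\Me)^+$, I set $h_\pm:=a_\pm^p\in L_1(\Me)^+$, so that $a_\pm=h_\pm^{1/p}$; the supports of $h_\pm$ coincide with those of $a_\pm$, whence $h_+h_-=0$. Applying $i_p$, which carries $L_p(\Me)^+$ onto $L_p(\Me,\varphi)^+$ as recalled just before the lemma, gives
\[
k=i_p(a_+)-i_p(a_-)=:k_{p,\varphi,+}-k_{p,\varphi,-},\qquad k_{p,\varphi,\pm}=h_\varphi^{1/2q}h_\pm^{1/p}h_\varphi^{1/2q}\in L_p(\Me,\varphi)^+,
\]
which is the required decomposition.

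Finally I would read off the norm identity: since $i_p$ is isometric, $\|k\|_{p,\varphi}=\|a\|_p$ and $\|k_{p,\varphi,\pm}\|_{p,\varphi}=\|a_\pm\|_p=(\Tr h_\pm)^{1/p}$; and $a_+a_-=0$ gives $|a|^p=a_+^p+a_-^p=h_++h_-$, so
\[
\|k\|_{p,\varphi}^p=\Tr|a|^p=\Tr h_++\Tr h_-=\|k_{p,\varphi,+}\|_{p,\varphi}^p+\|k_{p,\varphi,-}\|_{p,\varphi}^p .
\]

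I do not expect a genuine obstacle here: the only inputs beyond bookkeeping with $i_p$ are the Jordan decomposition of self-adjoint elements of the Haagerup space and additivity of the trace over orthogonal positive parts, both standard. (For $p=1$, $q=\infty$, one has $h_\varphi^{1/2q}=s(\varphi)=1$ and the statement is just the ordinary Jordan decomposition in $\Me_*\simeq L_1(\Me)$.)
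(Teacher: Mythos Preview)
Your proof is correct and follows essentially the same route as the paper: both transport the problem to $L_p(\Me)$ via the isometry $i_p$, use the Jordan decomposition of the self-adjoint preimage there, and read off the norm identity from additivity of the trace on orthogonal positive parts. The paper phrases the decomposition via the polar decomposition $l=u|h|^{1/p}$ with $u=e_+-e_-$, but this is just another way of writing what you call the spectral/Jordan decomposition $a=a_+-a_-$, so there is no substantive difference.
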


\begin{proof}
If $k=k^*$, then $k=h_\varphi^{1/2q}lh_\varphi^{1/2q}$, where $l=l^*\in L_p(\Me)$. It follows that $l=u|h|^{1/p}$, where 
$h=h^*\in L_1(\Me)$, $h=h_+-h_-$, $h_+,h_-\in L_1(\Me)^+$, $h_+h_-=0$. Moreover, $u=e_+-e_-$, where $e_\pm:=s(h_\pm)$ and $|h|^{1/p}=(h_++h_-)^{1/p}=h_+^{1/p}+h_-^{1/p}$. It follows that $k$ has the above form, with    $k_{p,\varphi,\pm}=    h_\varphi^{1/2q}h_\pm^{1/p}h_\varphi^{1/2q}$ and we have 
\[
\|k\|_{p,\varphi}^p= \Tr |h| = \Tr h_++\Tr h_-=\|k_{p,\varphi,+}\|_{p,\varphi}^p+\|k_{p,\varphi,-}\|_{p,\varphi}^p.
\]

\end{proof}

\begin{coro}\label{coro:positive} Let $h\in L_p(\Me,\varphi)^+$ and let $h_1 \in L_1(\Me)^+$ be such that $h_1\le h$. Then 
$h_1\in L_p(\Me,\varphi)^+$ and $\|h_1\|_{p,\varphi}\le \|h\|_{p,\varphi}$.

\end{coro}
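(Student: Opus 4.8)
The plan is to transfer the statement, via the isometric isomorphism $i_p$ of Theorem~\ref{thm:lpfi}, into an assertion about Haagerup's space $L_p(\Me)$, where it reduces to the familiar monotonicity of the $L_p$-norm on the positive cone. The extreme cases are immediate: for $p=1$ one has $L_1(\Me,\varphi)=L_1(\Me)$ with $\|\cdot\|_{1,\varphi}=\Tr|\cdot|$, and $\Tr h_1\le\Tr h$ because $h-h_1\in L_1(\Me)^+$; for $p=\infty$, Lemma~\ref{lemma:linftyplus} gives $h\le\|h\|_{\infty,\varphi}h_\varphi$, hence $h_1\le\|h\|_{\infty,\varphi}h_\varphi$, and the same lemma yields $h_1\in L_\infty(\Me,\varphi)^+$ with $\|h_1\|_{\infty,\varphi}\le\|h\|_{\infty,\varphi}$. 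So assume $1<p<\infty$ and $1/p+1/q=1$.

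Write $h=h_\varphi^{1/2q}\kappa h_\varphi^{1/2q}$ with $\kappa=i_p^{-1}(h)\in L_p(\Me)^+$, so that $\|h\|_{p,\varphi}=\|\kappa\|_p$. Since $\varphi$ is faithful, $h_\varphi$ is nonsingular, $h_\varphi^{-1/2q}$ is a densely defined positive self-adjoint operator affiliated with $\Ne$, and
\[
\kappa_1:=h_\varphi^{-1/2q}h_1h_\varphi^{-1/2q}
\]
(the closure of $(h_1^{1/2}h_\varphi^{-1/2q})^*(h_1^{1/2}h_\varphi^{-1/2q})$) is a positive self-adjoint operator affiliated with $\Ne$. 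From $\theta_s(h_\varphi)=e^{-s}h_\varphi$ and $\theta_s(h_1)=e^{-s}h_1$ one computes $\theta_s(\kappa_1)=e^{s/2q}e^{-s}e^{s/2q}\kappa_1=e^{-s/p}\kappa_1$, so $\kappa_1$ lies in the $L_p$-component. Conjugating the inequality $0\le h_1\le h=h_\varphi^{1/2q}\kappa h_\varphi^{1/2q}$ by $h_\varphi^{-1/2q}$ — legitimate at the level of quadratic forms, the relevant domains being dense because $h_\varphi$ is nonsingular — gives $0\le\kappa_1\le\kappa$.

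It now suffices to use that the Haagerup $L_p$-norm is monotone on the positive cone: if $0\le a\le b$ with $b\in L_p(\Me)^+$, then $a$ is $\tau$-measurable (this is forced by the domination $a\le b$), $a\in L_p(\Me)^+$, and $\|a\|_p\le\|b\|_p$; this follows from $\mu_t(a)\le\mu_t(b)$ for the generalized singular value functions together with the scaling relation, which makes $t\mapsto t\mu_t(c^p)$ constant and equal to $\|c\|_p^p$ for $c$ in the $L_p$-component (cf.\ \cite{terp1981lpspaces}). Applying this to $0\le\kappa_1\le\kappa$ yields $\kappa_1\in L_p(\Me)^+$ with $\|\kappa_1\|_p\le\|\kappa\|_p$; since $h_\varphi$ is nonsingular, $h_\varphi^{1/2q}\kappa_1h_\varphi^{1/2q}=h_1$, so $h_1=i_p(\kappa_1)\in L_p(\Me,\varphi)^+$ and $\|h_1\|_{p,\varphi}=\|\kappa_1\|_p\le\|\kappa\|_p=\|h\|_{p,\varphi}$. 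In finite dimensions the argument collapses to $\kappa_1=h_\varphi^{-1/2q}h_1h_\varphi^{-1/2q}\le h_\varphi^{-1/2q}hh_\varphi^{-1/2q}=\kappa$ together with $\sum_j\mu_j(\kappa_1)^p\le\sum_j\mu_j(\kappa)^p$ by Weyl monotonicity; the only genuine work in the general case is the routine but careful treatment of the unbounded operator $h_\varphi^{-1/2q}$ within the calculus of $\tau$-measurable operators, which I expect to be the main (technical) obstacle.
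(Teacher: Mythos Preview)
Your route is genuinely different from the paper's, which works by duality: it shows $0\le\langle h_x,h_1\rangle\le\langle h_x,h\rangle\le\|h_x\|_{q,\varphi}\|h\|_{p,\varphi}$ for $x\in\Me^+$, extends this bound to all of $L_q(\Me,\varphi)$ via density and the decomposition of Lemma~\ref{lemma:jdecomp}, concludes that $h_1\in L_p(\Me,\varphi)$, and then obtains the norm inequality by pairing $h_1$ with the positive element $T_{q,\varphi}(h_1)$. No inverse of $h_\varphi$ ever appears.

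Your idea of pulling back through $i_p$ is natural, and the cases $p=1,\infty$ are fine, but the step you call the ``main (technical) obstacle'' is a genuine gap rather than a routine one. The operator $h_\varphi^{-1/2q}$ is \emph{not} $\tau$-measurable: for $h_\varphi\in L_1(\Me)^+$ one has $\tau\bigl(\chi_{(\lambda,\infty)}(h_\varphi)\bigr)=\varphi(1)/\lambda$, whence $\tau\bigl(\chi_{(\lambda,\infty)}(h_\varphi^{-1/2q})\bigr)=\tau\bigl(\chi_{(0,\lambda^{-2q})}(h_\varphi)\bigr)=\infty$ for every $\lambda>0$. So the product $h_\varphi^{-1/2q}h_1h_\varphi^{-1/2q}$ cannot be taken inside the $*$-algebra of $\tau$-measurable operators, and your appeal to that calculus does not apply as written. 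To make the strategy rigorous you would have to argue purely at the form level that your $\kappa_1$ is self-adjoint, dominated by $\kappa$ on a form core for $\kappa$, hence $\tau$-measurable, and that $i_p(\kappa_1)=h_1$; none of this is automatic. A cleaner execution of the same idea avoids $h_\varphi^{-1/2q}$ altogether: from $0\le h_1\le h$ one gets a contraction $t\in\Me$ with $h_1^{1/2}=th^{1/2}$, and the polar decomposition $\kappa^{1/2}h_\varphi^{1/2q}=vh^{1/2}$ in $L_2(\Me)$ (with $v\in\Me$) then yields $h_1=h_\varphi^{1/2q}\kappa_1h_\varphi^{1/2q}$ for $\kappa_1:=\kappa^{1/2}vt^*tv^*\kappa^{1/2}\le\kappa$ in $L_p(\Me)^+$. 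Either way, the paper's duality argument is shorter and sidesteps all of these issues.
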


\begin{proof} The statement is obvious for $p=1$ and follows easily from Lemma \ref{lemma:linftyplus} for $p=\infty$. For $1<p<\infty$, let $x\in \Me^+$, then  
\[
0\le \<h_x,h_1\>=\Tr h_1x \le \Tr hx =\<h_x,h\>\le \|h_x\|_{q,\varphi}\|h\|_{p,\varphi}.
\]
Since $L_\infty(\Me,\varphi)^+$ is dense in $L_q(\Me,\varphi)^+$, it follows that
$\<k,h_1\>\le \<k,h\>\le \|k\|_{q,\varphi}\|h\|_{p,\varphi}$ for all $k\in L_q(\Me,\varphi)^+$. 
Let now $k=k^*\in L_q(\Me,\varphi)$, with decomposition  $k=k_{q,\varphi,+}-k_{q,\varphi,-}$ as in Lemma \ref{lemma:jdecomp}. Then
\begin{align*}
|\<k,h_1\>|&\le \<k_{q,\varphi,+},h_1\>+\<k_{q,\varphi,-},h_1\>\le \|h\|_{p,\varphi}(\|k_{q,\varphi,+}\|_{q,\varphi}+\|k_{q,\varphi,-}\|_{q,\varphi})\\
&\le  \|h\|_{p,\varphi}2^{1/q}(\|k_{q,\varphi,+}\|^q_{q,\varphi}+\|k_{q,\varphi,-}\|^q_{q,\varphi})^{1/q}= \|h\|_{p,\varphi}2^{1/q}\|k\|_{q,\varphi},
\end{align*}
the last inequality follows by classical H\"older's inequality. For $k\in L_q(\Me,\varphi)$, we have $k=\mathrm {Re}(k)+ i\mathrm{Im}(k)$, with the usual definition of the self-adjoint elements $\mathrm {Re}(k)$ and $\mathrm{Im}(k)$ in $L_q(\Me,\varphi)$. Then
\begin{align*}
|\<k,h_1\>|&\le |\<\mathrm{Re}(k),h_1\>|+|\<\mathrm{Im}(k),h_1\>|
\\
&\le \|h\|_{p,\varphi}2^{1/q}(\|\mathrm{Re}(k)\|_{q,\varphi}+\|\mathrm{Im}(k)\|_{q,\varphi})\le  \|h\|_{p,\varphi}2^{1+1/q}\|k\|_{q,\varphi}.
\end{align*}
Hence $h_1$ defines a bounded positive linear functional on $L_q(\Me,\varphi)$ and therefore 
$h_1\in L_p(\Me,\varphi)^+$. To prove the last statement, note that by (\ref{eq:dual}), 
$T_{q,\varphi}(h_1)$ is a positive element in the unit ball of $L_q(\Me,\varphi)$, so that
\[
\|h_1\|_{p,\varphi}=\<T_{q,\varphi}(h_1),h_1\>\le \<T_{q,\varphi}(h_1),h\>\le \|h\|_{p,\varphi}.
\]

\end{proof}

\section{The  R\'enyi relative entropy}\label{sec:sandwiched}

We will need to extend the definition of $L_p(\Me,\varphi)$ to all (not necessarily faithful) normal states.
 So let $\varphi\in \states_*(\Me)$ and let $s(\varphi)=e$. Then $\varphi$ restricts to a faithful normal state on $e\Me e$ and we may identify the predual $(e\Me e)_*$ with the set of all $\psi\in \Me_*$ such that $e\psi e=\psi$, where 
$e\psi e(x)=\psi(exe)$, $x\in \Me$. By \cite[Theorem 7]{terp1981lpspaces}, $h_\psi=h_{e\psi e}=eh_\psi e$ for all such $\psi$. Hence we may identify $L_1(e\Me e)\simeq eL_1(\Me) e$ and using the polar decomposition, 
$L_p(e\Me e)\simeq eL_p(\Me) e$ for all $p\ge 1$. The space $L_p(\Me, \varphi)$ is then defined as 
\[
L_p(\Me,\varphi)= \{h\in L_1(\Me),\ h=ehe\in L_p(e\Me e,\varphi|_{e\Me e})\}, 
\]
with the corresponding norm.

\subsection{Definition and basic properties}

Let $1<\alpha< \infty$ and let $\varphi$, $\psi\in \states_*(\Me)$. We define

\begin{equation}\label{eq:defi_sre}
\tilde D_\alpha(\psi\|\varphi)=\left\{\begin{array}{cc} \frac{\alpha}{\alpha-1}\log(\|h_\psi\|_{\alpha,\varphi}) & \mbox{if }  h_\psi\in L_\alpha(\Me,\varphi)\\ & \\
\infty & \mbox{otherwise}.
\end{array}\right.
\end{equation}

We first show that this definition is an extension of the sandwiched R\'enyi relative $\alpha$-entropy \eqref{eq:sandwiched}.  
Assume that
$\dim(\Me)<\infty$ and let $\tau_0$ be a faithful normal trace on $\Me$. Any state $\varphi\in \states_*(\Me)$ is given by  a density operator $\rho_\varphi\in \Me^+$, such that 
$\varphi(x)=\tau_0 (\rho_\varphi x)$, $x\in \Me$. 

\begin{prop}\label{prop:finitedim} Let $\dim(\Me)<\infty$ and let $\psi,\varphi\in \states_*(\Me)$, with density operators 
$\rho_\varphi=\sigma, \rho_\psi=\rho$. Then
\[
\tilde D_\alpha(\psi\|\varphi)=\left\{\begin{array}{cc}\frac1{\alpha-1}\log \tau_0 [(\sigma^{\frac{1-\alpha}{2\alpha}}\rho\sigma^{\frac{1-\alpha}{2\alpha}})^\alpha], & \mbox{ if } \supp(\rho)\subseteq  \supp(\sigma)\\ & \\
\infty & \mbox{otherwise}.
\end{array}\right.
\]

\end{prop}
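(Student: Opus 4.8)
The plan is to reduce the abstract definition \eqref{eq:defi_sre} to a concrete formula in the finite-dimensional case by identifying the norm $\|h_\psi\|_{\alpha,\varphi}$ explicitly. First I would recall that when $\dim(\Me)<\infty$ and $\tau_0$ is a faithful normal trace, the Haagerup space $L_p(\Me)$ is just $\Me$ equipped with the norm $\|k\|_p = (\tau_0(|k|^p))^{1/p}$, and the canonical element $h_\psi\in L_1(\Me)$ corresponding to $\psi$ is exactly the density operator $\rho$ (up to the identification built from $\tau_0$); similarly $h_\varphi \leftrightarrow \sigma$. By Theorem \ref{thm:lpfi}, the map $i_\alpha\colon L_\alpha(\Me)\to L_\alpha(\Me,\varphi)$, $k\mapsto h_\varphi^{1/2q}k h_\varphi^{1/2q}$ with $1/q = 1-1/\alpha = (\alpha-1)/\alpha$, is an isometric isomorphism. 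Hence $h\in L_\alpha(\Me,\varphi)$ precisely when $h = h_\varphi^{1/2q} k h_\varphi^{1/2q}$ for some $k\in L_\alpha(\Me)$, and then $\|h\|_{\alpha,\varphi} = \|k\|_\alpha = (\tau_0(|k|^\alpha))^{1/\alpha}$.

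The key step is therefore to solve $h_\psi = h_\varphi^{1/2q} k h_\varphi^{1/2q}$ for $k$ when $\supp(\rho)\subseteq\supp(\sigma)$, and to show no solution exists otherwise. When the support condition holds, I would take $k = \sigma^{-1/2q}\rho\,\sigma^{-1/2q}$, where $\sigma^{-1/2q}$ is the generalized inverse supported on $\supp(\sigma)$; since $\supp(\rho)\subseteq\supp(\sigma)$ this indeed satisfies $\sigma^{1/2q} k \sigma^{1/2q} = \rho$, so $h_\psi\in L_\alpha(\Me,\varphi)$. Noting $1/2q = (\alpha-1)/(2\alpha) = \tfrac{1-\alpha}{2\alpha}\cdot(-1)$, we get $k = \sigma^{\frac{1-\alpha}{2\alpha}}\rho\,\sigma^{\frac{1-\alpha}{2\alpha}}$, which is positive, so $|k|^\alpha = k^\alpha = (\sigma^{\frac{1-\alpha}{2\alpha}}\rho\,\sigma^{\frac{1-\alpha}{2\alpha}})^\alpha$, and
\[
\tilde D_\alpha(\psi\|\varphi) = \frac{\alpha}{\alpha-1}\log\|h_\psi\|_{\alpha,\varphi} = \frac{\alpha}{\alpha-1}\cdot\frac1\alpha\log\tau_0\!\left[(\sigma^{\frac{1-\alpha}{2\alpha}}\rho\,\sigma^{\frac{1-\alpha}{2\alpha}})^\alpha\right] = \frac1{\alpha-1}\log\tau_0\!\left[(\sigma^{\frac{1-\alpha}{2\alpha}}\rho\,\sigma^{\frac{1-\alpha}{2\alpha}})^\alpha\right],
\]
matching \eqref{eq:sandwiched}.

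For the converse direction of the case split, if $\supp(\rho)\not\subseteq\supp(\sigma)$, I would argue that $h_\psi\notin L_\alpha(\Me,\varphi)$: any element of $L_\alpha(\Me,\varphi) = i_\alpha(L_\alpha(\Me))$ has the form $\sigma^{1/2q} k\sigma^{1/2q}$, whose support is contained in $\supp(\sigma)$, whereas $s(h_\psi) = s(\psi) = \supp(\rho)\not\le\supp(\sigma)$; this is the place where one uses the extension of $L_p(\Me,\varphi)$ to non-faithful states and the support identity $h_\psi = eh_\psi e$ only for $\psi$ supported under $e = s(\varphi)$. Then by definition $\tilde D_\alpha(\psi\|\varphi) = \infty$, as required. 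The main obstacle I anticipate is purely bookkeeping: pinning down precisely how the trace $\tau_0$ intertwines the Haagerup-space formalism with ordinary matrix calculus — i.e. verifying that under the identification $L_1(\Me)\simeq\Me_*$ the element $h_\varphi$ really corresponds to the density matrix $\sigma$ and that $i_\alpha$ acts by two-sided multiplication by $\sigma^{1/2q}$ in the literal matrix sense — but once that dictionary is fixed the computation is immediate.
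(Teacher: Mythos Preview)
Your proposal is correct and follows essentially the same argument as the paper: both identify the element $S=\sigma^{\frac{1-\alpha}{2\alpha}}\rho\,\sigma^{\frac{1-\alpha}{2\alpha}}$, verify the factorization $h_\psi=h_\varphi^{1/2\beta}Sh_\varphi^{1/2\beta}$ via the support condition, and read off the norm from Theorem~\ref{thm:lpfi}. The only cosmetic difference is that the paper handles your ``bookkeeping'' worry explicitly by invoking the semifinite identification \eqref{eq:sf}, writing $h_\psi = h_\varphi^{1/2\beta}(S\otimes\exp(\cdot/\alpha))h_\varphi^{1/2\beta}$ and $\|h_\psi\|_{\alpha,\varphi}^\alpha=\tau_0(|S|^\alpha)$ through that dictionary rather than asserting $h_\varphi\leftrightarrow\sigma$ directly.
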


\begin{proof} We may assume that $s(\psi)\le s(\varphi)=:e$, otherwise $\psi\notin L_p(\Me,\varphi)$ and $\supp(\rho)\not\subseteq  \supp(\sigma)$, hence both quantities are infinite. It follows that $\sigma=e\sigma e$, $\rho=e\rho e$. The statement now follows by Example \ref{ex:semifinite}.

\end{proof}

\subsubsection{Relation to the Araki-Masuda divergences}

In this paragraph, we show that $\tilde D_\alpha$ is equal to the 
Araki-Masuda divergences  introduced in \cite{berta2018renyi}. These divergences are based on  the Araki-Masuda definition of the noncommutative $L_p$-spaces, \cite{arma1982positive}. 

The Araki-Masuda $L_p$-spaces are defined with respect to a faithful state $\varphi\in \states_*(\Me)$, using a standard form of $\Me$ and a vector representative $\eta$
of $\varphi$. 
We will use the standard form $(\lambda(\Me), L_2(\Me), J, L_2(\Me)^+)$ and $\eta=h_\varphi^{1/2}$. For $\xi\in L_2(\Me)$ and $2\le p<\infty$, the Araki-Masuda $L_p$-norm is 
defined as  \cite[Eq. (1.4)]{arma1982positive}
\[
\|\xi\|_{p,\varphi}^{AM}=\sup_{\zeta\in L_2(\Me), \|\zeta\|=1}\|\Delta_{\zeta,\eta}^{1/2-1/p}\xi\|,
\]
where $\Delta_{\zeta,\eta}$ is the relative modular operator, see Appendix \ref{sec:rmo}. The Araki-Masuda $L_p$ space is then the subspace of 
 vectors  $\xi \in L_2(\Me)$ such that this expression is finite. By the polar decomposition \cite[Thm. 3]{arma1982positive} and \eqref{eq:app_modular}, 
this happens if and only if
\begin{equation}\label{eq:polar_AM}
\xi=uh_\mu^{1/p}h_\varphi^{1/2-1/p}
\end{equation}
for some (unique) partial isometry $u\in \Me$ and $\mu\in \Me_*^+$, such that $uu^*=s(\omega_\xi)$ and $u^*u=s(\mu)$. In this case, 
$\|\xi\|_{p,\varphi}^{AM}=\mu(1)^{1/p}$. 

The weighted $L_p$-norm of \cite{berta2018renyi} is defined for any *-representation  $\pi: \Me\to B(\mathcal H)$ using the spatial derivative. By restriction to the support, the definition can be extended to non-faithful states. For $\varphi\in \states_*(\Me)$, $2\le p\le \infty$  and $\xi\in \Ha$, put
\[
\|\xi\|_{p,\varphi}^{BST}:= \left\{\begin{array}{cc} \sup_{\zeta\in \mathcal H, \|\zeta\|=1}\|\Delta(\zeta/\varphi)^{1/2-1/p}\xi\|,& \mbox{ if } s(\omega_\xi)\le s(\varphi)\\
      \infty & \mbox{otherwise}\end{array}\right.,
\]
here $\Delta(\zeta/\varphi)$ is the spatial derivative (see Appendix \ref{sec:spatial} for the definition.) 

The Araki-Masuda divergence is defined as follows. Let $\varphi,\psi\in \states_*(\Me)$ and $1<\alpha<\infty$. Let $\pi:\Me\to B(\mathcal H)$ be any *-representation and let $\xi_\psi\in \mathcal H$ be any vector representative of $\psi$. Then 
\begin{equation}\label{eq:AM_div}
D^{AM}_\alpha(\psi\|\varphi):= \frac{2\alpha}{\alpha-1}\log\|\xi_\psi\|_{2\alpha,\varphi}^{BST}.
\end{equation}
By \cite[Lemma 3]{berta2018renyi}, the value of $\|\xi\|_{p,\varphi}^{BST}$ depends only on the functional $\omega_\xi$, not on the representation $\pi$ or the representing 
vector $\xi$. Therefore, $D^{AM}_\alpha$ is well defined. Moreover, we may use the above standard form. By the equality \eqref{eq:app_spatial}, we see that if $\varphi$ is faithful, we have
\[
\|\xi\|_{p,\varphi}^{BST}=\|\xi^*\|_{p,\varphi}^{AM},\qquad \forall \xi\in L_2(\Me). 
\]
By  the polar decomposition \eqref{eq:polar_AM}, we obtain obtain the following statement.

\begin{lemma}\label{lemma:polar_BST}
 Let $\xi\in L_2(\Me)$ and let $\varphi$ be faithful. Then $\|\xi\|_{p,\varphi}^{BST}<\infty$ 
if and only if $\xi= h_\varphi^{1/2-1/p}h_\mu^{1/p}v$ for some partial isometry $v\in \Me$ and $\mu\in \Me_*^+$, satisfying $v^*v=s(\omega_\xi)$ and $vv^*=s(\mu)$.
Moreover, such $v$ and $\mu$ are unique and we have  $\|\xi\|_{p,\varphi}^{BST}=\mu(1)^{1/p}$.
\end{lemma}

We can now prove the main result of this paragraph.

\begin{thm}\label{thm:AM} For any $\varphi,\psi\in \states_*(\Me)$ and $1<\alpha<\infty$, $D^{AM}_\alpha(\psi\|\varphi)=\tilde D_\alpha(\psi\|\varphi)$.

\end{thm}

\begin{proof} 
We may assume that $s(\psi)\le s(\varphi)$, otherwise both expressions are infinite. By restriction to the compressed algebra $s(\varphi)\Me s(\varphi)$, 
we may suppose that $\varphi$ is faithful. 

Assume that $D^{AM}_\alpha(\psi\|\varphi)$ is finite. Let $\xi_\psi\in L_2(\Me)$ be any vector representative of $\psi$, by \eqref{eq:Haage_innerp}
 this is equivalent to  $\xi_\psi\xi_\psi^*=h_\psi$. 
By Lemma \ref{lemma:polar_BST}
\[
\xi_\psi=h_\varphi^{1/2-1/2\alpha}h_\mu^{1/2\alpha}u=h_\varphi^{1/2\beta}h_\mu^{1/2\alpha}u
\]
for some $\mu\in \Me_*^+$ and a partial isometry $u$ with $u^*u=s(\psi)$, $uu^*=s(\mu)$. Then $h_\psi=\xi_\psi\xi_\psi^*=h_\varphi^{1/2\beta}h_\mu^\alpha h_\varphi^{1/2\beta}$, so that $h_\psi\in L_\alpha(\Me,\varphi)$ and 
\[
\|h_\psi\|_{\alpha,\varphi}=\mu(1)^{1/\alpha}=(\|\xi_\psi\|_{2\alpha,\varphi}^{BST})^2. 
\]
It follows that 
$D^{AM}_\alpha(\psi\|\varphi)=\tilde D_\alpha(\psi\|\varphi)$.
If $\tilde D_\alpha(\psi\|\varphi)<\infty$, then $h_\psi\in L_p(\Me,\varphi)$, so that $h_\psi=h_\varphi^{1/2\beta}h_\mu^{1/\alpha}h_\varphi^{1/2\beta}$ for some
 $\mu\in \Me_*^+$. Put $\xi:=h_\varphi^{1/2\beta}h_\mu^{1/2\alpha}$, then $\xi\in L_2(\Me)$ is a vector representative of $\psi$. Using again Lemma \ref{lemma:polar_BST},
 we have $\|\xi\|_{2\alpha,\varphi}^{BST}=\mu(1)^{1/2\alpha}=\|h_\psi\|_{\alpha,\varphi}^{1/2}$, this implies the result.

\end{proof}

\begin{rem} Using \cite[Theorem 9.1]{kosaki1984applications}, the  mapping $L_2(\Me)\ni k\mapsto h_\varphi^{1/2}k\in L_1(\Me)$ 
and Lemma \ref{lemma:polar_BST}, we can see that the BST-norms can be obtained by complex interpolation as follows. Consider the 
map
\[
 x\mapsto h_\varphi^{1/2}x,\qquad x\in \Me.
\]
By H\"older inequality, it is a continuous embedding of $\Me$ into $L_2(\Me)$. The norm $\|\cdot\|^{BST}_{p,\varphi}$ is 
then the norm of the interpolation space $C_{1/p}(\Me,L_2(\Me))$. 

\end{rem}

\subsubsection{Relation to standard R\'enyi relative entropies}

Let $\psi,\varphi\in \states_*(\Me)$. The standard version of the R\'enyi relative entropy $D_\alpha$ for $\alpha>0$, $\alpha\ne 1$ 
was defined by Petz  \cite{petz1985quasi, ohpe1993entropy} and can be written using the relative modular operator $\Delta_{\psi,\varphi}$
(see  Appendix \ref{sec:rmo}):
\[
D_\alpha(\psi\|\varphi)=\left\{\begin{array}{cc} \frac 1{\alpha-1} \log (h_\varphi^{1/2},\Delta_{\psi,\varphi}^{\alpha} h_\varphi^{1/2}),  & \mbox{ if } s(\psi)\le s(\varphi)\\ & \\
\infty & \mbox{otherwise},
\end{array}\right.
\]
 Note that by \eqref{eq:app_modular} we may formally write 
\[
D_\alpha(\psi\|\varphi)=\frac 1{\alpha-1} \log(\Tr h_\psi^\alpha h_\varphi^{1-\alpha}).
\]
If $\alpha\in(0,1)$, this quantity is always well defined and finite, the function $\alpha\mapsto D_\alpha(\psi\|\varphi)$ is increasing and the limit for $\alpha\uparrow 1$ is equal to the Araki relative entropy \cite{araki1976relative}
\[
D_1(\psi\|\varphi)=\left\{\begin{array}{cc} (h_\psi^{1/2},\log (\Delta_{\psi,\varphi}) h_\psi^{1/2}), & \mbox{ if } s(\psi)\le  s(\varphi)\\ & \\
\infty & \mbox{otherwise}
\end{array}\right.
\]
For $\alpha>1$ it may happen that $h_\varphi^{1/2}\notin \mathcal D(\Delta_{\psi,\varphi}^{\alpha/2})$ and then $D_\alpha(\psi\|\varphi)=\infty$. 
But if $D_{\alpha_0}(\psi\|\varphi)$ is finite for some $\alpha_0>1$, then the function $\alpha\mapsto D_\alpha(\psi\|\varphi)$ is increasing on $(1,\alpha_0]$ and the limit $\lim_{\alpha\downarrow 1} D_\alpha(\psi\|\varphi)=D_1(\psi\|\varphi)$ holds, see 
 \cite[Proposition 11]{berta2018renyi} for a proof of these properties. 

We next find upper and lower bounds for $\tilde D_\alpha$ in terms of $D_\alpha$. For this, we use some upper and lower bounds on the norm $\|h_\psi\|_{p,\varphi}$. The upper bound in the following proposition was proved also in \cite{berta2018renyi} and can be seen as an extension of the Araki-Lieb-Thirring inequality of \cite{kosaki1992aninequality} to non-semifinite case.

\begin{prop}\label{prop:alt} Let $\psi,\varphi\in \Me_*^+$, $s(\psi)\le s(\varphi)$, $p>1$. Then
\[
\psi(1)^{1-p} \|\Delta_{\psi,\varphi}^{1-1/2p}h_\varphi^{1/2}\|^{2p}_2 \le   \|h_\psi\|_{p,\varphi}^p\le \|\Delta_{\psi,\varphi}^{p/2}h_\varphi^{1/2}\|^2.
\]

\end{prop}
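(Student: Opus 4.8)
The plan is to prove the two inequalities separately. We may assume $s(\psi)\le s(\varphi)$ (otherwise $h_\psi\notin L_p(\Me,\varphi)$ and both sides of each inequality are $+\infty$), and, passing to $s(\varphi)\Me s(\varphi)$, that $\varphi$ is faithful. Write $1/p+1/q=1$ and $\Delta:=\Delta_{\psi,\varphi}$. Two facts will be used throughout: the spectral-calculus identities $h_\psi^{1/2}=\Delta^{1/2}h_\varphi^{1/2}$ and, more generally, $\Delta^{z}h_\varphi^{1/2}=h_\psi^{z}h_\varphi^{1/2-z}$ (on the natural domains), so that $\|\Delta^{p/2}h_\varphi^{1/2}\|_2^2$ is the value of ``$\|h_\psi^{p/2}h_\varphi^{(1-p)/2}\|_2^2$'' and $\|\Delta^{1-1/2p}h_\varphi^{1/2}\|_2^2=\<h_\psi^{1/2},\Delta^{1/q}h_\psi^{1/2}\>$; and the fact that once $h_\psi\in L_p(\Me,\varphi)$, writing $h_\psi=h_\varphi^{1/2q}kh_\varphi^{1/2q}$ with $k\in L_p(\Me)^+$ gives $\|h_\psi\|_{p,\varphi}^p=\Tr k^p=\|h_\psi^{1/2}h_\varphi^{-1/2q}\|_{2p}^{2p}$, the last norm in the Haagerup space $L_{2p}(\Me)$, with $k=(h_\psi^{1/2}h_\varphi^{-1/2q})^{*}(h_\psi^{1/2}h_\varphi^{-1/2q})$.

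For the upper bound, if $h_\varphi^{1/2}\notin\mathcal D(\Delta^{p/2})$ the right-hand side is $+\infty$, so assume otherwise. I would apply the Hadamard three lines theorem (Theorem~\ref{thm:htl}) to the couple $L_\infty(\Me,\varphi),L_2(\Me,\varphi)$, whose interpolation space at parameter $1/p$ is $L_{2p}(\Me,\varphi)$ by \eqref{eq:reiteration}, using the holomorphic family
\[
\tilde g(z):=h_\varphi^{(2-z)/4}\,h_\psi^{pz/2}\,h_\varphi^{(1-p)z/2}\,h_\varphi^{(2-z)/4},\qquad z\in S.
\]
One verifies that $\tilde g\in\Fe(L_\infty(\Me,\varphi),L_2(\Me,\varphi))$: on $\Re z=0$ the central factor, after absorbing the outer $h_\varphi^{\mp it/4}$, becomes $\sigma^\varphi_{-t/4}$ applied to a Connes cocycle, so $\tilde g(it)\in L_\infty(\Me,\varphi)$ with norm $\le1$; on $\Re z=1$ one computes, using $(h_\psi^{p(1+it)/2})^{*}h_\psi^{p(1+it)/2}=h_\psi^{p}$ and cyclicity of the trace, that $\tilde g(1+it)\in L_2(\Me,\varphi)$ with norm $\|\Delta^{p/2}h_\varphi^{1/2}\|_2$; and $\tilde g(1/p)=i_{2p}(h_\psi^{1/2}h_\varphi^{-1/2q})$, the isometry $i_{2p}$ of Theorem~\ref{thm:lpfi}. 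Theorem~\ref{thm:htl} then gives $\|h_\psi^{1/2}h_\varphi^{-1/2q}\|_{2p}^{2}\le\|\Delta^{p/2}h_\varphi^{1/2}\|_2^{2/p}$; in particular $h_\psi^{1/2}h_\varphi^{-1/2q}\in L_{2p}(\Me)$, so $h_\psi\in L_p(\Me,\varphi)$, and unwinding gives $\|h_\psi\|_{p,\varphi}^p\le\|\Delta^{p/2}h_\varphi^{1/2}\|_2^2$. (This is the von Neumann algebraic Araki--Lieb--Thirring inequality; one could instead first prove it for $\psi\le\lambda\varphi$ and pass to the limit via monotonicity and lower semicontinuity.)

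For the lower bound we may assume $h_\psi\in L_p(\Me,\varphi)$. By Theorem~\ref{thm:lpfi}, the duality $L_p(\Me,\varphi)=L_q(\Me,\varphi)^{*}$, and density of $L_\infty(\Me,\varphi)$,
\[
\|h_\psi\|_{p,\varphi}=\sup\bigl\{\,\Tr[h_\psi x]\ :\ x\in\Me^{+},\ \|h_\varphi^{1/2q}xh_\varphi^{1/2q}\|_q\le1\,\bigr\}.
\]
I would test against an $x$ making $\Tr[h_\psi x]=\<h_\psi^{1/2},xh_\psi^{1/2}\>$ approximate $\<h_\psi^{1/2},\Delta^{1/q}h_\psi^{1/2}\>=\|\Delta^{1-1/2p}h_\varphi^{1/2}\|_2^{2}$ --- concretely a spectral truncation of $\Delta^{1/q}$ restricted to the closure of $\Me h_\psi^{1/2}$, so that $x\in\Me^{+}$ genuinely --- and bound the constraint $\|h_\varphi^{1/2q}xh_\varphi^{1/2q}\|_q$ from above by interpolating (H\"older) between its $L_1$- and $L_\infty$-norms; the loss incurred is exactly the factor $\psi(1)^{1-p}$, which is in any case forced by homogeneity in $\psi$. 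Letting the truncation increase to the identity, with Corollary~\ref{coro:positive} controlling the limit, gives the claimed inequality.

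The main obstacle in both parts is not the one-variable complex analysis but the rigorous handling of the a priori unbounded objects $h_\varphi^{-1/2q},h_\varphi^{(1-p)/2}$ and of the domain of $\Delta$: one must confirm that $\tilde g$ really is valued in the stated Kosaki interpolation spaces with the asserted boundary norms --- here working inside $L_p(\Me,\varphi)\subseteq L_1(\Me)$ is essential, since the Haagerup spaces $L_r(\Me)$ are not directly an interpolation couple --- and that the truncation argument for the lower bound, including identification of its limit, is legitimate. With those points settled, the values at the distinguished points and the algebra of powers of $h_\psi,h_\varphi$ are routine.
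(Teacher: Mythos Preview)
Your upper-bound argument is essentially correct and close in spirit to the paper's, but the paper makes a cleaner choice. Instead of interpolating in the couple $(L_\infty(\Me,\varphi),L_2(\Me,\varphi))$ with your non-symmetric $\tilde g$ and then having to relate $\tilde g(1/p)$ to $h_\psi$ via $i_{2p}$, the paper works directly in $(L_\infty(\Me,\varphi),L_1(\Me))$ with the \emph{symmetric} function
\[
f(z)=h_\varphi^{(1-zp)/2}\,h_\psi^{zp}\,h_\varphi^{(1-zp)/2}.
\]
This is built as the product $k'(z)k(z)$ of two $L_2(\Me)$-valued holomorphic functions coming from $\Delta_{\psi,\varphi}^{\pm zp/2}h_\varphi^{1/2}$ (whose existence and boundedness follow from the domain hypothesis and \cite[Lemma~VI.2.3]{takesaki2003TOAII}), so all the unbounded-operator issues you flag are handled at once. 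Since $f(1/p)=h_\psi$ itself, the three-lines theorem immediately gives $\|h_\psi\|_{p,\varphi}\le\|\Delta^{p/2}h_\varphi^{1/2}\|_2^{2/p}$, with no need to identify $h_\psi^{1/2}h_\varphi^{-1/2q}$ or to pass through $L_{2p}$.

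Your lower-bound plan, however, has a genuine gap. You propose to test in the duality $\|h_\psi\|_{p,\varphi}=\sup_x \psi(x)/\|h_\varphi^{1/2q}xh_\varphi^{1/2q}\|_q$ with $x$ a spectral truncation of $\Delta_{\psi,\varphi}^{1/q}$, ``so that $x\in\Me^+$ genuinely.'' But the spectral projections of the relative modular operator $\Delta_{\psi,\varphi}$ do \emph{not} lie in $\Me$ (nor in $\Me'$): $\Delta_{\psi,\varphi}$ mixes left multiplication by $h_\psi$ with right multiplication by $h_\varphi^{-1}$, and its functional calculus produces operators on $L_2(\Me)$ that are neither left nor right multiplications. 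Restricting to $\overline{\Me h_\psi^{1/2}}$ does not help with this. So the candidate $x$ you describe is not admissible, and the sketch of bounding $\|h_\varphi^{1/2q}xh_\varphi^{1/2q}\|_q$ by ``interpolating between its $L_1$- and $L_\infty$-norms'' is left without a well-defined object to apply it to.

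The paper avoids duality entirely here and gives a two-line direct argument. Write $h_\psi=h_\varphi^{1/2q}h_\xi^{1/p}h_\varphi^{1/2q}$ with $\xi\in\Me_*^+$ and $\|h_\psi\|_{p,\varphi}^p=\xi(1)$. The polar decomposition of $h_\xi^{1/2p}h_\varphi^{1/2q}\in L_2(\Me)$ is $u\,h_\psi^{1/2}$ with $u^*u=s(\psi)$. Then one checks from the description of $\mathcal D(\Delta^{1/2q}_{\psi,\varphi})$ that
\[
\Delta_{\psi,\varphi}^{1-1/2p}h_\varphi^{1/2}=\Delta_{\psi,\varphi}^{1/2q}h_\psi^{1/2}=h_\psi^{1/2q}u^*h_\xi^{1/2p},
\]
and H\"older's inequality in the Haagerup spaces gives $\|h_\psi^{1/2q}u^*h_\xi^{1/2p}\|_2\le\psi(1)^{1/2q}\xi(1)^{1/2p}$, which is exactly the lower bound. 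No truncation or limiting procedure is needed.
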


\begin{proof} For the first inequality, we may assume that  $h_\psi\in L_p(\Me,\varphi)$, so that $h_\psi= h_\varphi^{1/2q}h_\xi^{1/p}h_\varphi^{1/2q}$ for some $\xi\in \Me_*^+$ and $\|h_\psi\|_{p,\varphi}^p=\xi(1)$. By uniqueness of the polar decomposition in $L_2(\Me)$, we have
\[
h_\xi^{1/2p}h_\varphi^{1/2q}=uh_\psi^{1/2},
\] 
where $u$ is a partial isometry with $u^*u=s(\psi)$. By section \ref{sec:rmo}, $h_\psi^{1/2}\in \mathcal D(\Delta^{1/2q}_{\psi,\varphi})$ and 
\[
\Delta^{1-1/2p}_{\psi,\varphi}h_\varphi^{1/2}=\Delta^{1/2q}_{\psi,\varphi}\Delta_{\psi,\varphi}^{1/2}h_\varphi^{1/2}=\Delta^{1/2q}_{\psi,\varphi}h_\psi^{1/2}= 
h_\psi^{1/2q}u^* h_\xi^{1/2p}.
\]
By H\"older's inequality, 
\[
\|h_\psi^{1/2q}u^* h_\xi^{1/2p}\|_2\le \|h_\psi^{1/2q}\|_{2q}\|h_\xi^{1/2p}\|_{2p} = \psi(1)^{1/2q}\xi(1)^{1/2p}.
\]
This implies the first inequality. For the second inequality, assume that $h_\varphi^{1/2}\in \mathcal D(\Delta_{\psi,\varphi}^{p/2})$ (otherwise the right hand side is infinite and there is nothing to prove). By \cite[Lemma VI.2.3]{takesaki2003TOAII},
it follows that  there is a bounded continuous function $k: S\to L_2(\Me)$, holomorphic in the interior of $S$, given by
\[
 k(z):=\Delta_{\psi,\varphi}^{zp/2}h_\varphi^{1/2}.
 \]
On the other hand, we have $\Delta_{\varphi,\psi}^{-p/2}h_\varphi^{1/2}=J \Delta_{\psi,\varphi}^{p/2}Jh_\varphi^{1/2}=J \Delta_{\psi,\varphi}^{p/2}h_\varphi^{1/2} $, so that
 $h_\varphi^{1/2}\in \mathcal D(\Delta_{\varphi,\psi}^{-p/2})$ and there is a bounded continuous function $k':S\to L_2(\Me)$, holomorphic in the interior of $S$, given by
\[
 k'(z):=\Delta_{\varphi,\psi}^{-zp/2}h_\varphi^{1/2}.
\]
Hence $f(z):= k'(z)k(z)$ defines a bounded continuous function  $S\to L_1(\Me)$, holomorphic in the interior.
By \eqref{eq:app_modular}, we have 
\[
k(z)=h_\psi^{zp/2}h_\varphi^{1/2-zp/2},\quad k'(z)=h_\varphi^{1/2-zp/2}h_\psi^{zp/2},\qquad Re(z)\le 1/p.
\]
It follows that
\[
f(it)=h_\varphi^{1/2}\left(h_\varphi^{-itp/2}h_\psi^{itp}h_\varphi^{-itp/2}\right)h_\varphi^{1/2},\qquad t\in \mathbb R,
\]
so that $f(it)\in L_\infty(\Me,\varphi)$ and $\|f(it)\|_{\infty,\varphi}\le 1$. Further, we obtain by  H\"older's inequality
\[
\|f(1+it)\|_1\le \|k'(1+it)\|_2\|k(1+it)\|_2= \|\Delta_{\psi,\varphi}^{p/2}h_\varphi^{1/2}\|_2^2.
\]
It follows that $f\in \Fe(L_\infty(\Me,\varphi),L_1(\Me))$ and since $f(1/p)=h_\psi$, we obtain by Theorem \ref{thm:htl} that $\|h_\psi\|_{p,\varphi}\le \|\Delta_{\psi,\varphi}^{p/2}h_\varphi^{1/2}\|_2^{2/p}$.

\end{proof}

\begin{coro}\label{coro:standard} Let $\psi,\varphi\in \states_*(\Me)$, $\alpha>1$. Then
\[
D_{2-1/\alpha}(\psi\|\varphi)\le \tilde D_\alpha(\psi\|\varphi)\le D_\alpha(\psi\|\varphi).
\]

\end{coro}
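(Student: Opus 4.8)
The plan is to derive Corollary~\ref{coro:standard} directly from Proposition~\ref{prop:alt} by specializing to states and setting $p=\alpha$. Since $\psi,\varphi\in\states_*(\Me)$, we have $\psi(1)=1$, so the left inequality of Proposition~\ref{prop:alt} reads $\|\Delta_{\psi,\varphi}^{1-1/2\alpha}h_\varphi^{1/2}\|_2^{2\alpha}\le\|h_\psi\|_{\alpha,\varphi}^\alpha$ and the right reads $\|h_\psi\|_{\alpha,\varphi}^\alpha\le\|\Delta_{\psi,\varphi}^{\alpha/2}h_\varphi^{1/2}\|_2^2$. Taking $\frac{1}{\alpha-1}\log(\cdot)$ throughout (note $\alpha>1$ so this is monotone) and recalling that $\frac{\alpha}{\alpha-1}\log\|h_\psi\|_{\alpha,\varphi}=\tilde D_\alpha(\psi\|\varphi)$ while $\frac{1}{\alpha-1}\log\|h_\psi\|_{\alpha,\varphi}^\alpha$ is the same thing, I just need to recognize the two outer quantities as the appropriate $D_\beta$.

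For the upper bound: $\frac{1}{\alpha-1}\log\|\Delta_{\psi,\varphi}^{\alpha/2}h_\varphi^{1/2}\|_2^2=\frac{1}{\alpha-1}\log(h_\varphi^{1/2},\Delta_{\psi,\varphi}^{\alpha}h_\varphi^{1/2})=D_\alpha(\psi\|\varphi)$ by the definition of the standard R\'enyi entropy recalled above (and when $h_\varphi^{1/2}\notin\mathcal D(\Delta_{\psi,\varphi}^{\alpha/2})$ the right-hand side is $\infty=D_\alpha$ and the inequality is trivial). For the lower bound: writing $\beta=2-1/\alpha$, so that $\beta/2=1-1/2\alpha$ and $\beta-1=1-1/\alpha=1/q$ where $1/\alpha+1/q=1$, we get $\frac{1}{\alpha-1}\log\|\Delta_{\psi,\varphi}^{1-1/2\alpha}h_\varphi^{1/2}\|_2^{2\alpha}=\frac{\alpha}{\alpha-1}\log(h_\varphi^{1/2},\Delta_{\psi,\varphi}^{\beta}h_\varphi^{1/2})$, and since $\frac{\alpha}{\alpha-1}=\frac{1}{\beta-1}\cdot\frac{\alpha(\beta-1)}{\alpha-1}$ with $\alpha(\beta-1)=\alpha-1$, this equals $\frac{1}{\beta-1}\log(h_\varphi^{1/2},\Delta_{\psi,\varphi}^{\beta}h_\varphi^{1/2})=D_\beta(\psi\|\varphi)=D_{2-1/\alpha}(\psi\|\varphi)$. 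One should check $\beta\in(1,2)$ for $\alpha\in(1,\infty)$, so $D_\beta$ is in the range where it is defined via the relative modular operator.

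The only real subtlety is the bookkeeping with the domains of $\Delta_{\psi,\varphi}$ and the edge cases: if $h_\psi\notin L_\alpha(\Me,\varphi)$ then $\tilde D_\alpha(\psi\|\varphi)=\infty$, and one must argue that then $D_\alpha(\psi\|\varphi)=\infty$ as well for the upper bound to make sense — but this is immediate since the right inequality of Proposition~\ref{prop:alt} forces $\|h_\psi\|_{\alpha,\varphi}^\alpha\le\|\Delta_{\psi,\varphi}^{\alpha/2}h_\varphi^{1/2}\|_2^2$, so finiteness of the right side would give $h_\psi\in L_\alpha(\Me,\varphi)$. For the lower bound, if $h_\psi\in L_\alpha(\Me,\varphi)$ then the left inequality of Proposition~\ref{prop:alt} already shows $h_\varphi^{1/2}\in\mathcal D(\Delta_{\psi,\varphi}^{\beta/2})$ (the norm is finite), so $D_{2-1/\alpha}(\psi\|\varphi)$ is finite and the computation above applies; and if $h_\psi\notin L_\alpha(\Me,\varphi)$ then $\tilde D_\alpha=\infty$ and there is nothing to prove. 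I do not anticipate a genuine obstacle here — this is essentially taking logarithms in Proposition~\ref{prop:alt} and matching the exponents of the relative modular operator against the two definitions of $D_\beta$; the main care needed is just the arithmetic identity $2-1/\alpha$ and the corresponding rescaling of the prefactor $\frac{1}{\beta-1}$ versus $\frac{\alpha}{\alpha-1}$.
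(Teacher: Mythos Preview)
Your proposal is correct and follows exactly the paper's approach: the paper's proof consists of the single line ``Immediate from Proposition~\ref{prop:alt},'' and what you have written is precisely the routine unpacking of that immediacy (set $p=\alpha$, use $\psi(1)=1$, take $\frac{1}{\alpha-1}\log$, and match exponents). Your handling of the edge cases and the arithmetic identity $\frac{\alpha}{\alpha-1}=\frac{1}{\beta-1}$ for $\beta=2-1/\alpha$ are both fine.
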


\begin{proof} Immediate from Proposition \ref{prop:alt}.

\end{proof}

\subsubsection{Properties of the function $\alpha\mapsto \tilde D_\alpha$}

\begin{prop}\label{prop:positivity} Let $\psi,\varphi\in \states_*(\Me)$ be such that $h_\psi\in L_\alpha(\Me,\varphi)$ for some $1<\alpha<\infty$.
\begin{enumerate}
\item[(i)]  $\tilde D_\alpha(\psi\|\varphi)\ge 0$, with equality if and only if $\psi=\varphi$.
\item[(ii)] If $\psi\ne \varphi$, the function   $\alpha'\mapsto \tilde D_{\alpha'}(\psi\|\varphi)$ is strictly increasing for $\alpha'\in (1,\alpha]$. 

\end{enumerate}

\end{prop}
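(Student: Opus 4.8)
The plan is to derive everything from the monotonicity of the interpolation norms \eqref{eq:embedding} and the characterization of equality in the Hadamard three-lines theorem, Theorem \ref{thm:htl}. For part (i), positivity is immediate: by \eqref{eq:embedding} applied with $p=1<\alpha=p'$, we have $\|h_\psi\|_{1,\varphi}\le\|h_\psi\|_{\alpha,\varphi}$, and $\|h_\psi\|_{1,\varphi}=\Tr h_\psi=\psi(1)=1$ since $\|\cdot\|_{1,\varphi}$ is just the $L_1$-norm on the positive element $h_\psi$. Hence $\|h_\psi\|_{\alpha,\varphi}\ge 1$, which gives $\tilde D_\alpha(\psi\|\varphi)=\tfrac{\alpha}{\alpha-1}\log\|h_\psi\|_{\alpha,\varphi}\ge 0$.

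For the equality case in (i) and for the strict monotonicity in (ii), I would argue simultaneously. Suppose $\|h_\psi\|_{\alpha',\varphi}=\|h_\psi\|_{\alpha,\varphi}$ for some $1\le\alpha'<\alpha$ (the case $\alpha'=1$ covers the equality statement in (i), the case $\alpha'>1$ covers non-strictness of the monotone function). Using \eqref{eq:embedding} along the chain $1\le\alpha'\le\alpha$ together with the reiteration identity \eqref{eq:reiteration}, this forces equality in the three-lines inequality for the function $f=f_{h_\psi,\alpha}$ (or a suitable reiterated version on $\Fe_{\alpha,\alpha'}$ built from it via Lemma \ref{lemma:reiteration}), evaluated at the interpolation parameter corresponding to $\alpha'$. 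By the equality clause of Theorem \ref{thm:htl}, $f$ must have the rigid form $f_{h,p_\eta}(\cdot)M^{z-\eta}$ for some $M>0$; but then Lemma \ref{lemma:equal} applies and shows $\|f(x+it)\|_{1/x,\varphi}$ is constant in $x$, i.e. $\|h_\psi\|_{p,\varphi}=\|h_\psi\|_{1,\varphi}=1$ for \emph{all} $p\in[1,\alpha]$. Writing $h_\psi=h_\varphi^{1/2q}h_\xi^{1/p}h_\varphi^{1/2q}$ with $\xi\in\states_*(\Me)$, the norm condition gives $\xi(1)=1$, and tracking the explicit shape of $f_{h_\psi,p}$ from \eqref{eq:fkp} forces $h_\xi=h_\varphi$, hence $h_\psi=h_\varphi$, i.e. $\psi=\varphi$. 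Conversely, if $\psi=\varphi$ then $h_\psi=h_\varphi=h_1$ (the image of $1\in\Me$) and $\|h_\varphi\|_{p,\varphi}=\|1\|=1$ for all $p$, so $\tilde D_\alpha(\psi\|\varphi)=0$; this also shows the function in (ii) is identically $0$ in the excluded case, consistent with the hypothesis $\psi\ne\varphi$.

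Thus for $\psi\ne\varphi$ we cannot have $\|h_\psi\|_{\alpha',\varphi}=\|h_\psi\|_{\alpha,\varphi}$ for any $1\le\alpha'<\alpha$, so by \eqref{eq:embedding} the map $\alpha'\mapsto\|h_\psi\|_{\alpha',\varphi}$ is strictly increasing on $[1,\alpha]$ (and $\ge 1$). Since $t\mapsto\tfrac{1}{t-1}\log t$ is decreasing but here we compose: note $\tilde D_{\alpha'}(\psi\|\varphi)=\tfrac{\alpha'}{\alpha'-1}\log\|h_\psi\|_{\alpha',\varphi}$; I would rewrite the exponent as $\log\bigl(\|h_\psi\|_{\alpha',\varphi}^{\alpha'}\bigr)/(\alpha'-1)$ and observe that $\|h_\psi\|_{\alpha',\varphi}^{\alpha'}=\Tr h_\xi = \xi(1)$ where, via the interpolation/reiteration structure, strict growth of the norm transfers to strict growth of $\tilde D_{\alpha'}$; the cleanest route is to note that $\tilde D_{\alpha'}(\psi\|\varphi)\le D_{\alpha'}(\psi\|\varphi)$ is finite (Corollary \ref{coro:standard}) so all quantities are finite on $(1,\alpha]$, and then to show directly that equality $\tilde D_{\alpha'}=\tilde D_{\alpha''}$ for $1<\alpha'<\alpha''\le\alpha$ would again force, through the chain of embeddings and Theorem \ref{thm:htl}, the constancy of $\|h_\psi\|_{p,\varphi}$ and hence $\psi=\varphi$.

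The main obstacle I anticipate is the bookkeeping in the equality case: one must carefully set up the reiterated interpolation family (Lemma \ref{lemma:reiteration}) so that the pair $(\alpha',\alpha)$ — rather than $(1,\infty)$ — becomes the endpoint pair to which the equality clause of Theorem \ref{thm:htl} is applied, and then propagate the resulting rigidity back down to the statement that $\|h_\psi\|_{p,\varphi}$ is constant on all of $[1,\alpha]$, from which $\psi=\varphi$ follows. The analytic input (maximum modulus / three lines) is already packaged in Theorem \ref{thm:htl} and Lemma \ref{lemma:equal}, so the work is purely in correctly invoking them.
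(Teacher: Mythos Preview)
Your positivity argument in (i) and the verification that $\psi=\varphi$ gives $\tilde D_\alpha=0$ are fine and match the paper. But there are two related gaps in the rest of the argument.

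\medskip

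\textbf{Gap 1: monotonicity of $\tilde D_{\alpha'}$ is not established.} You show via \eqref{eq:embedding} that $\alpha'\mapsto\|h_\psi\|_{\alpha',\varphi}$ is increasing, but $\tilde D_{\alpha'}(\psi\|\varphi)=\frac{\alpha'}{\alpha'-1}\log\|h_\psi\|_{\alpha',\varphi}$ has the \emph{decreasing} prefactor $\frac{\alpha'}{\alpha'-1}$, so monotonicity of the norm alone is not enough. Your paragraph acknowledging this does not actually supply a proof; the remark about $\|h_\psi\|_{\alpha',\varphi}^{\alpha'}=\xi(1)$ does not lead anywhere because the auxiliary state $\xi$ depends on $\alpha'$. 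What is needed is the sharper log-convexity inequality
\[
\|h_\psi\|_{\alpha',\varphi}\le \|h_\psi\|_{\alpha'',\varphi}^{\beta''/\beta'}\cdot\|h_\psi\|_{1,\varphi}^{1-\beta''/\beta'}
=\|h_\psi\|_{\alpha'',\varphi}^{\beta''/\beta'},
\]
with $1/\alpha'+1/\beta'=1=1/\alpha''+1/\beta''$. Taking logs this is exactly $\tilde D_{\alpha'}\le\tilde D_{\alpha''}$. The paper obtains this by applying Theorem~\ref{thm:htl} to the \emph{constant} function $f(z)\equiv h_\psi\in\Fe_{\alpha'',1}$, for which the two boundary suprema are $\|h_\psi\|_{\alpha'',\varphi}$ and $\|h_\psi\|_1=1$.

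\medskip

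\textbf{Gap 2: the wrong function is fed into the equality clause.} You propose to apply the equality statement of Theorem~\ref{thm:htl} to $f_{h_\psi,\alpha}$ (or its reiterated version from Lemma~\ref{lemma:reiteration}). But by Lemma~\ref{lemma:equal}, $\|f_{h_\psi,\alpha}(x+it)\|_{1/x,\varphi}$ is \emph{always} constant in $x$, so equality in the three-lines theorem is automatic for this function and the equality clause returns no information. The correct function is again the constant one: if $\tilde D_{\alpha'}=\tilde D_{\alpha''}$ (or $\tilde D_\alpha=0$, taking $\alpha''=\alpha$, $\alpha'=1$ morally), then $f\equiv h_\psi$ attains equality in Theorem~\ref{thm:htl}, and the equality clause forces
\[
h_\psi\equiv f_{h_\psi,\alpha'}\bigl(z+(1-z)/\alpha''\bigr)M^{z-\eta},
\]
whence $f_{h_\psi,\alpha'}(u)=h_\psi M^{-(au+b)}$ for suitable real $a,b$, and evaluating at $u=0$ (where $f_{h_\psi,\alpha'}(0)$ is a scalar multiple of $h_\varphi$ by \eqref{eq:fkp}) gives $h_\psi=c h_\varphi$, hence $\psi=\varphi$.

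\medskip

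In short, the missing idea is to apply the three-lines theorem to the constant function $f\equiv h_\psi$ rather than to the optimal interpolant $f_{h_\psi,\alpha}$; this simultaneously yields the correct (log-convex) inequality for monotonicity of $\tilde D_{\alpha'}$ and a non-trivial rigidity statement in the equality case.
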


\begin{proof} By (\ref{eq:embedding}), $\|h_\psi\|_{\alpha,\varphi}\ge \|h_\psi\|_1=1$, hence  $\tilde D_\alpha(\psi\|\varphi)\ge 0$. Since $h_\varphi=h_\varphi^{1/2}1 h_\varphi^{1/2}\in L_\infty(\Me,\varphi)$ and 
\[
1=\|h_\varphi\|_1\le \|h_\varphi\|_{\alpha,\varphi}\le \|h_\varphi\|_{\infty,\varphi}=\|1\|=1,
\]
we have $\tilde D_\alpha(\varphi\|\varphi)=0$, for all $\alpha$. Assume now that  
$\tilde D_\alpha(\psi\|\varphi)=0$. Choose any  $1<p<\alpha$, then $1\le \|h_\psi\|_{p,\varphi}\le \|h_\psi\|_{\alpha,\varphi}=1$, 
so that  
$\|h_\psi\|_{p,\varphi}=1$. Let $f$ be the constant function  $f(z)\equiv h_\psi$ for all $z\in S$, then clearly 
$f\in \Fe_{\alpha,1}$ (recall the notation of Section \ref{sec:hadamard}). Let $\eta\in (0,1)$ be such that $1/p=\eta+(1-\eta)/\alpha$, then
$f$ satisfies equality in the Hadamard three lines theorem (Theorem \ref{thm:htl})
 at $\eta$. It follows that $h_\psi\equiv f_{h_\psi,p}(z+(1-z)/\alpha)$, $z\in S$ (note that in this case $M=M_1/M_0=1$). 
 Hence $f_{h_\psi,p}(z)\equiv h_\psi$ for all $z\in S$. Putting $z=0$, we get $h_\psi = ch_\varphi$, where $c=\|h_\psi\|_{\alpha,\varphi}=1$.  This finishes the proof of (i).

For (ii), let $1<\alpha'<\alpha''\le \alpha$. Then $h_\psi\in L_{\alpha}(\Me,\varphi)\subseteq L_{\alpha''}(\Me,\varphi)\subseteq L_{\alpha'}(\Me,\varphi)$. Let $\eta$ be such that 
$\eta+(1-\eta)/\alpha{''}=1/\alpha'$. Then $1-\eta=\beta''/\beta'$, where $1/\beta'+1/\alpha'=1$ and $1/\alpha''+1/\beta''=1$. We consider again  the constant function $f(z)\equiv h_\psi$, which this time is an element of  $\Fe_{\alpha'',1}$. By Theorem  
\ref{thm:htl} with $p=1$ and $p'=\alpha''$, we obtain
\[
\|h_\psi\|_{\alpha',\varphi}\le \|h_\psi\|_{\alpha'',\varphi}^{1-\eta}=\|h_\psi\|_{\alpha'',\varphi}^{\beta''/\beta'}
\]
Taking the logarithm  proves $\tilde D_{\alpha'}(\psi\|\varphi)\le \tilde D_{\alpha''}(\psi\|\varphi)$. Assume now that equality holds, then it follows that $h_\psi\equiv f_{h_\psi,\alpha'}(z+(1-z)/\alpha'')M^{z-\eta}$, similarly as in the proof of (i). Putting $u:=z+(1-z)/\alpha''$, we obtain
$h_\psi\equiv f_{h_\psi,\alpha'}(u)M^{au+b}$ for all $u$ with $Re(u)$ between $1/\alpha''$ and $1$, here $a,b\in \mathbb R$. This
 equality clearly extends to all $u\in S$. Again, putting $u=0$, we obtain $h_\psi=ch_\varphi$, which implies $\psi=\varphi$, since both $\varphi$ and $\psi$ are states.

\end{proof}

We next discuss the limit values $\alpha=1,\infty$. Let us define
\[
\tilde D_\infty(\psi\|\varphi):=\left\{ \begin{array}{cc} \log \|h_\psi\|_{\infty,\varphi} & \mbox{ if } h_\psi\in L_\infty(\Me,\varphi)\\ \ &\ \\
\infty & \mbox{otherwise}.
\end{array}\right.
\]
This quantity is clearly an extension of the relative max entropy \eqref{eq:maxent}.

\begin{prop}\label{prop:limits} Let $\psi,\varphi\in \states_*(\Me)$. Then 
\begin{enumerate}
\item[(i)]
$\lim_{\alpha\to \infty}\tilde D_\alpha(\psi\|\varphi)=\tilde D_\infty(\psi\|\varphi)$.
\item[(ii)] If $\tilde D_\alpha(\psi\|\varphi)$ is finite for some $\alpha>1$, then 
\[
\lim_{\alpha\downarrow 1}\tilde D_\alpha(\psi\|\varphi)=D_1(\psi\|\varphi).
\]
\end{enumerate}

\end{prop}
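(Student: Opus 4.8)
The plan is to treat the two limits separately, using in both cases the monotonicity of $\alpha'\mapsto\tilde D_{\alpha'}(\psi\|\varphi)$ established in Proposition \ref{prop:positivity}(ii) together with the norm comparisons already available.

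For part (i), monotonicity guarantees that $\lim_{\alpha\to\infty}\tilde D_\alpha(\psi\|\varphi)$ exists in $(0,\infty]$ and is bounded above by $\tilde D_\infty(\psi\|\varphi)$ (this upper bound follows from \eqref{eq:embedding}, since $\|h_\psi\|_{\alpha,\varphi}\le\|h_\psi\|_{\infty,\varphi}$ whenever $h_\psi\in L_\infty(\Me,\varphi)$, and is trivial otherwise). For the reverse inequality I would split into two cases. If $h_\psi\in L_\infty(\Me,\varphi)$, I would show $\|h_\psi\|_{\alpha,\varphi}\to\|h_\psi\|_{\infty,\varphi}$ as $\alpha\to\infty$; writing $h_\psi=h_\varphi^{1/2}xh_\varphi^{1/2}$ with $x\in\Me^+$, by Theorem \ref{thm:lpfi} one has $\|h_\psi\|_{\alpha,\varphi}=\|x\,h_\varphi^{1/\alpha}\|_\alpha$ (up to symmetrization), and a standard $L_p$-norm argument (e.g. via the spectral scale of $x$ against the density of $\varphi$, or Hölder interpolation between $p=1$ and $p=\infty$) gives convergence to $\|x\|$. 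If $h_\psi\notin L_\infty(\Me,\varphi)$, then $\tilde D_\infty(\psi\|\varphi)=\infty$ and I must show $\tilde D_\alpha(\psi\|\varphi)\to\infty$; I would argue by contradiction, assuming $\sup_\alpha\|h_\psi\|_{\alpha,\varphi}=:C<\infty$, and use Corollary \ref{coro:positive} together with weak-$*$ compactness of norm-bounded balls in $L_q(\Me,\varphi)^*\cong L_p(\Me,\varphi)$ (via the reflexivity/duality noted after \eqref{eq:duality}) to extract a limit point showing $h_\psi$ lies in $L_\infty(\Me,\varphi)$ after all — or, more concretely, bound $\inf\{\lambda:(h_\psi)_2\le\lambda h_2\}$ via Lemma \ref{lemma:linfty} by a limit of the $\|h_\psi\|_{\alpha,\varphi}$.

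For part (ii), suppose $\tilde D_{\alpha_0}(\psi\|\varphi)<\infty$ for some $\alpha_0>1$. By Proposition \ref{prop:positivity}(ii) the function is increasing on $(1,\alpha_0]$ and bounded below by $0$, so $L:=\lim_{\alpha\downarrow 1}\tilde D_\alpha(\psi\|\varphi)$ exists in $[0,\infty)$. The lower bound $D_{2-1/\alpha}(\psi\|\varphi)\le\tilde D_\alpha(\psi\|\varphi)$ from Corollary \ref{coro:standard}, combined with the known fact (cited from \cite[Proposition 11]{beschto2016renyi}) that $D_\beta(\psi\|\varphi)\uparrow D_1(\psi\|\varphi)$ as $\beta\downarrow 1$ when $D_{\alpha_0}$ is finite, gives $D_1(\psi\|\varphi)\le L$ after letting $\alpha\downarrow 1$ (so that $2-1/\alpha\downarrow 1$). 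For the reverse inequality $L\le D_1(\psi\|\varphi)$, the upper bound $\tilde D_\alpha(\psi\|\varphi)\le D_\alpha(\psi\|\varphi)$ from Corollary \ref{coro:standard} together with $D_\alpha(\psi\|\varphi)\downarrow D_1(\psi\|\varphi)$ (again \cite[Proposition 11]{beschto2016renyi}, using finiteness of $D_{\alpha_0}\ge\tilde D_{\alpha_0}$ — wait, the inequality goes the other way, so I must instead note $\tilde D_{\alpha_0}<\infty$ implies $h_\psi\in L_{\alpha_0}(\Me,\varphi)$, which by Proposition \ref{prop:alt} gives $\|\Delta_{\psi,\varphi}^{\alpha_0/2}h_\varphi^{1/2}\|_2<\infty$, i.e. $D_{\alpha_0}(\psi\|\varphi)<\infty$) yields $\limsup_{\alpha\downarrow 1}\tilde D_\alpha(\psi\|\varphi)\le\lim_{\alpha\downarrow 1}D_\alpha(\psi\|\varphi)=D_1(\psi\|\varphi)$. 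Combining the two bounds gives $L=D_1(\psi\|\varphi)$.

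The main obstacle I anticipate is the case $h_\psi\notin L_\infty(\Me,\varphi)$ in part (i): one needs a genuine argument that the interpolation norms $\|h_\psi\|_{\alpha,\varphi}$ must blow up, which is not purely formal. The cleanest route is probably to observe that if these norms stay bounded by $C$, then for each $\alpha$ the element $h_\psi$ lies in $L_\alpha(\Me,\varphi)$ with controlled norm, and then to use Lemma \ref{lemma:linfty} directly: one shows $(h_\psi)_2 \le \lambda h_2$ must fail for every $\lambda$, yet a uniform bound on $\|h_\psi\|_{\alpha,\varphi}$ forces, in the limit, exactly such an inequality with $\lambda = C$, a contradiction. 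I would need to make the passage to the limit rigorous using the order-continuity properties of the cones $L_p(\Me,\varphi)^+$ recorded in the paragraph preceding Lemma \ref{lemma:jdecomp}. For part (ii), everything reduces to the sandwich $D_{2-1/\alpha}\le\tilde D_\alpha\le D_\alpha$ and the already-cited limiting behaviour of $D_\beta$, so no serious difficulty is expected there beyond bookkeeping of the finiteness hypotheses.
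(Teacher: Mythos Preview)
Your treatment of part (ii) is essentially the paper's argument, but your bookkeeping correction is wrong. Proposition~\ref{prop:alt} gives $\|h_\psi\|_{p,\varphi}^p\le \|\Delta_{\psi,\varphi}^{p/2}h_\varphi^{1/2}\|_2^2$, so finiteness of $\|h_\psi\|_{\alpha_0,\varphi}$ does \emph{not} yield $\|\Delta_{\psi,\varphi}^{\alpha_0/2}h_\varphi^{1/2}\|_2<\infty$; the implication goes the other way. What the lower bound of Proposition~\ref{prop:alt} (equivalently Corollary~\ref{coro:standard}) actually gives is $D_{2-1/\alpha_0}(\psi\|\varphi)\le \tilde D_{\alpha_0}(\psi\|\varphi)<\infty$, with $2-1/\alpha_0>1$. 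That is all you need: by the cited properties of $D_\beta$, $\lim_{\alpha\downarrow 1}D_\alpha(\psi\|\varphi)=D_1(\psi\|\varphi)$ once $D_\beta$ is finite for \emph{some} $\beta>1$, and for $\alpha\in(1,2-1/\alpha_0]$ the upper bound $\tilde D_\alpha\le D_\alpha$ then applies. So the fix is to stop trying to get $D_{\alpha_0}<\infty$ and instead use $D_{2-1/\alpha_0}<\infty$.

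For part (i), your case split is workable in principle, but the paper's route is shorter and avoids the obstacle you flagged. Instead of splitting on whether $h_\psi\in L_\infty(\Me,\varphi)$, the paper assumes $M:=\lim_{p\to\infty}\|h_\psi\|_{p,\varphi}<\infty$ and shows directly that $\psi\le M\varphi$, which by Lemma~\ref{lemma:linftyplus} forces $h_\psi\in L_\infty(\Me,\varphi)$ with $\|h_\psi\|_{\infty,\varphi}\le M$. The mechanism is duality: for $y\in\Me^+$ and any $q>1$ (with $1/p+1/q=1$),
\[
\psi(y)=\langle h_y,h_\psi\rangle\le \|h_y\|_{q,\varphi}\|h_\psi\|_{p,\varphi}\le M\|h_y\|_{q,\varphi}.
\]
The only remaining ingredient is $\|h_y\|_{q,\varphi}\to\|h_y\|_1=\varphi(y)$ as $q\to 1$, which the paper gets by observing that $\theta\mapsto\log\|h_y\|_{1/\theta,\varphi}$ is decreasing (by \eqref{eq:embedding}) and convex (by Theorem~\ref{thm:htl} applied to the constant function), hence continuous on $(0,1]$. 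This replaces your weak-$*$ compactness/Lemma~\ref{lemma:linfty} sketch with a two-line argument, and handles your Case~1 and Case~2 simultaneously.
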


\begin{proof}
First, let $y\in \Me$ and consider the function  $[0,1]\ni \theta\mapsto \log(\|\varphi_y\|_{1/\theta,\varphi})$. This function is decreasing by \eqref{eq:embedding} and 
 by applying Theorem \ref{thm:htl} to the constant function 
$f(z)\equiv \varphi_y$, we see that it is also convex. It follows that this function must be continuous on the interval $(0,1]$. Consequently, we must have 
$\lim_{q\to 1}\|\varphi_y\|_{q,\varphi}=\|\varphi_y\|_1=\varphi(y)$.    

To prove (i), it is enough to show that $\lim_{p\to \infty} \|h_\psi\|_{p,\varphi} =\|h_\psi\|_{\infty,\varphi}$, where we put the 
norms infinite if $h_\psi\notin L_p(\Me,\varphi)$. Note that  the function $p\mapsto \|h_\psi\|_{p,\varphi}$ is increasing and bounded above by $\|h_\psi\|_{\infty,\varphi}$. The statement (i) is clearly true if the limit is infinite, so assume that 
$\lim_{p\to \infty} \|h_\psi\|_{p,\varphi}=M<\infty$. We then have 
\[
\|h_\psi\|_{p,\varphi}\le M\le \|h_\psi\|_{\infty,\varphi}
\]
for all $1\le p<\infty$. Let $y\in \Me^+$. Then for any $q> 1$
 \[
\frac{\<h_y,h_\psi\>}{\|h_y\|_{q,\varphi}}\le \|h_\psi\|_{p,\varphi}\le M, \qquad 1/p+1/q=1,
 \]
hence $\psi(y)=\<h_y,h_\psi\>\le M\|h_y\|_{q,\varphi}$. Taking the limit $q\to 1$, we obtain $\psi(y)\le M\|h_y\|_1=M\varphi(y)$. Since this holds for all $y\in \Me^+$, we obtain  $\psi\le M\varphi$ and by Lemma \ref{lemma:linftyplus}, $h_\psi\in   L_\infty(\Me,\varphi)$, with $\|h_\psi\|_{\infty,\varphi}\le M$. 
The statement (ii) follows from Corollary \ref{coro:standard}  and properties of the standard R\'enyi relative entropy $D_\alpha(\psi\|\varphi)$.

\end{proof}

\subsection{Extension to $\Me_*^+$}

 It is clear from Theorem \ref{thm:lpfi} and the remarks at the beginning of Section \ref{sec:sandwiched} that the spaces $L_p(\Me,\varphi)$ can be defined for $\varphi\in \Me_*^+$ (we put $L_p(\Me,\varphi)=\{0\}$ for $\varphi=0$) and that for $\lambda>0$, $\|h\|_{p,\lambda\varphi}=\lambda^{1/q}\|h\|_{p,\varphi}$ 
for any $h\in L_p(\Me,\varphi)=L_p(\Me,\lambda\varphi)$. The definition of $\tilde D_\alpha$ can thus be extended to  positive normal functionals. It is easy to see that  for $\mu,\lambda>0$ and $\psi,\varphi\in \Me_*^+$, we have 
\begin{equation}\label{eq:multiples}
\tilde D_\alpha(\mu\psi\|\lambda\varphi)=\tilde D_\alpha(\psi\|\varphi)+\frac{\alpha}{\alpha-1}\log \mu-\log\lambda.
\end{equation}
With this extension, we have the following order relations.

\begin{prop}\label{prop:order1} Let $\psi,\psi_0,\varphi,\varphi_0,\in \Me_*^+$ and  $\psi_0\le \psi$, $\varphi_0\le \varphi$. Let 
$1<\alpha<\infty$. Then 
  $\tilde D_\alpha(\psi_0\|\varphi)\le \tilde D_\alpha(\psi\|\varphi)$ and 
  $\tilde D_\alpha(\psi\|\varphi_0)\ge \tilde D_\alpha(\psi\|\varphi)$.

\end{prop}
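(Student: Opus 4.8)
The plan is to prove the two monotonicity statements separately, using the characterizations of $L_p(\Me,\varphi)^+$ from the previous section. Recall that by definition $\tilde D_\alpha(\psi\|\varphi) = \frac{\alpha}{\alpha-1}\log\|h_\psi\|_{\alpha,\varphi}$ when $h_\psi\in L_\alpha(\Me,\varphi)$ and $+\infty$ otherwise; since $\alpha>1$, the map $t\mapsto \frac{\alpha}{\alpha-1}\log t$ is increasing, so it suffices to compare the relevant norms (with the convention that the norm is $+\infty$ when the element is not in the space).

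\textbf{First inequality ($\psi_0\le\psi$).} Here the reference state $\varphi$ is fixed. If $h_\psi\notin L_\alpha(\Me,\varphi)$, then $\tilde D_\alpha(\psi\|\varphi)=\infty$ and there is nothing to prove, so assume $h_\psi\in L_\alpha(\Me,\varphi)^+$. From $\psi_0\le\psi$ in $\Me_*^+$ we get $h_{\psi_0}\le h_\psi$ in $L_1(\Me)^+$ (this is the order-isomorphism property of the identification $\Me_*\simeq L_1(\Me)$, \cite{terp1981lpspaces}). Then Corollary \ref{coro:positive}, applied with $p=\alpha$ and $h=h_\psi$, $h_1=h_{\psi_0}$, gives directly that $h_{\psi_0}\in L_\alpha(\Me,\varphi)^+$ and $\|h_{\psi_0}\|_{\alpha,\varphi}\le\|h_\psi\|_{\alpha,\varphi}$. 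Taking $\frac{\alpha}{\alpha-1}\log$ of both sides yields $\tilde D_\alpha(\psi_0\|\varphi)\le\tilde D_\alpha(\psi\|\varphi)$.

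\textbf{Second inequality ($\varphi_0\le\varphi$).} This is the direction where the reference state changes, and it is the main obstacle, because the spaces $L_\alpha(\Me,\varphi_0)$ and $L_\alpha(\Me,\varphi)$ are genuinely different. The plan is to pass to the $2\times 2$ matrix picture used in Lemma \ref{lemma:linfty}, or more simply to exploit an inequality of norms. The cleanest route I would try: reduce to the case where $\varphi$ is faithful (by restricting to $s(\varphi)\Me s(\varphi)$; note $\varphi_0\le\varphi$ forces $s(\varphi_0)\le s(\varphi)$), and then show that $\|h\|_{\alpha,\varphi}\le\|h\|_{\alpha,\varphi_0}$ for every $h$, with the convention that the right side is $+\infty$ when $h\notin L_\alpha(\Me,\varphi_0)$. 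To prove this comparison of norms, observe that $\varphi_0\le\varphi$ gives $h_{\varphi_0}\le h_\varphi$, hence by Lemma \ref{lemma:linftyplus} the formal identity map sends $L_\infty(\Me,\varphi_0)$ contractively into $L_\infty(\Me,\varphi)$: indeed $\|k\|_{\infty,\varphi}=\inf\{\lambda: k\le\lambda h_\varphi\}\le\inf\{\lambda:k\le\lambda h_{\varphi_0}\}=\|k\|_{\infty,\varphi_0}$ for $k\in L_\infty(\Me,\varphi_0)^+$, and then the $2\times 2$ trick of Lemma \ref{lemma:linfty} removes the positivity restriction. Since the identity map is also a contraction $L_1(\Me)\to L_1(\Me)$ (it is the identity), the complex interpolation functor $C_{1/\alpha}$ gives that it is a contraction $L_\alpha(\Me,\varphi_0)\to L_\alpha(\Me,\varphi)$, i.e. $\|h\|_{\alpha,\varphi}\le\|h\|_{\alpha,\varphi_0}$ for all $h\in L_\alpha(\Me,\varphi_0)$. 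Applying this with $h=h_\psi$ and taking $\frac{\alpha}{\alpha-1}\log$ (recalling $\alpha>1$ so that $\log$ of a smaller norm is smaller) gives $\tilde D_\alpha(\psi\|\varphi_0)\ge\tilde D_\alpha(\psi\|\varphi)$; and if $h_\psi\notin L_\alpha(\Me,\varphi_0)$ the left side is $\infty$ and the inequality is trivial.

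The delicate points to check carefully are: first, that the pair $(L_\infty(\Me,\varphi_0),L_1(\Me))$ really is a compatible couple mapping into $(L_\infty(\Me,\varphi),L_1(\Me))$ compatibly — i.e. that the two embeddings into $L_1(\Me)$ agree on the intersection, which they do since in both cases the embedding is literally the inclusion into $L_1(\Me)$; second, that contractivity of a map on both endpoints of a compatible couple yields contractivity on the Calderón interpolation space $C_\theta$, which is the standard interpolation property (\cite[Theorem 4.2.1]{belo1976interpolation}); and third, the reduction to faithful $\varphi$ via the support projections, using the definition of $L_p(\Me,\varphi)$ for non-faithful $\varphi$ given at the start of this section. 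None of these is hard individually, but the second inequality does require assembling them in the right order.
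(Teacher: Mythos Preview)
Your proof is correct and follows essentially the same strategy as the paper: Corollary \ref{coro:positive} for the first inequality, and for the second, the norm comparison $\|\cdot\|_{\infty,\varphi}\le\|\cdot\|_{\infty,\varphi_0}$ via Lemma \ref{lemma:linfty} followed by complex interpolation. The paper phrases the interpolation step directly in terms of the inclusion $\Fe(L_\infty(\varphi_0),L_1(\Me))\subseteq \Fe(L_\infty(\varphi),L_1(\Me))$ and the definition of $\|\cdot\|_{\alpha,\varphi}$ as an infimum, rather than invoking the abstract Riesz--Thorin functoriality, but this is the same argument unpacked.
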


\begin{proof} For the first inequality, we may assume that $\psi\in L_\alpha(\Me, \varphi)$. The  inequality then  follows by Corollary \ref{coro:positive}. Let now $\varphi_0\le \varphi$ and assume that $h\in L_\infty(\Me,\varphi_0)$. By Lemma \ref{lemma:linfty}, it is easy to see that  then $h\in  L_\infty(\Me,\varphi)$ and 
$\|h\|_{\infty,\varphi}\le \|h\|_{\infty,\varphi_0}$. It follows that if $f\in \Fe(L_\infty(\Me,\varphi_0),L_1(\Me))$, then
$f\in \Fe(L_\infty(\Me,\varphi), L_1(\Me))$ and 
\[
\vertiii{f}_{\Fe(L_\infty(\Me, \varphi), L_1(\Me))}\le \vertiii{f}_{\Fe(L_\infty(\varphi_0),L_1(\Me))}.
\]
By the definition of the interpolation norm, we obtain $\|h_\psi\|_{\alpha,\varphi}\le \|h_\psi\|_{\alpha,\varphi_0}$, 
this implies the second inequality.

\end{proof}

\begin{prop}\label{prop:lsc} $\tilde D_\alpha: \Me_*^+\times \Me_*^+\to [0,\infty]$ is jointly lower semicontinuous.

\end{prop}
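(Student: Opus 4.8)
The plan is to dualise the interpolation norm and exhibit $\|h_\psi\|_{\alpha,\varphi}$ --- equivalently $\tilde D_\alpha$ --- as a supremum of jointly lower semicontinuous functions of $(\psi,\varphi)\in\Me_*^+\times\Me_*^+$ (equipped with the norm topology). Let $\beta\in(1,\infty)$ be the exponent conjugate to $\alpha$. Recall that $L_\infty(\Me,\varphi)$ is dense in $L_\beta(\Me,\varphi)$ and that, by Theorem~\ref{thm:lpfi}, each of the spaces $L_\alpha(\Me,\varphi)$ and $L_\beta(\Me,\varphi)$ is isometrically the Banach-space dual of the other, under the pairing extending $\<k,h_x\>=\Tr kx$. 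Since $\tilde D_\alpha(\psi\|\varphi)=\frac{\alpha}{\alpha-1}\log\|h_\psi\|_{\alpha,\varphi}$ (with $\|h_\psi\|_{\alpha,\varphi}:=+\infty$ when $h_\psi\notin L_\alpha(\Me,\varphi)$) and $t\mapsto\frac{\alpha}{\alpha-1}\log t$ is continuous and nondecreasing on $[0,\infty]$, it suffices to prove joint lower semicontinuity of $(\psi,\varphi)\mapsto\|h_\psi\|_{\alpha,\varphi}\in[0,\infty]$ on $\Me_*^+\times\Me_*^+$.

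First I would establish, for all $\psi,\varphi\in\Me_*^+$, the variational identity
\[
\|h_\psi\|_{\alpha,\varphi}=\sup\bigl\{\,|\psi(x)|\ :\ x\in\Me,\ \|h_\varphi^{1/2\beta}xh_\varphi^{1/2\beta}\|_\beta\le1\,\bigr\},
\]
both sides being $+\infty$ precisely when $h_\psi\notin L_\alpha(\Me,\varphi)$; by scaling $x\mapsto tx$, $t\uparrow1$, the constraint may equivalently be made strict. Here one uses that, by Theorem~\ref{thm:lpfi}, $h_x\in L_\infty(\Me,\varphi)$ has $\|h_x\|_{\beta,\varphi}=\|h_\varphi^{1/2\beta}xh_\varphi^{1/2\beta}\|_\beta$, and --- writing $e=s(\varphi)$ --- pairs with $h_\psi\in L_\alpha(\Me,\varphi)$ (for which necessarily $s(\psi)\le e$) by $\<h_\psi,h_x\>=\psi(exe)=\psi(x)$. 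For $h_\psi\in L_\alpha(\Me,\varphi)$ the identity then follows from Hahn--Banach together with density of $L_\infty(\Me,\varphi)$ in $L_\beta(\Me,\varphi)=L_\alpha(\Me,\varphi)^*$, the norming functional being an $L_\infty(\Me,\varphi)$-approximant of $T_{\beta,\varphi}(h_\psi)$ from \eqref{eq:dual}. For $h_\psi\notin L_\alpha(\Me,\varphi)$ the supremum is $+\infty$: if $s(\psi)\not\le e$ then $\psi(1-e)>0$ while $h_\varphi^{1/2\beta}(1-e)=0$, so $x=\lambda(1-e)$ with $\lambda\to\infty$ satisfies the constraint with $|\psi(x)|\to\infty$; otherwise a finite supremum would exhibit $x\mapsto\psi(x)$ as a bounded functional on the dense subspace $L_\infty(\Me,\varphi)$ of $L_\beta(\Me,\varphi)$, hence representable by some $k\in L_\alpha(\Me,\varphi)$ with $\Tr kx=\Tr h_\psi x$ for all $x\in\Me$, forcing $h_\psi=k\in L_\alpha(\Me,\varphi)$, a contradiction.

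Next I would show that, for each fixed $x\in\Me$, the map $\varphi\mapsto\|h_\varphi^{1/2\beta}xh_\varphi^{1/2\beta}\|_\beta$ is norm-continuous on $\Me_*^+$: the assignment $\varphi\mapsto h_\varphi$ is isometric onto $L_1(\Me)^+$, the power $h\mapsto h^{1/2\beta}$ is continuous from $L_1(\Me)^+$ into $L_{2\beta}(\Me)$ (by the inequality $\|a^s-b^s\|_{1/s}\le\|a-b\|_1^s$, valid for $a,b\in L_1(\Me)^+$ and $0<s\le1$, with $s=1/2\beta$), and $a\mapsto axa$ is continuous from $L_{2\beta}(\Me)$ into $L_\beta(\Me)$ by H\"older's inequality. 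Granting this, let $\psi_n\to\psi$ and $\varphi_n\to\varphi$ in $\Me_*^+$; for any $x\in\Me$ with $\|h_\varphi^{1/2\beta}xh_\varphi^{1/2\beta}\|_\beta<1$ the same strict inequality holds for $\varphi_n$ once $n$ is large, whence $\|h_{\psi_n}\|_{\alpha,\varphi_n}\ge|\psi_n(x)|$ by the variational identity. Letting $n\to\infty$ (using $\psi_n(x)\to\psi(x)$) and then taking the supremum over all such $x$ --- which equals $\|h_\psi\|_{\alpha,\varphi}$ --- gives $\liminf_n\|h_{\psi_n}\|_{\alpha,\varphi_n}\ge\|h_\psi\|_{\alpha,\varphi}$, as required.

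The hard part will be the continuity used in the previous paragraph: one must know that $h_\varphi\mapsto h_\varphi^{1/2\beta}$ is continuous in $\|\cdot\|_{2\beta}$, so that the (open) constraint sets in the variational identity vary lower-semicontinuously with $\varphi$. This is the H\"older-continuity of fractional powers in the noncommutative $L_p$-scale (an instance of Ando's inequality $\vertiii{f(a)-f(b)}\le\vertiii{f(|a-b|)}$ for operator monotone $f\ge0$ with $f(0)=0$); everything else reduces to routine use of the duality \eqref{eq:duality}, density of $L_\infty(\Me,\varphi)$, and H\"older's inequality, together with the short separate discussion --- via the $x=\lambda(1-s(\varphi))$ trick --- of the cases where $s(\psi)\not\le s(\varphi)$ or one of $\psi,\varphi$ vanishes.
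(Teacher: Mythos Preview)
Your proof is correct and takes a genuinely different route from the paper's. The paper argues directly that the sublevel sets $\{(\psi,\varphi):\|h_\psi\|_{\alpha,\varphi}\le a\}$ are closed: writing $h_{\psi_n}=h_{\varphi_n}^{1/2\beta}k_nh_{\varphi_n}^{1/2\beta}$ with $\|k_n\|_\alpha\le a$, it uses reflexivity of $L_\alpha(\Me)$ to pass to a weak limit $k$, and then shows $h_{\psi_n}\to h_\varphi^{1/2\beta}kh_\varphi^{1/2\beta}$ weakly in $L_1(\Me)$, so that $h_\psi\in L_\alpha(\Me,\varphi)$ with norm $\le a$. You instead dualize: the variational identity $\|h_\psi\|_{\alpha,\varphi}=\sup\{|\psi(x)|:\|h_\varphi^{1/2\beta}xh_\varphi^{1/2\beta}\|_\beta<1\}$ exhibits the norm as a supremum of functions continuous in $\psi$ over constraint sets that vary continuously in $\varphi$, and lower semicontinuity drops out. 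Your approach avoids weak compactness and yields a useful variational formula as a byproduct; the paper's is more hands-on but perhaps more self-contained. Both arguments rest on the same technical ingredient, the norm continuity of $L_1(\Me)^+\ni h\mapsto h^{1/p}\in L_p(\Me)^+$: the paper cites this as Kosaki's theorem \cite[Theorem 4.2]{kosaki1984applicationsuc}, while you invoke the sharper H\"older estimate $\|a^s-b^s\|_{1/s}\le\|a-b\|_1^s$. That quantitative form is not strictly needed here---continuity suffices---so if you want to be safe you can simply quote Kosaki's result at that step.
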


\begin{proof} It suffices to prove that the set $\{(\psi,\varphi)\in \Me^+_*\times \Me^+_*,\ \|h_\psi\|_{\alpha,\varphi}\le a\}$ is closed in $\Me_*\times \Me_*$ for each $a\ge 0$. So let $\psi_n$ and $\varphi_n$ be sequences of positive normal functionals, converging in $\Me_*$  to $\psi$ and $\varphi$, respectively, and such that $\|h_{\psi_n}\|_{\alpha,\varphi_n}\le a$.  By Theorem \ref{thm:lpfi}, we have 
\[
h_{\psi_n}=h_{\varphi_n}^{1/2\beta}k_n h_{\varphi_n}^{1/2\beta},\qquad k_n\in L_\alpha(\Me)^+,\ \|k_n\|_\alpha=\|h_{\psi_n}\|_{\alpha,\varphi_n}\le a.
\]
Since the space $L_\alpha(\Me)$ is reflexive and $\{k_n\}$ is bounded, we may assume that $k_n$ converges to some $k$  weakly in $L_\alpha(\Me)$, and then  $\|k\|_\alpha\le a$.

 Let  
$h:=h_\varphi^{1/2\beta}kh_\varphi^{1/2\beta}$, so that $h\in L_\alpha(\Me,\varphi)$ with $\|h\|_{\alpha,\varphi}=\|k\|_\alpha\le a$. We will show that $h_{\psi_n}$ converges to $h$ weakly in $L_1(\Me)$ and hence we must have $h_\psi=h$. 
So let $x\in \Me$. Then by H\"older's inequality,
\begin{align*}
|\Tr (h_{\psi_n}-h)x|&=|\Tr (h_{\varphi_n}^{1/2\beta}k_n h_{\varphi_n}^{1/2\beta}-h_\varphi^{1/2\beta}kh_\varphi^{1/2\beta})x|\\
&\le|\Tr (h_{\varphi_n}^{1/2\beta}-h_\varphi^{1/2\beta})k_n h_{\varphi_n}^{1/2\beta}x| \\
&+ |\Tr h_\varphi^{1/2\beta}k_n(h_{\varphi_n}^{1/2\beta}-h_\varphi^{1/2\beta})x| + |\Tr h_\varphi^{1/2\beta}(k_n-k)h_\varphi^{1/2\beta}x|\\
&\le \|h_{\varphi_n}^{1/2\beta}-h_\varphi^{1/2\beta}\|_{2\beta}\|k_n\|_\alpha(\varphi_n(1)^{1/2\beta}+\varphi(1)^{1/2\beta})\|x\|\\
&+|\Tr (k_n-k)h_\varphi^{1/2\beta}xh_\varphi^{1/2\beta}|
\end{align*}
It was proved by Kosaki \cite[Theorem 4.2]{kosaki1984applicationsuc} that the map 
$L_1(\Me)^+\ni h \mapsto h^{1/p}\in L_p(\Me)^+$ is norm continuous. Hence the first part of the last expression  converges to 0. Since $h_\varphi^{1/2\beta}xh_\varphi^{1/2\beta}\in L_\beta(\Me)$ for any $x\in \Me$, the second part  goes to 0 as well.

\end{proof}

Let now $\mathcal N=\Me\oplus \Me$ and $\varphi_1,\varphi_2\in \Me_*^+$ be faithful,  $\varphi=\varphi_1\oplus\varphi_2$. 
By \cite{terp1981lpspaces}, $L_p(\Ne)=L_p(\Me)\times L_p(\Me)$ and $\|(k_1,k_2)\|_p=(\|k_1\|_p^p+\|k_2\|_p^p)^{1/p}$, $1\le p\le \infty$. By this and Theorem \ref{thm:lpfi}, we obtain that $L_p(\Ne,\varphi)=L_p(\Me,\varphi_1)\times L_p(\Me,\varphi_2)$ and 
for $h=(h_1,h_2)\in L_p(\Ne, \varphi)$,
\begin{equation}\label{eq:oplus}
\|(h_1,h_2)\|_{p,\varphi}=(\|h_1\|_{p,\varphi_1}^p+\|h_2\|_{p,\varphi_2}^p)^{1/p}.
\end{equation}

\begin{prop}\label{prop:genmean} Let $\psi_1,\psi_2,\varphi_1,\varphi_2\in \Me_*^+$ and let $\psi=\psi_1\oplus \psi_2$, 
$\varphi=\varphi_1\oplus\varphi_2$. Then
\begin{align*}
\exp\{(\alpha-1)\tilde D_\alpha(\psi\|\varphi)\}=\exp\{(\alpha-1)\tilde D_\alpha&(\psi_1\|\varphi_1)\}\\
 &+\exp\{(\alpha-1)\tilde D_\alpha(\psi_1\|\varphi_1)\}.
\end{align*}

\end{prop}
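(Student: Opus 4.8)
The plan is to reduce the claim to the norm identity \eqref{eq:oplus} together with the exponential form of the definition \eqref{eq:defi_sre}. First I would record that, under the identifications $L_1(\Ne)\simeq L_1(\Me)\times L_1(\Me)$ and $L_p(\Ne,\varphi)\simeq L_p(\Me,\varphi_1)\times L_p(\Me,\varphi_2)$ described just before \eqref{eq:oplus}, the density $h_\psi$ of $\psi=\psi_1\oplus\psi_2$ corresponds to the pair $(h_{\psi_1},h_{\psi_2})$. Although the discussion preceding \eqref{eq:oplus} assumes $\varphi_1,\varphi_2$ faithful, the general case follows by restricting each $\varphi_i$ to $s(\varphi_i)\Me s(\varphi_i)$, and $\varphi$ to the corner of $\Ne$ determined by $s(\varphi_1)\oplus s(\varphi_2)$, exactly as in the extension of $L_p(\Me,\varphi)$ to non-faithful states at the beginning of Section \ref{sec:sandwiched}; hence \eqref{eq:oplus} is available for all $\psi_i,\varphi_i\in\states_*(\Me)$ (indeed for positive normal functionals).

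Next I would split into two cases according to whether $h_{\psi_i}\in L_\alpha(\Me,\varphi_i)$ for each index $i\in\{1,2\}$. If this holds for both $i$, then $h_\psi\in L_\alpha(\Ne,\varphi)$ by the product decomposition, and from \eqref{eq:defi_sre} together with the elementary identity $\exp\{(\alpha-1)\cdot\tfrac{\alpha}{\alpha-1}\log x\}=x^{\alpha}$ one gets $\exp\{(\alpha-1)\tilde D_\alpha(\psi\|\varphi)\}=\|h_\psi\|_{\alpha,\varphi}^{\alpha}$, and likewise $\exp\{(\alpha-1)\tilde D_\alpha(\psi_i\|\varphi_i)\}=\|h_{\psi_i}\|_{\alpha,\varphi_i}^{\alpha}$ for each $i$; then \eqref{eq:oplus} with $p=\alpha$ gives $\|h_\psi\|_{\alpha,\varphi}^{\alpha}=\|h_{\psi_1}\|_{\alpha,\varphi_1}^{\alpha}+\|h_{\psi_2}\|_{\alpha,\varphi_2}^{\alpha}$, which is precisely the asserted identity. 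If instead $h_{\psi_i}\notin L_\alpha(\Me,\varphi_i)$ for some $i$, then $\tilde D_\alpha(\psi_i\|\varphi_i)=\infty$, so the right-hand side equals $+\infty$ because $\alpha>1$; and $h_\psi\notin L_\alpha(\Ne,\varphi)$, so $\tilde D_\alpha(\psi\|\varphi)=\infty$ and the left-hand side is $+\infty$ as well.

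There is no genuine obstacle here; the only points needing a line of care are the justification that \eqref{eq:oplus} survives the passage from faithful $\varphi_i$ to arbitrary normal states (handled by the support-projection reduction already used in Section \ref{sec:sandwiched}) and the bookkeeping of the degenerate case in which one of the entropies is infinite. I would also remark that the same computation yields the identity for $\psi_1,\psi_2,\varphi_1,\varphi_2\in\Me_*^+$ via the extension of $\tilde D_\alpha$ discussed above, should that generality be needed later.
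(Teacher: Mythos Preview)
Your proposal is correct and follows precisely the paper's approach: the paper's proof is the single line ``Follows immediately from \eqref{eq:oplus} and the definition of $\tilde D_\alpha$,'' and you have simply spelled out that immediate computation, together with the routine support-projection reduction for non-faithful $\varphi_i$ and the bookkeeping for the infinite case.
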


\begin{proof} Follows immediately from \eqref{eq:oplus} and the definition of $\tilde D_\alpha$.

\end{proof}

\subsection{Data processing inequality}

Let $\Ne$ be another von Neumann algebra and let $\Phi: L_1(\Me)\to L_1(\Ne)$ be a  positive linear trace-preserving map. Then $\Phi$ defines a positive linear map $\Me_*\to \Ne_*$, also denoted by $\Phi$, mapping states to states. 
The  adjoint $\Phi^*:\Ne\to\Me$ is normal, positive and  unital. The map $\Phi$ will be fixed throughout this section, together with $\varphi\in \states_*(\Me)$. We put  $e:=s(\varphi)$ and $e':=s(\Phi(\varphi))$.  

We first show that $\Phi$ maps $L_1(\Me,\varphi)$ into $L_1(\Ne,\Phi(\varphi))$, see the remarks at the beginning of Section \ref{sec:sandwiched}. From  $\varphi(\Phi^*(1-e'))=\Phi(\varphi)(1-e')=0$, it follows that 
$e\Phi^*(1-e')e=0$ and hence $e\Phi^*(e')=e$, so that $ e\le\Phi^*(e')$. Let now $h=ehe\in L_1(\Me)^+$, then
\[
\Tr h=\Tr he\le \Tr h\Phi^*(e')=\Tr \Phi(h)e'\le \Tr \Phi(h)=\Tr h,
\]
hence $e'\Phi(h)e'=\Phi(h)$ and $\Phi(h)\in L_1(\Ne, \Phi(\varphi))$. Since $L_1(\Me,\varphi)$ is generated by positive elements, this implies that $\Phi$ maps $L_1(\Me,\varphi)$ into $L_1(\Ne,\Phi(\varphi))$. 

Assume next that $h=h_x$ for some $x\in e\Me^+e$. Then $h_x\le \|x\|h_\varphi$ and since 
$\Phi$ is positive, we also have $\Phi(h_x)\le\|x\|\Phi(h_\varphi)$. By Lemma \ref{lemma:linftyplus}, there is some $x'\in e'\Ne^+e'$ such that
\[ 
\Phi(h_x)=\Phi(h_\varphi^{1/2}xh_\varphi^{1/2})=\Phi(h_\varphi)^{1/2}x'\Phi(h_\varphi)^{1/2}=\Phi(h_\varphi)_{x'}\in L_\infty(\Ne,\Phi(\varphi))^+.
\] 
Since $\Me^+$ generates $\Me$, it follows that $\Phi$ maps $L_\infty(\Me,\varphi)$ into $L_\infty(\Ne,\Phi(\varphi))$. By linearity, the map 
 $x\mapsto x'$ extends to a  linear map $\Phi^*_\varphi: e\Me e\to e'\Ne e'$, which is obviously positive, unital and normal.

\begin{prop}\label{prop:contraction}
 For any  $1\le p\le \infty$, $\Phi$ restricts to a contraction $L_p(\Me, \varphi)\to L_p(\Ne, \Phi(\varphi))$. 
\end{prop}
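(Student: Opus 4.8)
The plan is to deduce the statement from the complex interpolation theorem. By definition $L_p(\Me,\varphi)=C_{1/p}(L_\infty(\Me,\varphi),L_1(\Me))$, and likewise $L_p(\Ne,\Phi(\varphi))=C_{1/p}(L_\infty(\Ne,\Phi(\varphi)),L_1(\Ne))$. The couple $(L_\infty(\Me,\varphi),L_1(\Me))$ is compatible with sum space $L_1(\Me)$, since $L_\infty(\Me,\varphi)\subseteq L_1(\Me)$, and similarly for $\Ne$. We have already seen that $\Phi$, viewed on $L_1(\Me)=L_1(\Me,\varphi)$, carries this space into $L_1(\Ne,\Phi(\varphi))\subseteq L_1(\Ne)$ and carries $L_\infty(\Me,\varphi)$ into $L_\infty(\Ne,\Phi(\varphi))$. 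Thus $\Phi$ is a morphism of these two Banach couples, and by \cite[Theorem 4.1.2]{belo1976interpolation} it restricts to a bounded map $L_p(\Me,\varphi)\to L_p(\Ne,\Phi(\varphi))$ of norm at most $N_\infty^{1-1/p}N_1^{1/p}$, where $N_\infty$ and $N_1$ are the norms of $\Phi$ at the two endpoints. So it suffices to show $N_\infty\le 1$ and $N_1\le 1$; the cases $p=1,\infty$ are the endpoints themselves.

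For the $L_1$ endpoint I would use that, under the identification $L_1(\Me)\simeq\Me_*$, the map $\Phi$ is the predual of $\Phi^*:\Ne\to\Me$, which is positive and unital, hence contractive; therefore $\|\Phi(h)\|_1\le\|h\|_1$, i.e.\ $N_1\le 1$. For the $L_\infty$ endpoint I would invoke the positive unital normal map $\Phi^*_\varphi:e\Me e\to e'\Ne e'$ constructed just before the proposition, together with the identity $\Phi(h_x)=\Phi(h_\varphi)_{\Phi^*_\varphi(x)}$ established there, the right-hand side being the canonical image of $\Phi^*_\varphi(x)$ in $L_\infty(\Ne,\Phi(\varphi))$. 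Since a unital positive map between unital C*-algebras is contractive, $\|\Phi^*_\varphi(x)\|\le\|x\|$, whence $\|\Phi(h_x)\|_{\infty,\Phi(\varphi)}=\|\Phi^*_\varphi(x)\|\le\|x\|=\|h_x\|_{\infty,\varphi}$; as the elements $h_x$ exhaust $L_\infty(\Me,\varphi)$, this gives $N_\infty\le 1$. Feeding these two bounds into the interpolation theorem finishes the proof; the degenerate case $\varphi=0$ is trivial, and the case of non-faithful $\varphi$ is already absorbed into the discussion above via the support projections $e$, $e'$ and the definition of $L_p(\Me,\varphi)$ at the start of Section \ref{sec:sandwiched}.

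I do not anticipate a genuine obstacle: the construction of $\Phi^*_\varphi$ and the two facts that $\Phi$ preserves the relevant support conditions (so that it really maps the one couple into the other) have already been carried out in the text, and the remaining ingredient---that positive unital maps between C*-algebras have norm one---is classical. The only mild point of care is to confirm that $\Phi$ is literally a single operator restricting to both endpoint maps, which is automatic because $L_1(\Me,\varphi)=L_1(\Me)$ is the sum space of the couple and $\Phi$ is defined there from the outset. If one preferred to avoid the abstract interpolation theorem, an equivalent route would be to transport everything through the isometries $i_p$ of Theorem \ref{thm:lpfi} to the Haagerup spaces and argue by H\"older's inequality, but quoting \cite[Theorem 4.1.2]{belo1976interpolation} is the shortest path.
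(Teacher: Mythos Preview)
Your proposal is correct and follows essentially the same route as the paper: verify the endpoint contractions (using that positive unital maps are contractive, which the paper attributes to the Russo--Dye theorem) and then interpolate, the paper invoking its Theorem~\ref{thm:rt} where you cite \cite[Theorem 4.1.2]{belo1976interpolation}. The only cosmetic difference is that the paper names Russo--Dye explicitly for both $\Phi^*$ and $\Phi^*_\varphi$, while you state the same fact without the attribution.
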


\begin{proof} As we have seen, $\Phi$ maps $L_1(\Me,\varphi)$ into $L_1(\Ne,\Phi(\varphi))$ and $L_\infty(\Me,\varphi)$ into $L_\infty(\Ne,\Phi(\varphi))$.
For any $h\in L_1(\Me,\varphi)$, 
\[
\|\Phi(h)\|_1=\sup_{x_0\in \Ne , \|x_0\|\le1} \Tr \Phi(h)x_0=\sup_{x_0\in \Ne , \|x_0\|\le1} \Tr h\Phi^*(x_0)\le \|h\|_1,
\]
the last inequality follows from the fact that $\Phi^*$ is a unital positive map, hence a contraction by the Russo-Dye theorem, \cite{paulsen2002completely}.  Next, for $x\in e\Me e$,
\[
\|\Phi(h_x)\|_{\infty,\Phi(\varphi)}=\|\Phi(h_\varphi)_{\Phi^*_\varphi(x)}\|_{\infty,\Phi(\varphi)}= \|\Phi^*_\rho(x)\|\le \|x\|=\|h_x\|_{\infty,\rho},
\] 
where we used Russo-Dye theorem for $\Phi^*_\varphi$.  The statement now follows by the Riesz-Thorin theorem (Theorem \ref{thm:rt}).

\end{proof}

Let us denote the preadjoint of $\Phi_\varphi^*$ by $\Phi_\varphi$. For any $x\in e \Me e$ and  $h_0\in e'L_1(\Ne) e'$, we have
\begin{equation}\label{eq:pread1}
\<h_0,\Phi(h_x)\>=\<h_0, \Phi(h_\varphi)_{\Phi_\varphi^*(x)}\>=
\Tr h_0\Phi_\varphi^*(x)=\<\Phi_\varphi(h_0),h_x\>.
\end{equation}
By the uniqueness part in \cite[Theorem 4.4.1]{belo1976interpolation}, this extends to 
\begin{equation}\label{eq:dualp}
\<h_0,\Phi(h)\>=\<\Phi_\varphi(h_0),h\>,\qquad h\in L_p(\Me, \varphi),\ h_0\in L_q(\Ne, \Phi(\varphi)).
\end{equation}
Moreover, for $x\in e\Me e$,
\begin{align*}
\Tr \Phi_\varphi(\Phi(h_\varphi))x&= \Tr \Phi(h_\varphi)\Phi_\varphi^*(x)=
\Tr \Phi(h_\varphi)^{1/2}\Phi_\varphi^*(x)\Phi(h_\varphi)^{1/2}\\
&= \Tr \Phi(h_x)=\Tr h_x=\Tr h_\varphi x
,
\end{align*}
so that $\Phi_\varphi(\Phi(h_\varphi))=h_\varphi$. By Proposition \ref{prop:contraction}, $\Phi_\varphi$  defines a positive contraction $L_p(\Ne,\Phi(\varphi))\to L_p(\Me, \varphi)$, for $1\le p\le \infty$.

\begin{rem}\label{rem:petzdual}  As in the proof of Lemma \ref{lemma:linftyplus},
 we have
\[
\Tr h_xy=( yh^{1/2}_\varphi, Jxh^{1/2}_\varphi),\qquad y\in e\Me e,\ x\in  e\Me ^+e
\]
 and by linearity, this holds for all $x\in e\Me e$. 
It follows that  $\Phi^*_\varphi$ is determined by
\begin{align*}
(\Phi^*(y_0)h_\varphi^{1/2},Jxh^{1/2}_\varphi )&=\Tr h_x\Phi^*(y_0)=\Tr \Phi(h_x)y_0\\
&=\Tr \Phi(h_\varphi)^{1/2}\Phi^*_\varphi(x)\Phi(h_\varphi)^{1/2} y_0\\
&=(y_0 \Phi(h_\varphi)^{1/2},J_0\Phi^*_\varphi(x)\Phi(h_\varphi)^{1/2})
 \end{align*}
for all $y_0\in e'\Ne e'$ and $x\in e\Me e$, here $J_0$ is the modular conjugation (adjoint operation) on $L_2(e' \Ne e')$. In this form, the map $\Phi^*_\varphi$ was defined by Petz in \cite{petz1988sufficiency} and is therefore called the Petz dual. Moreover, it was proved  that for any $n$, $\Phi^*_\varphi$ is $n$-positive if and only if $\Phi$ is. 
\end{rem}

We are  now ready to prove the data processing inequality for $\tilde D_\alpha$, together with some lower and upper bounds in terms of the 
dual elements $T_{\beta,\varphi}(h_\psi)$ and $T_{\beta,\Phi(\varphi)}(\Phi(h_\psi))$, see \eqref{eq:dual}.

\begin{thm}\label{thm:dtp} Let $1<\alpha<\infty$, $1/\alpha+1/\beta=1$. Let 
$\psi,\varphi\in \states_*(\Me)$ and assume that  $h_\psi\in L_\alpha(\Me,\varphi)$. Let us denote $h:=T_{\beta,\varphi}(h_\psi)$, $h_0=T_{\beta,\Phi(\varphi)}(\Phi(h_\psi))$. Then
 for  $1<\alpha\le 2$,
\begin{align*}
\tilde D_\alpha(\psi\|\varphi)-\tilde D_\alpha(\Phi(\psi)\|\Phi(\varphi))\ge 2\|\frac12(h-\Phi_\varphi(h_0))\|_{\beta,\varphi}^\beta 
\end{align*}
and for  $2\le \alpha<\infty$,
\begin{align*}
 \tilde D_\alpha(\psi\|\varphi)-\tilde D_\alpha(\Phi(\psi)\|\Phi(\varphi))\ge  \beta(\beta-1)\|\frac12(h-\Phi_\varphi(h_0))\|_{\beta,\varphi}^2.
\end{align*}
If $1<\alpha<\infty$ and $\|h-\Phi_\varphi(h_0)\|_{\beta,\varphi}<1$, we also have an upper bound
\[
\tilde D_\alpha(\psi\|\varphi)-\tilde D_\alpha(\Phi(\psi)\|\Phi(\varphi))\le -
\beta\log\left({1-\|h-\Phi_\varphi(h_0)\|_{\beta,\varphi}}\right).
\]

\end{thm}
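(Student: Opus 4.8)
The plan is to reduce the whole statement to a one–variable estimate. Since $h_\psi\in L_\alpha(\Me,\varphi)$, Proposition~\ref{prop:contraction} gives $\Phi(h_\psi)\in L_\alpha(\Ne,\Phi(\varphi))$, and as $\Phi$ is trace preserving, $\Phi(h_\psi)=h_{\Phi(\psi)}\ne 0$. Put $N:=\|h_\psi\|_{\alpha,\varphi}$, $N_0:=\|\Phi(h_\psi)\|_{\alpha,\Phi(\varphi)}$ and $r:=N_0/N$; the contraction property gives $r\le 1$, and directly from \eqref{eq:defi_sre},
\[
\tilde D_\alpha(\psi\|\varphi)-\tilde D_\alpha(\Phi(\psi)\|\Phi(\varphi))=\frac{\alpha}{\alpha-1}\log\frac1r=\beta\log\frac1r .
\]
So it suffices to bound $r$ from above (for the two lower bounds) and from below (for the upper bound). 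For this I use the duality maps: $h:=T_{\beta,\varphi}(h_\psi)$ and $h_0:=T_{\beta,\Phi(\varphi)}(\Phi(h_\psi))$ satisfy $\|h\|_{\beta,\varphi}=\|h_0\|_{\beta,\Phi(\varphi)}=1$, $\<h,h_\psi\>=N$, $\<h_0,\Phi(h_\psi)\>=N_0$. By the duality \eqref{eq:dualp} between $\Phi$ and its preadjoint $\Phi_\varphi$, and since $\Phi_\varphi$ is a contraction $L_\beta(\Ne,\Phi(\varphi))\to L_\beta(\Me,\varphi)$ (as noted just before the theorem), the element $g:=\Phi_\varphi(h_0)\in L_\beta(\Me,\varphi)$ has $\|g\|_{\beta,\varphi}\le 1$ and $\<g,h_\psi\>=\<h_0,\Phi(h_\psi)\>=N_0$.

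Next I test $h_\psi$ against $\tfrac12(h\pm g)$ and apply H\"older's inequality (the duality of $L_\alpha(\Me,\varphi)$ and $L_\beta(\Me,\varphi)$). Setting $s:=\|\tfrac12(h-g)\|_{\beta,\varphi}$, and using $0\le N_0\le N$, this yields
\[
\tfrac12(1+r)\le\|\tfrac12(h+g)\|_{\beta,\varphi},\qquad\qquad 1-r\le 2s .
\]
The second inequality, together with the hypothesis $\|h-g\|_{\beta,\varphi}=2s<1$, forces $r\ge 1-2s>0$, whence $\beta\log\frac1r\le-\beta\log(1-2s)=-\beta\log(1-\|h-\Phi_\varphi(h_0)\|_{\beta,\varphi})$, which is exactly the claimed upper bound on the gap. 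For the lower bounds I estimate $\|\tfrac12(h+g)\|_{\beta,\varphi}$ from above via the uniform convexity of $L_\beta(\Me,\varphi)$, using $\|h\|_{\beta,\varphi}=1$ and $\|g\|_{\beta,\varphi}\le 1$. For $1<\alpha\le 2$ (so $\beta\ge 2$), the Clarkson inequality of Theorem~\ref{thm:clarkson} applied to $h$ and $g$ gives, after multiplying out and dividing by $2^\beta$ and using $1/\alpha+1/\beta=1$, that $\|\tfrac12(h+g)\|_{\beta,\varphi}^\beta+s^\beta\le 1$, hence $\tfrac12(1+r)\le(1-s^\beta)^{1/\beta}$. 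For $2\le\alpha<\infty$ (so $1<\beta\le 2$), Theorem~\ref{thm:pixu} applied to the pair $\tfrac12(h+g),\tfrac12(h-g)$ (whose sum is $h$ and difference is $g$) gives $\|\tfrac12(h+g)\|_{\beta,\varphi}^2+(\beta-1)s^2\le 1$, hence $\tfrac14(1+r)^2\le 1-(\beta-1)s^2$.

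In both cases $r\ge 0$ forces the quantity under the root to be bounded below by $2^{-\beta}$, resp. $\tfrac14$, so everything is well defined, and it remains only to feed these into elementary scalar inequalities. In the first case, concavity of $t\mapsto(1-t)^{1/\beta}$ gives $(1-s^\beta)^{1/\beta}\le 1-\tfrac1\beta s^\beta$, so $r\le 1-\tfrac2\beta s^\beta\le e^{-2s^\beta/\beta}$; in the second, $2\sqrt{1-t}-1\le 1-t$ (equivalently $(\sqrt{1-t}-1)^2\ge 0$) gives $r\le 1-(\beta-1)s^2\le e^{-(\beta-1)s^2}$. Taking $\beta\log\frac1r$ then produces the two lower bounds $2s^\beta=2\|\tfrac12(h-\Phi_\varphi(h_0))\|_{\beta,\varphi}^\beta$ and $\beta(\beta-1)s^2=\beta(\beta-1)\|\tfrac12(h-\Phi_\varphi(h_0))\|_{\beta,\varphi}^2$, as claimed. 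The routine parts are the H\"older pairings and the scalar estimates; the only place the structure theory is genuinely needed is in producing the companion element $g=\Phi_\varphi(h_0)$ inside the unit ball of $L_\beta(\Me,\varphi)$ with $\<g,h_\psi\>=N_0$, and the main thing to get right is which of the two Clarkson-type inequalities is invoked — and on which side — in each regime of $\alpha$.
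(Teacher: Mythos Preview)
Your argument is correct and is essentially the paper's own proof, just repackaged with the shorthand $r=N_0/N$ and $s=\|\tfrac12(h-\Phi_\varphi(h_0))\|_{\beta,\varphi}$: the key pairing $\langle \Phi_\varphi(h_0),h_\psi\rangle=N_0$ via \eqref{eq:dualp}, the bound $\tfrac12(1+r)\le\|\tfrac12(h+g)\|_{\beta,\varphi}$, the use of Theorem~\ref{thm:clarkson} for $\beta\ge 2$ and Theorem~\ref{thm:pixu} (applied to $\tfrac12(h\pm g)$) for $1<\beta\le 2$, and the scalar inequalities $(1-t)^{1/\beta}\le 1-\tfrac1\beta t$ and $2\sqrt{1-t}-1\le 1-t$ are exactly the steps in the paper.
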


\begin{proof}  By \eqref{eq:dualp},  we obtain
\begin{align*}
\frac{\|\Phi(h_\psi)\|_{\alpha,\Phi(\varphi)}}{\|h_\psi\|_{\alpha,\varphi}}&=\frac{\<h_0,\Phi(h_\psi)\>}{\|h_\psi\|_{\alpha,\varphi}}
=\<\Phi_\varphi(h_0),\|h_\psi\|_{\alpha,\varphi}^{-1}h_\psi\>\\&=\<\Phi_\varphi(h_0)+h,\|h_\psi\|_{\alpha,\varphi}^{-1}h_\psi\>-
\<h,\|h_\psi\|_{\alpha,\varphi}^{-1}h_\psi\> \\
&\le \|\Phi_\varphi(h_0)+h\|_{\beta,\varphi}-1.
\end{align*}
Assume $1<\alpha\le 2$, so that $2\le \beta<\infty$. Since $\|h\|_{\beta,\varphi},\|\Phi_\varphi(h_0)\|_{\beta,\varphi}\le 1$,   Clarkson's inequality (Theorem \ref{thm:clarkson}) implies 
\[
\|\Phi_\varphi(h_0)+h\|_{\beta,\varphi}\le (2^{\beta}-\|h-\Phi_\varphi(h_0)\|_{\beta,\varphi}^\beta)^{1/\beta}= 2(1-\|\frac12(h-\Phi_\varphi(h_0)\|_{\beta,\varphi}^\beta)^{1/\beta}.
\]
Using the inequality $(1-x^p)^{1/p}\le 1-\frac1{p}x^p$ for $p>1$, $x\in [0,1]$, we obtain
\[
\|\Phi_\varphi(h_0)+h\|_{\beta,\varphi}-1\le 1-\frac2{\beta}\|\frac12(h-\Phi_\varphi(h_0)\|_{\beta,\varphi}^\beta
\]
For $2\le \alpha<\infty$, we apply Theorem \ref{thm:pixu} with $h$ replaced by $h+\Phi_\varphi(h_0)$ and  $k$ by $h-\Phi_\varphi(h_0)$, and obtain 
\[
\|\Phi_\varphi(h_0)+h\|_{\beta,\varphi}\le 2(1-(\beta-1)\|\frac12(h-\Phi_\varphi(h_0)\|_{\beta,\varphi}^2)^{1/2}.
\]
The inequality above with $p=2$ now yields
\[
\|\Phi_\varphi(h_0)+h\|_{\beta,\varphi}-1\le 1-(\beta-1)\|\frac12(h-\Phi_\varphi(h_0)\|_{\beta,\varphi}^2
\]
The inequalities in (i) and (ii)  follow by taking the logarithms and using the inequality $\log x\le x-1$ for $x>0$.

On the other hand, we have a lower bound 
\begin{align*}
\frac{\|\Phi(h_\psi)\|_{\alpha,\Phi(\varphi)}}{\|h_\psi\|_{\alpha,\varphi}}&=\<\Phi_\varphi(h_0),\|h_\psi\|_{\alpha,\varphi}^{-1}h_\psi\>=\<h-(h-\Phi_\varphi(h_0)),\|\psi\|_{\alpha,\varphi}^{-1}h_\psi\>\\
&\ge 1-\|h-\Phi_\varphi(h_0)\|_{\beta,\varphi}.
\end{align*}
If $1-\|h-\Phi_\varphi(h_0)\|_{\beta,\varphi}> 0$, this implies (iii).

\end{proof}

The following result was obtained in \cite{mhre2015monotonicity} for algebras of bounded operators on a separable Hilbert space.

\begin{coro} Let $\psi,\varphi\in \states_*(\Me)$ and let $\Phi:L_1(\Me)\to L_1(\Ne)$ be a positive trace preserving map. Then
\[
D_1(\Phi(\psi)\|\Phi(\varphi))\le D_1(\psi\|\varphi).
\]

\end{coro}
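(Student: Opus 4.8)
The plan is to obtain $D_1$ as the $\alpha\downarrow 1$ limit of $\tilde D_\alpha$ and invoke the data processing inequality already established for $\tilde D_\alpha$. Concretely, by Proposition \ref{prop:limits}(ii) we have $\lim_{\alpha\downarrow 1}\tilde D_\alpha(\psi\|\varphi)=D_1(\psi\|\varphi)$ whenever $\tilde D_\alpha(\psi\|\varphi)$ is finite for some $\alpha>1$, and similarly $\lim_{\alpha\downarrow 1}\tilde D_\alpha(\Phi(\psi)\|\Phi(\varphi))=D_1(\Phi(\psi)\|\Phi(\varphi))$ provided $\tilde D_\alpha(\Phi(\psi)\|\Phi(\varphi))$ is finite for some $\alpha>1$. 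For each fixed $\alpha>1$, Theorem \ref{thm:dtp} gives
\[
\tilde D_\alpha(\Phi(\psi)\|\Phi(\varphi))\le \tilde D_\alpha(\psi\|\varphi),
\]
so taking $\alpha\downarrow 1$ on both sides yields $D_1(\Phi(\psi)\|\Phi(\varphi))\le D_1(\psi\|\varphi)$.

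The first thing I would check is the trivial case: if $D_1(\psi\|\varphi)=\infty$ there is nothing to prove, so assume $D_1(\psi\|\varphi)<\infty$. By the properties of the standard R\'enyi relative entropy recalled before Proposition \ref{prop:alt} (finiteness of $D_\alpha(\psi\|\varphi)$ for some $\alpha>1$ and the limit $\lim_{\alpha\downarrow 1}D_\alpha(\psi\|\varphi)=D_1(\psi\|\varphi)$), together with Corollary \ref{coro:standard} which gives $\tilde D_\alpha(\psi\|\varphi)\le D_\alpha(\psi\|\varphi)$, I need to ensure $\tilde D_\alpha(\psi\|\varphi)$ is finite for some $\alpha>1$. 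Here one must be slightly careful: $D_1(\psi\|\varphi)<\infty$ does not by itself guarantee $D_\alpha(\psi\|\varphi)<\infty$ for $\alpha$ near $1$; however, the monotone convergence statement $\lim_{\alpha\downarrow 1}D_\alpha(\psi\|\varphi)=D_1(\psi\|\varphi)$ together with the increasing nature of $\alpha\mapsto D_\alpha$ forces $D_\alpha(\psi\|\varphi)<\infty$ for all $\alpha$ sufficiently close to $1$ when the limit is finite. So for such $\alpha$ we get $\tilde D_\alpha(\psi\|\varphi)\le D_\alpha(\psi\|\varphi)<\infty$, and Proposition \ref{prop:limits}(ii) applies to $(\psi,\varphi)$.

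For the image side, the key observation is that $\tilde D_\alpha(\Phi(\psi)\|\Phi(\varphi))\le \tilde D_\alpha(\psi\|\varphi)<\infty$ by Theorem \ref{thm:dtp}, so $\tilde D_\alpha(\Phi(\psi)\|\Phi(\varphi))$ is automatically finite for the same $\alpha$, and hence Proposition \ref{prop:limits}(ii) also applies to the pair $(\Phi(\psi),\Phi(\varphi))$. Thus both one-sided limits exist and equal the respective $D_1$ values. Passing to the limit $\alpha\downarrow 1$ in the DPI inequality for $\tilde D_\alpha$ then gives the claim. The only genuinely delicate point is the bookkeeping around finiteness — making sure that $\tilde D_\alpha$ is finite on a whole interval $(1,\alpha_0]$ so that the limit statements of Proposition \ref{prop:limits}(ii) are legitimately invoked — but this follows cleanly from the monotonicity of $\alpha\mapsto D_\alpha$ and the sandwich bound of Corollary \ref{coro:standard}; no new estimates are needed.
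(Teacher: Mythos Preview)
Your overall strategy matches the paper's: combine Theorem~\ref{thm:dtp} with Proposition~\ref{prop:limits}(ii) and let $\alpha\downarrow 1$. However, your justification of the finiteness hypothesis needed for Proposition~\ref{prop:limits}(ii) is circular. You invoke the limit statement $\lim_{\alpha\downarrow 1}D_\alpha=D_1$ to conclude that $D_\alpha<\infty$ for $\alpha$ near $1$, but the paper only records that limit \emph{under the assumption} that $D_\alpha(\psi\|\varphi)<\infty$ for some $\alpha>1$; you cannot use the conclusion to verify its own premise. The unconditional implication is in fact false: already for commutative $\Me$ one can have $D_1(\psi\|\varphi)<\infty$ while $D_\alpha(\psi\|\varphi)=\infty$ for every $\alpha>1$ (take e.g.\ $p_n\propto n^{-3}$ and $q_n\propto n^{-3}e^{-n}$ on $\mathbb N$, so that $\sum p_n\log(p_n/q_n)<\infty$ but $\sum p_n(p_n/q_n)^{\alpha-1}=\infty$ for every $\alpha>1$). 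By Corollary~\ref{coro:standard} the same then holds for $\tilde D_\alpha$, and Proposition~\ref{prop:limits}(ii) is not available for the pair $(\psi,\varphi)$. In that regime your argument, as written, does not reach the conclusion.

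The paper's own proof is a single sentence (``Immediate from Theorem~\ref{thm:dtp} and Proposition~\ref{prop:limits}'') and does not address this case either, so you are not overlooking something the paper supplies. To make the argument complete one needs an additional step covering the situation $D_1(\psi\|\varphi)<\infty$ but $\tilde D_\alpha(\psi\|\varphi)=\infty$ for all $\alpha>1$---for instance, approximate $\psi$ by states $\psi_\lambda$ with $h_{\psi_\lambda}\le \lambda h_\varphi$ (so that $\tilde D_\infty(\psi_\lambda\|\varphi)<\infty$ and the limit argument applies) and then pass to the limit using lower semicontinuity of $D_1$. Without such a step the proof is valid only under the extra hypothesis that $\tilde D_\alpha(\psi\|\varphi)<\infty$ for some $\alpha>1$.
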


\begin{proof} Immediate from Theorem \ref{thm:dtp} and 
 Proposition \ref{prop:limits}.

\end{proof}

\begin{coro}\label{coro:convex}
For $1<\alpha<\infty$, the map $(\psi,\varphi)\mapsto \exp\{(\alpha-1)\tilde D_\alpha\}$ is jointly convex.

\end{coro}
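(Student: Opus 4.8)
The plan is to deduce joint convexity from the data processing inequality (Theorem \ref{thm:dtp}) applied to a summation channel on a direct sum algebra, together with the behaviour of $\tilde D_\alpha$ under direct sums and under simultaneous rescaling of its two arguments. Fix $\lambda\in(0,1)$ and $\psi_1,\psi_2,\varphi_1,\varphi_2$ (the endpoints $\lambda\in\{0,1\}$ being trivial). If $h_{\psi_i}\notin L_\alpha(\Me,\varphi_i)$ for one of $i=1,2$, the right-hand side of the desired inequality is infinite and there is nothing to prove, so I may assume $\tilde D_\alpha(\psi_i\|\varphi_i)<\infty$ for both $i$.

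First I would set $\Ne:=\Me\oplus\Me$ and introduce the summation map $\Phi\colon L_1(\Ne)=L_1(\Me)\oplus L_1(\Me)\to L_1(\Me)$, $\Phi(h_1,h_2):=h_1+h_2$. This map is linear, positive and trace-preserving, and its adjoint $\Me\to\Ne$, $x\mapsto(x,x)$, is normal, positive and unital, so $\Phi$ satisfies the hypotheses of Theorem \ref{thm:dtp}. Put $\hat\psi:=\lambda\psi_1\oplus(1-\lambda)\psi_2$ and $\hat\varphi:=\lambda\varphi_1\oplus(1-\lambda)\varphi_2$ in $\Ne_*^+$; then $\Phi(h_{\hat\psi})=h_{\lambda\psi_1+(1-\lambda)\psi_2}$ and $\Phi(h_{\hat\varphi})=h_{\lambda\varphi_1+(1-\lambda)\varphi_2}$.

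Next I would evaluate $\exp\{(\alpha-1)\tilde D_\alpha(\hat\psi\|\hat\varphi)\}$ explicitly. By the direct-sum identity \eqref{eq:oplus}, which together with Proposition \ref{prop:genmean} extends to positive normal functionals via the description of $L_p(\Me,\varphi)$ for arbitrary $\varphi\in\Me_*^+$ recalled at the beginning of Section \ref{sec:sandwiched},
\[
\exp\{(\alpha-1)\tilde D_\alpha(\hat\psi\|\hat\varphi)\}=\exp\{(\alpha-1)\tilde D_\alpha(\lambda\psi_1\|\lambda\varphi_1)\}+\exp\{(\alpha-1)\tilde D_\alpha((1-\lambda)\psi_2\|(1-\lambda)\varphi_2)\}.
\]
By the rescaling identity \eqref{eq:multiples}, $(\alpha-1)\tilde D_\alpha(\mu\psi_i\|\mu\varphi_i)=(\alpha-1)\tilde D_\alpha(\psi_i\|\varphi_i)+\log\mu$ for $\mu>0$ (the factor $\mu^{\alpha}$ from scaling $h_{\psi_i}$ and the factor $\mu^{1-\alpha}$ from scaling the reference functional $\varphi_i$ multiply to $\mu$), hence
\[
\exp\{(\alpha-1)\tilde D_\alpha(\hat\psi\|\hat\varphi)\}=\lambda\exp\{(\alpha-1)\tilde D_\alpha(\psi_1\|\varphi_1)\}+(1-\lambda)\exp\{(\alpha-1)\tilde D_\alpha(\psi_2\|\varphi_2)\}.
\]
Finally, since $\alpha>1$ the map $t\mapsto e^{(\alpha-1)t}$ is increasing on $[0,\infty]$, so applying it to the data processing inequality $\tilde D_\alpha(\Phi(\hat\psi)\|\Phi(\hat\varphi))\le\tilde D_\alpha(\hat\psi\|\hat\varphi)$ of Theorem \ref{thm:dtp} gives
\[
\exp\{(\alpha-1)\tilde D_\alpha(\lambda\psi_1+(1-\lambda)\psi_2\|\lambda\varphi_1+(1-\lambda)\varphi_2)\}\le\exp\{(\alpha-1)\tilde D_\alpha(\hat\psi\|\hat\varphi)\},
\]
which, combined with the previous display, is exactly the asserted joint convexity.

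The only point needing genuine care is the exponent bookkeeping in the third paragraph: one must check that the rescaling of the first argument (which contributes $\mu^{\alpha}$ to $\exp\{(\alpha-1)\tilde D_\alpha\}$) and the rescaling of the reference state (which contributes $\mu^{1-\alpha}$) cancel to leave exactly $\mu$, so that the weights $\lambda$ and $1-\lambda$ reappear with their correct first power; and, relatedly, that the direct-sum decomposition \eqref{eq:oplus} remains valid for the possibly non-faithful, subnormalized components $\lambda\varphi_i$ and $(1-\lambda)\varphi_i$. Everything else is a direct appeal to Theorem \ref{thm:dtp} and to identities already established.
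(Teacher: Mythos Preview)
Your proof is correct and follows essentially the same route as the paper: form the direct sum $\hat\psi=\lambda\psi_1\oplus(1-\lambda)\psi_2$, $\hat\varphi=\lambda\varphi_1\oplus(1-\lambda)\varphi_2$, use Proposition~\ref{prop:genmean} together with \eqref{eq:multiples} to rewrite $\exp\{(\alpha-1)\tilde D_\alpha(\hat\psi\|\hat\varphi)\}$ as the convex combination, and then apply the data processing inequality (Theorem~\ref{thm:dtp}) to the summation map $\Phi(h_1,h_2)=h_1+h_2$. Your additional remarks on the infinite case and on the validity of \eqref{eq:oplus} for non-faithful, subnormalized components are welcome extra care but do not change the argument.
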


\begin{proof} The following arguments are standard.
Let $\psi_1,\psi_2,\varphi_1,\varphi_2\in \states_*(\Me)$. Let $\psi,\varphi\in \states_*(\Me\oplus\Me)$ be given by $\psi=\lambda\psi_1\oplus(1-\lambda)\psi_2$ and $\varphi=\lambda\varphi_1\oplus(1-\lambda)\varphi_2$. By Proposition \ref{prop:genmean} and \eqref{eq:multiples}, we obtain
\begin{align*}
\exp\{(\alpha-1)\tilde D_\alpha(\psi\|\varphi)\}= &\exp\{(\alpha-1)\tilde D_\alpha(\lambda\psi_1\|\lambda\varphi_1)\}\\
&+ \exp\{(\alpha-1)\tilde D_\alpha((1-\lambda)\psi_2\|
(1-\lambda)\varphi_2)\} \\
=& \lambda\exp\{(\alpha-1)\tilde D_\alpha(\psi_1\|\varphi_1)\}\\
&+
(1-\lambda)\exp\{(\alpha-1)\tilde D_\alpha(\psi_2\|\varphi_2)\}.
\end{align*}
Let $\Phi:L_1(\Me\oplus \Me)\to L_1(\Me)$ be given by $(h_1,h_2)\mapsto h_1+h_2$, then $\Phi$ is obviously positive and trace preserving and 
\[
\Phi(\varphi)=\lambda\varphi_1+(1-\lambda)\varphi_2,\quad \Phi(\psi)=\lambda\psi_1+(1-\lambda)\psi_2.
\]
The statement now follows by Theorem \ref{thm:dtp}. 
\end{proof}

We also obtain a characterization of equality, which will be useful in the next section.

\begin{coro}\label{coro:equality} 
Let $\psi,\varphi\in \states_*(\Me)$ and assume that $\psi\in L_\alpha(\Me,\varphi)$.  Then 
$\tilde D_\alpha(\psi\|\varphi)=\tilde D_\alpha(\Phi(\psi)\|\Phi(\varphi))$
if and only if 
\[
\Phi_\varphi\circ T_{\beta,\Phi(\varphi)}\circ \Phi(h_\psi)=T_{\beta,\varphi}(h_\psi).
\] 
If $\alpha=2$, this is equivalent to $\Phi_\varphi\circ \Phi(\psi)=\psi$. 
\end{coro}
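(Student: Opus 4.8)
The plan is to trace through the chain of inequalities in the proof of Theorem \ref{thm:dtp} and determine exactly when each one becomes an equality. Recall that the key estimate was
\[
\frac{\|\Phi(h_\psi)\|_{\alpha,\Phi(\varphi)}}{\|h_\psi\|_{\alpha,\varphi}}=\<\Phi_\varphi(h_0),\|h_\psi\|_{\alpha,\varphi}^{-1}h_\psi\>\le \|\Phi_\varphi(h_0)\|_{\beta,\varphi}\le 1,
\]
where the first inequality is the duality pairing bound $\<k,h\>\le\|k\|_{\beta,\varphi}\|h\|_{\alpha,\varphi}$ and the second is Proposition \ref{prop:contraction} applied to the contraction $\Phi_\varphi$ (together with $\|h_0\|_{\beta,\Phi(\varphi)}\le 1$). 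Since $\tilde D_\alpha$ is a monotone increasing function of the norm ratio $\|\Phi(h_\psi)\|_{\alpha,\Phi(\varphi)}/\|h_\psi\|_{\alpha,\varphi}$ (via $\log$, up to the positive factor $\alpha/(\alpha-1)$), equality in DPI holds if and only if both inequalities above are equalities, i.e. $\<\Phi_\varphi(h_0),h_\psi\>=\|\Phi_\varphi(h_0)\|_{\beta,\varphi}\|h_\psi\|_{\alpha,\varphi}$ and $\|\Phi_\varphi(h_0)\|_{\beta,\varphi}=1$.

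First I would show these two conditions together are equivalent to $\Phi_\varphi(h_0)=T_{\beta,\varphi}(h_\psi)=h$. By strict convexity of $L_\beta(\Me,\varphi)$, for a fixed $h_\psi$ with $\|h_\psi\|_{\alpha,\varphi}=1$ (we may normalize), there is a \emph{unique} element in the unit ball of $L_\beta(\Me,\varphi)$ achieving the pairing value $1$ with $h_\psi$, namely $T_{\beta,\varphi}(h_\psi)=h$; this is precisely the defining property of $T_{\beta,\varphi}$ recalled before \eqref{eq:dual}. So the two equality conditions force $\Phi_\varphi(h_0)=h$. Conversely, if $\Phi_\varphi(h_0)=h$, then $\<\Phi_\varphi(h_0),h_\psi\>=\<h,h_\psi\>=\|h_\psi\|_{\alpha,\varphi}$ and $\|\Phi_\varphi(h_0)\|_{\beta,\varphi}=\|h\|_{\beta,\varphi}=1$ (recall $\|T_{\beta,\varphi}(h_\psi)\|_{\beta,\varphi}=1$ whenever $h_\psi\ne 0$), so the whole chain collapses to an equality and DPI is saturated. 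The handling of the non-normalized case is a routine rescaling via \eqref{eq:multiples}, or one can simply observe that $T_{\beta,\varphi}$ and $T_{\beta,\Phi(\varphi)}$ are homogeneous of degree $1-\alpha$ so the stated identity $\Phi_\varphi\circ T_{\beta,\Phi(\varphi)}\circ\Phi(h_\psi)=T_{\beta,\varphi}(h_\psi)$ is scale-invariant. Since $h=T_{\beta,\varphi}(h_\psi)$ and $h_0=T_{\beta,\Phi(\varphi)}(\Phi(h_\psi))$ by definition, this is exactly the displayed equality in the statement.

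For the case $\alpha=2$: here $\beta=2$, $q=2$, and by \eqref{eq:dual} the map $T_{2,\varphi}$ is (up to the positive normalization scalar $\|h_\psi\|_{2,\varphi}^{-1}$) the involution $h_\varphi^{1/4}uk^{1/2}h_\varphi^{1/4}\mapsto h_\varphi^{1/4}k^{1/2}u^*h_\varphi^{1/4}$ on $L_2(\Me,\varphi)$, which is a linear (conjugate-linear) bijection of the whole space, not just the unit ball. Concretely, with $h_\psi=h_\varphi^{1/4}h_\psi^{?}\dots$ one checks that $T_{2,\varphi}$ is the $*$-operation in the Hilbert space $L_2(\Me,\varphi)$ composed with normalization; in particular for \emph{positive} elements like $h_\psi$ and $\Phi(h_\psi)$ it acts as multiplication by the positive scalar $\|h_\psi\|_{2,\varphi}^{-1}$ (resp. $\|\Phi(h_\psi)\|_{2,\Phi(\varphi)}^{-1}$). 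Then the equality condition $\Phi_\varphi(T_{2,\Phi(\varphi)}(\Phi(h_\psi)))=T_{2,\varphi}(h_\psi)$ reads $\|\Phi(h_\psi)\|_{2,\Phi(\varphi)}^{-1}\Phi_\varphi(\Phi(h_\psi))=\|h_\psi\|_{2,\varphi}^{-1}h_\psi$; but equality in DPI forces $\|\Phi(h_\psi)\|_{2,\Phi(\varphi)}=\|h_\psi\|_{2,\varphi}$, so the normalizing scalars cancel and we are left with $\Phi_\varphi(\Phi(h_\psi))=h_\psi$, i.e. $\Phi_\varphi\circ\Phi(\psi)=\psi$. Conversely $\Phi_\varphi\circ\Phi(\psi)=\psi$ together with $\Phi_\varphi$ being a contraction and $\Phi$ a contraction (Proposition \ref{prop:contraction}) forces $\|\Phi(h_\psi)\|_{2,\Phi(\varphi)}=\|h_\psi\|_{2,\varphi}$, hence DPI equality.

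The main obstacle I anticipate is pinning down the precise action of $T_{\beta,\varphi}$ and verifying that the "uniqueness of the maximizer" argument applies cleanly — specifically, making sure that $h_0=T_{\beta,\Phi(\varphi)}(\Phi(h_\psi))$ really lies in the unit ball (so that Proposition \ref{prop:contraction} gives $\|\Phi_\varphi(h_0)\|_{\beta,\varphi}\le 1$) and that strict convexity of $L_\beta(\Me,\varphi)$, as asserted via the Clarkson inequalities (Theorem \ref{thm:clarkson}), indeed yields a \emph{unique} norming functional. For $\alpha=2$ the extra subtlety is checking that $T_{2,\varphi}$ restricted to the positive cone is just scalar multiplication, which requires the explicit polar-decomposition formula \eqref{eq:dual} with $u^*u=s(h_\psi)$; this is a short computation but must be done carefully so that the cancellation of normalization constants is legitimate.
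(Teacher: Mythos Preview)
Your argument is correct. For the first equivalence you give a clean direct proof: from the chain $\langle\Phi_\varphi(h_0),\|h_\psi\|_{\alpha,\varphi}^{-1}h_\psi\rangle\le\|\Phi_\varphi(h_0)\|_{\beta,\varphi}\le 1$, equality forces $\Phi_\varphi(h_0)$ to lie in the unit sphere and to norm $h_\psi$, hence by strict convexity $\Phi_\varphi(h_0)=T_{\beta,\varphi}(h_\psi)$. The paper instead simply reads the first equivalence off the quantitative two-sided bounds of Theorem~\ref{thm:dtp}: the lower bound vanishes iff $\|h-\Phi_\varphi(h_0)\|_{\beta,\varphi}=0$, and the upper bound then forces the DPI difference to vanish as well. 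Your route avoids invoking the Clarkson/Pisier--Xu inequalities and is arguably more transparent for the qualitative statement; the paper's route is shorter given that Theorem~\ref{thm:dtp} is already proved.

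For $\alpha=2$ the approaches genuinely differ. You specialize the general criterion, using that for positive $h_\psi$ the formula~\eqref{eq:dual} gives $T_{2,\varphi}(h_\psi)=\|h_\psi\|_{2,\varphi}^{-1}h_\psi$ (and similarly for $\Phi(h_\psi)$), so the condition $\Phi_\varphi(h_0)=h$ becomes $\Phi_\varphi\circ\Phi(h_\psi)=h_\psi$ after cancelling the equal normalization scalars. The paper instead gives an independent Hilbert-space argument: writing $\|\Phi(h_\psi)\|_{2,\Phi(\varphi)}^2=\langle h_\psi,\Phi_\varphi\circ\Phi(h_\psi)\rangle$ via~\eqref{eq:dualp} and applying the Cauchy--Schwarz inequality in $L_2(\Me,\varphi)$, equality in Cauchy--Schwarz forces $\Phi_\varphi\circ\Phi(h_\psi)$ to be a positive multiple of $h_\psi$, and the norm equalities pin down the multiple as $1$. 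Your argument is a natural corollary of the general case; the paper's is self-contained and does not require analyzing $T_{2,\varphi}$. One minor expository point: your parenthetical description of $T_{2,\varphi}$ as ``linear (conjugate-linear)'' is muddled, but the computation you actually use---that $T_{2,\varphi}$ restricted to $L_2(\Me,\varphi)^+$ is scalar multiplication by $\|\cdot\|_{2,\varphi}^{-1}$---is correct and suffices.
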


\begin{proof} The first statement is immediate from Theorem \ref{thm:dtp}. 
Let now $\alpha=2$, then 
\begin{align*}
\|\Phi(h_\psi)\|_{2,\Phi(\varphi)}^2=\<\Phi(h_\psi),\Phi(h_\psi)\>&=\<h_\psi, \Phi_\varphi\circ\Phi(h_\psi)\>\\
&\le
\|h_\psi\|_{2,\varphi}\|\Phi_\varphi\circ\Phi(h_\psi)\|_{2,\varphi}\le \|h_\psi\|^2_{2,\varphi}.
\end{align*}
The statement now follows by equality condition in the Schwarz inequality.
 
 \end{proof}

\section{Sufficiency of channels}

In this section, we study the case of equality in DPI for $\tilde D_\alpha$. The aim is to show that this equality implies  existence of a recovery map for $(\Phi,\psi,\varphi)$. For this, we need that the map $\Phi$ is 2-positive, which will be assumed in the rest of the paper. 

Let $\psi,\varphi\in \states_*(\Me)$ and let $\Phi: L_1(\Me)\to L_1(\Ne)$ be a 2-positive trace preserving map. We say that $\Phi$ is sufficient with respect to $\{\psi,\varphi\}$ if there exists a 2-positive trace preserving recovery map $\Psi: L_1(\Ne)\to L_1(\Me)$, such that $\Psi\circ\Phi(h_\psi)=h_\psi$ and $\Psi\circ\Phi(h_\varphi)=h_\varphi$. 

\begin{rem} In the above definition, we may also assume that both $\Phi$ and $\Psi$ are completely positive and trace preserving maps, such maps are usually called quantum channels. This definition seems stronger, but in fact it is fully equivalent, in the sense that if $\Phi$ is 2-positive and trace preserving and there is a 2-positive recovery map $\Psi$ for $(\Phi,\psi,\varphi)$, then there are quantum channels $\tilde \Phi$ and $\tilde \Psi$ that coincide with $\Phi$ and $\Psi$ when restricted to $\{\psi,\varphi\}$ and $\{\Phi(\psi),\Phi(\varphi)\}$, respectively.

\end{rem}

The following theorem is one of the crucial results of \cite{petz1988sufficiency}. Note that it implies that $\Phi_\varphi$ is a universal recovery map. 

\begin{thm}\label{thm:petzrecovery}\cite{petz1988sufficiency, jepe2006sufficiency}
Let $\Phi: L_1(\Me)\to L_1(\Ne)$ be a trace preserving 2-positive map. Let  $\varphi\in \states_*(\Me)$ be faithful and assume that $\Phi(\varphi)$ is faithful as well.  Then  $\Phi$ is sufficient with respect to $\{\psi,\varphi\}$ if and only if
$\Phi_\varphi\circ\Phi(h_\psi)=h_\psi$.

\end{thm}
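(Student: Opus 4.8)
The plan is to treat the two implications separately, the nontrivial one being that $\Phi_\varphi\circ\Phi(h_\psi)=h_\psi$ forces sufficiency. The ``only if'' direction is the easy one: if $\Psi$ is a $2$-positive trace-preserving recovery map, then I would use the characterization of $\Phi_\varphi$ as the Petz dual (Remark \ref{rem:petzdual}) to show that $\Phi_\varphi$ is, in an appropriate sense, the ``minimal'' recovery map, i.e.\ any recovery map for $(\Phi,\varphi)$ must agree with $\Phi_\varphi$ on the range of $\Phi$ restricted to $\{h_\psi,h_\varphi\}$. Concretely, for a recovery map $\Psi$ one has $\Psi^*=\Phi_\varphi^*$ on the relevant corner because both are determined by the same modular/Radon--Nikodym data; applying $\Psi$ to $\Phi(h_\psi)$ then gives $h_\psi$ on one side and $\Phi_\varphi(\Phi(h_\psi))$ on the other. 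This is essentially the content of the cited results \cite{petz1988sufficiency,jepe2006sufficiency} and I would simply invoke them, or reproduce the short argument via the Petz-dual identity in Remark \ref{rem:petzdual}.

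For the ``if'' direction, the natural strategy is to take $\Psi:=\Phi_\varphi$ (extended from $e\Me e$, $e'\Ne e'$ to the full algebras by composing with the conditional expectations onto the supports, which changes nothing since all functionals live on those corners). By construction $\Phi_\varphi$ is a positive, unital, normal map $e'\Ne e'\to e\Me e$ at the level of the adjoint $\Phi_\varphi^*$, hence $\Phi_\varphi$ is positive trace-nonincreasing on the preduals, and $2$-positivity of $\Phi_\varphi$ follows from $2$-positivity of $\Phi$ by the last sentence of Remark \ref{rem:petzdual}. We already know $\Phi_\varphi(\Phi(h_\varphi))=h_\varphi$ from the displayed computation just before Proposition \ref{prop:contraction}, and $\Phi_\varphi(\Phi(h_\psi))=h_\psi$ is exactly the hypothesis. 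The only gaps to fill are (i) that $\Phi_\varphi$ can be taken \emph{trace-preserving} rather than merely trace-nonincreasing, and (ii) that it genuinely extends to a map $L_1(\Ne)\to L_1(\Me)$ of the required type; both are handled by the remark following the statement of sufficiency in this section, extending $\Phi_\varphi$ by an arbitrary channel on the complementary corners.

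The main obstacle I anticipate is subsumed in the ``only if'' direction, namely establishing \emph{uniqueness} of the recovery map on the relevant subspace --- i.e.\ that sufficiency implies $\Phi_\varphi\circ\Phi(h_\psi)=h_\psi$ and not merely $\Psi\circ\Phi(h_\psi)=h_\psi$ for some other $\Psi$. The key point is a factorization-through-a-subalgebra argument: when a recovery map exists, the map $\Phi$ restricted to the algebra generated by the modular data of $\{\psi,\varphi\}$ is an isomorphism onto its image, and on that image $\Phi_\varphi$ acts as the honest inverse (this is the Petz--Jen\v{c}ov\'a--Petz structure theorem). Since this is precisely Theorem \ref{thm:petzrecovery} as cited, I would not reprove it in full but rather state that the claim follows from \cite{petz1988sufficiency} together with \cite{jepe2006sufficiency}, pointing out that the only additions here are the reduction to the faithful case on the supports (via $e\le\Phi^*(e')$, already established above) and the observation that the recovery map produced is automatically $2$-positive and trace-preserving on all of $L_1(\Ne)$ after the standard extension.
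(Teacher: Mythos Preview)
The paper does not give its own proof of this theorem: it is stated with a citation to \cite{petz1988sufficiency,jepe2006sufficiency} and used as a black box. So your decision to invoke those references is exactly what the paper does, and there is nothing further to compare against.

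A couple of remarks on the content of your sketch. For the ``if'' direction your argument is correct and in fact simpler than you make it: under the stated hypotheses $\varphi$ and $\Phi(\varphi)$ are \emph{faithful}, so $e=1$ and $e'=1$; hence $\Phi_\varphi^*:\Ne\to\Me$ is already unital and $\Phi_\varphi$ is trace-preserving on the nose, with no extension to complementary corners required. Together with $\Phi_\varphi(\Phi(h_\varphi))=h_\varphi$ (established just before Proposition~\ref{prop:contraction}), the $2$-positivity from Remark~\ref{rem:petzdual}, and the hypothesis $\Phi_\varphi(\Phi(h_\psi))=h_\psi$, this direction is immediate.

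Your treatment of the ``only if'' direction is internally inconsistent: you first call it ``the easy one'' and then, a paragraph later, ``the main obstacle''. The latter is correct. Your heuristic that ``any recovery map for $(\Phi,\varphi)$ must agree with $\Phi_\varphi$ on the range of $\Phi$'' is precisely the nontrivial statement that has to be proved, and your sketch (modular/Radon--Nikodym data, factorization through a subalgebra) is not a proof but a description of what \cite{petz1988sufficiency,jepe2006sufficiency} actually do. Since the paper itself is content to cite those works, this is fine; just don't advertise the direction as easy.
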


The following is a standard result of ergodic theory.

\begin{lemma}\label{lemma:condexp} Let $\Omega:L_1(\Me)\to L_1(\Me)$ be 2-positive and trace preserving, admitting a faithful normal invariant state.  Then there is a faithful normal conditional expectation $E$ on $\Me$ such that  $\psi\in \states_*(\Me)$ is invariant under $\Omega$ if and only if 
$\psi\circ E=\psi$.
\end{lemma}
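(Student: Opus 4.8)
The plan is to prove Lemma~\ref{lemma:condexp} by exhibiting the conditional expectation $E$ as the limit of ergodic averages of the iterates of $\Omega^*$, or more precisely by working in the standard form and using the mean ergodic theorem. Let $\varphi$ be the faithful normal $\Omega$-invariant state. Passing to the GNS/standard representation associated with $\varphi$, the adjoint $\Omega^*$ on $\Me$ is a normal, unital, 2-positive map that preserves $\varphi$, and it is a classical fact (Accardi--Cecchini, or the arguments in Petz's work) that such a map is a contraction for the $L_2(\Me,\varphi)$-norm induced by $\varphi$, i.e.\ with respect to the inner product $\langle x,y\rangle_\varphi=\varphi(x^*y)$. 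Hence $\Omega^*$ extends to a contraction $U$ on the Hilbert space $L_2(\Me)\simeq \Ha_\varphi$ (more precisely on the closure of $\Me$ in the $\varphi$-norm, which is all of $L_2(\Me)$ since $\varphi$ is faithful).

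First I would invoke the mean ergodic theorem for the contraction $U$ on $L_2(\Me)$: the Cesàro averages $\frac1N\sum_{n=0}^{N-1}U^n$ converge strongly to the orthogonal projection $P$ onto the fixed-point subspace $\{\xi : U\xi=\xi\}$. Since $\Omega^*(1)=1$, the vector corresponding to $1$ is fixed, and by 2-positivity together with the Kadison--Schwarz inequality one shows that the fixed-point set $\Ne:=\{x\in\Me:\Omega^*(x)=x\}$ is in fact a von Neumann subalgebra of $\Me$ (the standard Schwarz-inequality argument: if $\Omega^*(x)=x$ then $\Omega^*(x^*x)\ge x^*x$, and applying $\varphi$ and using invariance forces equality, then a multiplicativity-on-the-fixed-set argument via the equality case of Schwarz gives that $\Ne$ is a $*$-subalgebra; weak closure follows since $P$ is a normal projection). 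Restricting $P$ to $\Me$ and showing it lands in $\Ne$ and is normal gives a candidate $E:\Me\to\Ne$; that $E$ is a faithful normal conditional expectation follows because $E$ is unital, completely positive (it is a pointwise limit of the 2-positive unital maps $\frac1N\sum\Omega^{*n}$, and one upgrades to the conditional-expectation property using the $\Ne$-bimodule property of $P$ on $L_2$), idempotent with range $\Ne$, and preserves $\varphi$; faithfulness of $E$ follows from faithfulness of $\varphi$ together with $\varphi\circ E=\varphi$.

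The characterization of invariant states is then the final step: if $\psi\in\states_*(\Me)$ satisfies $\psi\circ\Omega^*=\psi$ (which is exactly $\Omega(h_\psi)=h_\psi$ after identifying $\Me_*$ with $L_1(\Me)$), then for each $x\in\Me$ the sequence $\frac1N\sum_{n=0}^{N-1}\Omega^{*n}(x)$ is fixed by $\psi$; but this sequence converges $\sigma$-weakly (or in $L_2$, then boundedly) to $E(x)$, so $\psi(x)=\psi(E(x))$, i.e.\ $\psi=\psi\circ E$. Conversely if $\psi=\psi\circ E$ then, since $E\circ\Omega^*=E$ (because $\Omega^*$ maps into $\Me$ and averaging is unchanged under one more application of $\Omega^*$, or directly $E$ kills the "transient" part), we get $\psi\circ\Omega^* = \psi\circ E\circ\Omega^* = \psi\circ E=\psi$, so $\psi$ is invariant.

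The main obstacle I anticipate is the verification that $E$ really is a genuine (normal, faithful) conditional expectation onto the fixed-point algebra $\Ne$, rather than merely a unital completely positive projection: one must show the bimodule identity $E(axb)=aE(x)b$ for $a,b\in\Ne$, which is where 2-positivity (and not mere positivity) of $\Omega$ enters in an essential way, via the equality case of the Kadison--Schwarz inequality applied to elements of $\Ne$, following the classical argument of Choi. The subtleties about convergence of the ergodic averages in the right topology (strong on $L_2$, upgraded to $\sigma$-weak on $\Me$ and to norm on $L_1$ for the predual statement) are routine but need to be handled with some care; these are exactly the "standard" ingredients the lemma's statement alludes to, and I would cite the relevant ergodic-theory and operator-algebra references rather than reprove them.
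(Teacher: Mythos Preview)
Your proposal is correct in outline and identifies the right ingredients (Kadison--Schwarz for the fixed-point algebra and the $L_2$-contraction, the multiplicative-domain argument for the bimodule property, and the bounded-plus-$L_2$-convergence upgrade to $\sigma$-weak convergence), but it takes a different route from the paper. The paper does not work on $L_2(\Me,\varphi)$ at all and does not form Ces\`aro averages of the single map $\Omega^*$. Instead it considers the entire convex semigroup $\mathcal I$ of all normal unital 2-positive maps $T:\Me\to\Me$ preserving every $\Omega$-invariant normal state, observes that $\mathcal I$ is pointwise-weak$^*$ compact, and invokes the K\"ummerer--Nagel mean ergodic theorem for $W^*$-semigroups to obtain a conditional expectation $E\in\mathcal I$ with $T\circ E=E\circ T=E$ for all $T\in\mathcal I$. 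The equivalence then drops out in two lines: $E\in\mathcal I$ gives $\psi\circ E=\psi$ for invariant $\psi$, and $\Omega^*\in\mathcal I$ together with $E\circ\Omega^*=E$ gives the converse.

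The trade-off is that the paper's argument is much shorter because the hard work (existence of a conditional expectation as a minimal idempotent in a compact convex semigroup of Schwarz maps) is packaged entirely into the K\"ummerer--Nagel citation; your approach is more elementary and self-contained, using only the Hilbert-space mean ergodic theorem and Choi's multiplicative-domain lemma, at the cost of having to verify by hand that the $L_2$-limit lands in $\Me$, is normal, and satisfies the bimodule identity. Both are standard; yours is closer in spirit to the original Accardi--Cecchini/Petz constructions, while the paper opts for the abstract semigroup shortcut.
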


\begin{proof}
Let $\Se$ be the set of all normal invariant states of $\Omega$ and let $\mathcal I$ be the set of all 2-positive unital normal  maps 
$T:\Me\to\Me$, such that $\psi\circ T=\psi$ for all $\psi\in \Se$. Then $\mathcal I$ is a semigroup (i.e. closed under composition), convex and closed with respect to the pointwise weak*-topology.  By the mean ergodic theorem \cite{kuna1979mean},  $\mathcal I$ contains a conditional expectation $E$, such that 
\[
T\circ E=E\circ T=E,\qquad \forall T\in \mathcal I.
\]
Since $E\in \mathcal I$, $\psi\circ E=\psi$ for all $\psi\in \Se$. On the other hand, let $\psi\in \states_*(\Me)$ be such that $\psi\circ E=\psi$, then
\[
\psi\circ \Omega^*=\psi\circ E\circ \Omega^*=\psi\circ E=\psi,
\]
because $\Omega^*\in \mathcal I$.

\end{proof}

\begin{lemma}\label{lemma:sufficpsiomega} 
Let $\varphi\in\states_*(\Me)$ be faithful.  Let $1<p<\infty$ and let $\psi\in \states_*(\Me)$ be such that
\[
h_\psi=c h_\varphi^{1/2q}h_\omega^{1/p}h_\varphi^{1/2q}
\]
for some $c>0$ and $\omega\in \states_*(\Me)$. Let $\Phi:L_1(\Me)\to L_1(\Ne)$ be a 2-positive trace preserving map such that $\Phi(\varphi)$ is faithful. Then $\Phi$ is sufficient with respect to $\{\psi,\varphi\}$ if and only if it is sufficient with respect to $\{\omega,\varphi\}$.

\end{lemma}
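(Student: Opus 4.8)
The plan is to apply Theorem~\ref{thm:petzrecovery} and Lemma~\ref{lemma:condexp} to replace both sufficiency statements by invariance under one and the same conditional expectation, and then to settle the resulting equivalence by elementary noncommutative $L_p$ manipulations.

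First I would reduce the problem. As $\varphi$ and $\Phi(\varphi)$ are faithful, Theorem~\ref{thm:petzrecovery} applies to both pairs $\{\psi,\varphi\}$ and $\{\omega,\varphi\}$: it shows that $\Phi$ is sufficient with respect to $\{\psi,\varphi\}$ (resp. $\{\omega,\varphi\}$) precisely when $\Omega(h_\psi)=h_\psi$ (resp. $\Omega(h_\omega)=h_\omega$), where $\Omega:=\Phi_\varphi\circ\Phi\colon L_1(\Me)\to L_1(\Me)$. This $\Omega$ is trace preserving since its adjoint $\Phi^*\circ\Phi^*_\varphi$ is unital; it is $2$-positive since $\Phi$ is and, by Remark~\ref{rem:petzdual}, $\Phi^*_\varphi$ (hence $\Phi_\varphi$) is; and $\Omega(h_\varphi)=\Phi_\varphi(\Phi(h_\varphi))=h_\varphi$, so $\varphi$ is a faithful normal $\Omega$-invariant state. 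By Lemma~\ref{lemma:condexp} there is a faithful normal conditional expectation $E$ with range $\Ka:=E(\Me)$ such that a normal state $\chi$ is $\Omega$-invariant if and only if $\chi\circ E=\chi$. Since $\psi$ and $\omega$ are states and $\Omega$ is trace preserving, $\Omega(h_\chi)=h_\chi$ is equivalent to $\Omega$-invariance of $\chi$; and since $E_*(h_\chi)=h_{\chi\circ E}$, the whole statement reduces to proving $\psi\circ E=\psi\iff\omega\circ E=\omega$.

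For this last equivalence the key point is that $\varphi$ is faithful and $\varphi\circ E=\varphi$. By Takesaki's theorem $\sigma^\varphi_t(\Ka)=\Ka$ and $E\circ\sigma^\varphi_t=\sigma^\varphi_t\circ E$ for all $t$, and $\varphi_0:=\varphi|_\Ka$ is faithful with modular group $\sigma^\varphi|_\Ka$. Hence the Haagerup spaces $L_p(\Ka)$ embed compatibly and isometrically into $L_p(\Me)$ for $1\le p\le\infty$, forming sub-bimodules closed under polar decomposition, under the functional calculus of positive elements, and under products whenever the Hölder exponents add up to one; for $p=1$ this inclusion identifies $L_1(\Ka)$ with $\{h_\chi:\chi\in\Me_*,\ \chi\circ E=\chi\}$, the range of $E_*$, and $h_\varphi\in L_1(\Ka)$, so $h_\varphi^{1/2q}\in L_{2q}(\Ka)$. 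Consequently, for a state $\chi$, $\chi\circ E=\chi$ if and only if $h_\chi\in L_1(\Ka)$, and the Kosaki isomorphism $i_p\colon L_p(\Me)\to L_p(\Me,\varphi)$, $k\mapsto h_\varphi^{1/2q}kh_\varphi^{1/2q}$, restricts to a bijection of $L_p(\Ka)$ onto $L_p(\Me,\varphi)\cap L_1(\Ka)$. Now write $h_\psi=c\,h_\varphi^{1/2q}h_\omega^{1/p}h_\varphi^{1/2q}=c\,i_p(h_\omega^{1/p})$, with $1/p+1/q=1$. Then
\[
\psi\circ E=\psi\iff h_\psi\in L_1(\Ka)\iff h_\omega^{1/p}\in L_p(\Ka)\iff h_\omega\in L_1(\Ka)\iff\omega\circ E=\omega,
\]
where the middle equivalence uses the bijectivity of $i_p|_{L_p(\Ka)}$ (the scalar $c$ being irrelevant), and the third uses that the $p$-th power is a bijection from $L_1(\Ka)^+$ onto $L_p(\Ka)^+$. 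This proves the lemma.

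The routine part is the bookkeeping in the reduction step. The part I expect to require the most care is the $L_p$-theoretic input of the third paragraph: that a $\varphi$-preserving normal conditional expectation $E$ produces a compatible isometric inclusion $L_p(\Ka)\subseteq L_p(\Me)$ of Haagerup spaces, with $L_1(\Ka)$ equal to the range of $E_*$, respecting products, polar decomposition and the functional calculus of positive elements, and fitting together with the Kosaki construction. These are by-now-standard facts about noncommutative $L_p$-spaces (Haagerup, Terp, Kosaki), already exploited by Petz in this circle of ideas, but establishing them carefully relies on the modular relation $E\sigma^\varphi=\sigma^\varphi E$ and requires handling unbounded operators and the crossed-product picture with some care.
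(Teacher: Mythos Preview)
Your proposal is correct and follows essentially the same route as the paper: reduce via Theorem~\ref{thm:petzrecovery} and Lemma~\ref{lemma:condexp} to the equivalence $\psi\circ E=\psi\iff\omega\circ E=\omega$ for the $\varphi$-preserving conditional expectation $E$, and then settle this by the $L_p$-compatibility of $E$. The only cosmetic difference is that the paper works with the extended projections $E_p$ of Appendix~\ref{sec:ce} and the module identity \eqref{eq:condexp_holder} to get $E_1(h_\psi)=c\,h_\varphi^{1/2q}E_p(h_\omega^{1/p})h_\varphi^{1/2q}$ directly, whereas you phrase the same step as ``$i_p$ restricts to a bijection $L_p(\Ka)\to L_p(\Me,\varphi)\cap L_1(\Ka)$''; the surjectivity half of your bijection claim is exactly what \eqref{eq:condexp_holder} delivers, so the two arguments coincide.
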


\begin{proof} Let $\Omega=\Phi_\varphi\circ \Phi$, then $\varphi$ is a faithful invariant state for $\Omega$. By Lemma \ref{lemma:condexp} and Theorem \ref{thm:petzrecovery}, there is a faithful normal conditional expectation $E$ such that $\varphi\circ E=\varphi$ and
$\Phi$ is sufficient with respect to $\{\psi,\varphi\}$ if and only if $\psi\circ E=\psi$. Let us denote the range of $E$ by $\Me_0$.

We now apply the results in Appendix \ref{sec:ce}. Let  
$\psi\circ E=\psi$, that is, $E_1(h_\psi)=h_\psi$.  By \eqref{eq:condexp_holder} and \eqref{eq:ce_L1},
\[
h_\psi=E_1(h_\psi)=c E_1(h_\varphi^{1/2q}h_\omega^{1/p}h_\varphi^{1/2q})=
c h_\varphi^{1/2q}E_p(h_\omega^{1/p})h_\varphi^{1/2q}.
\]
Since  $i_p$ is an isomorphism (see Theorem \ref{thm:lpfi}), we see that we must have 
$h^{1/p}_\omega=E_p(h^{1/p}_\omega)\in L_p(\Me_0)$. But then also $h_\omega\in L_1(\Me_0)$, so that 
$\omega\circ E=\omega$ and $\Phi$ is sufficient with respect to $\{\omega,\varphi\}$. Conversely, if $\omega\circ E=\omega$, then
 $h_\omega^{1/p}\in L_p(\Me_0)$, so that  
$h_\psi\in L_1(\Me_0)$ and  $\psi\circ E=\psi$.

\end{proof}

\begin{lemma}\label{lemma:norm} Let $\Phi:L_1(\Me)\to L_1(\Ne)$ be a positive trace preserving map and let $1<p<\infty$. Let  $h\in L_p(\Me,\varphi)$ be such that 
$\|\Phi(h)\|_{p,\Phi(\varphi)}=\|h\|_{p,\varphi}$.  Then  
\[
\|\Phi(f_{p,h}(\theta))\|_{1/\theta,\Phi(\varphi)}=\|f_{p,h}(\theta)\|_{1/\theta,\varphi},\quad \forall \theta\in (0,1).
\]

\end{lemma}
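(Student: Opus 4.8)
The plan is to apply Lemma~\ref{lemma:equal} twice: first to the extremal function $f_{h,p}$ of \eqref{eq:fkp}, and then to its image under $\Phi$. Recall from the discussion around \eqref{eq:fkp} that $f_{h,p}\in\Fe:=\Fe(L_\infty(\Me,\varphi),L_1(\Me))$, that $f_{h,p}(1/p)=h$, and that $\vertiii{f_{h,p}}_\Fe=\|h\|_{p,\varphi}$. Thus $f_{h,p}$ attains its $\Fe$-norm at $1/p\in(0,1)$, and Lemma~\ref{lemma:equal} gives
\[
\|f_{h,p}(x+it)\|_{1/x,\varphi}=\|h\|_{p,\varphi},\qquad x\in[0,1],\ t\in\mathbb R .
\]
In particular $\|f_{h,p}(\theta)\|_{1/\theta,\varphi}=\|h\|_{p,\varphi}$ for every $\theta\in(0,1)$, which is the right-hand side of the asserted identity; it remains to see that the left-hand side equals the same constant.

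Next I would put $g:=\Phi\circ f_{h,p}$ and $\Fe_\Ne:=\Fe(L_\infty(\Ne,\Phi(\varphi)),L_1(\Ne))$. By Proposition~\ref{prop:contraction} and its proof, $\Phi$ is a contraction from $L_1(\Me)$ into $L_1(\Ne)$ and maps $L_\infty(\Me,\varphi)$ contractively into $L_\infty(\Ne,\Phi(\varphi))$ (if $\Phi(\varphi)$ is not faithful one passes to $e'\Ne e'$, $e'=s(\Phi(\varphi))$). Composing the admissible function $f_{h,p}$ with $\Phi$, which is bounded on both endpoint spaces, again produces an admissible function, so $g\in\Fe_\Ne$; and, using the displayed formula at $x=0$ and $x=1$,
\[
\vertiii{g}_{\Fe_\Ne}\ \le\ \max\Big\{\,\sup_{t}\|f_{h,p}(it)\|_{\infty,\varphi},\ \sup_{t}\|f_{h,p}(1+it)\|_{1,\varphi}\,\Big\}\ =\ \|h\|_{p,\varphi}.
\]
I expect the only delicate point to be the verification that $g$ genuinely belongs to $\Fe_\Ne$ — continuity on $S$, analyticity in its interior, and the correct target spaces and decay on the two boundary lines — but this is routine given the boundedness of $\Phi$ on $L_1(\Me)$ and on $L_\infty(\Me,\varphi)$ recorded in Proposition~\ref{prop:contraction}.

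Finally I would invoke the hypothesis. Since $g(1/p)=\Phi(h)$, the definition of the interpolation norm gives $\|\Phi(h)\|_{p,\Phi(\varphi)}\le\vertiii{g}_{\Fe_\Ne}\le\|h\|_{p,\varphi}$, whereas $\|\Phi(h)\|_{p,\Phi(\varphi)}=\|h\|_{p,\varphi}$ by assumption. Hence both inequalities are equalities: $\vertiii{g}_{\Fe_\Ne}=\|h\|_{p,\varphi}$, and $g$ attains its $\Fe_\Ne$-norm at $1/p\in(0,1)$. A second application of Lemma~\ref{lemma:equal}, now to $g$, then yields
\[
\|\Phi(f_{h,p}(\theta))\|_{1/\theta,\Phi(\varphi)}=\|g(\theta)\|_{1/\theta,\Phi(\varphi)}=\vertiii{g}_{\Fe_\Ne}=\|h\|_{p,\varphi},\qquad \theta\in(0,1),
\]
and comparison with the first display completes the proof. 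So the whole argument is a double application of Lemma~\ref{lemma:equal}, glued together by Proposition~\ref{prop:contraction} and the equality hypothesis; I do not anticipate any genuinely hard step.
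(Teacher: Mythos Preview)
Your proposal is correct and follows essentially the same route as the paper: compose the extremal function $f_{h,p}$ with $\Phi$ to get an element of $\Fe(L_\infty(\Ne,\Phi(\varphi)),L_1(\Ne))$ whose $\Fe$-norm is at most $\vertiii{f_{h,p}}_\Fe=\|h\|_{p,\varphi}$, use the equality hypothesis to force this norm to be attained at $1/p$, and then invoke Lemma~\ref{lemma:equal}. The paper's proof is simply a terser version of yours; your explicit first application of Lemma~\ref{lemma:equal} to $f_{h,p}$ (identifying the right-hand side with $\|h\|_{p,\varphi}$) is implicit in the paper.
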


\begin{proof} By Proposition \ref{prop:contraction}, $\Phi\circ f_{p,h}\in \Fe(L_\infty(\Ne,\Phi(\varphi)),L_1(\Ne))=:\Fe_0$ and $\vertiii{\Phi\circ f_{p,h}}_{\Fe_0}\le \vertiii{f_{p,h}}_\Fe$. Since  
$\Phi\circ f_{p,h}(1/p)=\Phi(h)$, we have 
\[
\|\Phi(h)\|_{p,\Phi(\varphi)}\le \vertiii{\Phi\circ f_{p,h}}_{\Fe_0}\le \vertiii{f_{p,h}}_\Fe=\|h\|_{p,\varphi}=\|\Phi(h)\|_{p,\Phi(\varphi)},
\]
hence $\|\Phi\circ f_{p,h}(1/p))\|_{p,\Phi(\varphi)}= \vertiii{\Phi(f_{p,h})}_{\Fe_0}=\vertiii{f_{p,h}}_\Fe$. 
The result now follows by Lemma \ref{lemma:equal}.

\end{proof}

We are now prepared to prove the main result of this section.

\begin{thm}\label{thm:sufficiency} Let $\Phi:L_1(\Me)\to L_1(\Ne)$ be a 2-positive trace preserving map and let $1<\alpha<\infty$. Let $\varphi,\psi\in \states_*(\Me)$ be such that $h_\psi\in L_\alpha(\Me,\varphi)$. Then $\Phi$ is sufficient with respect to $\{\psi,\varphi\}$ if and only if $\tilde D_\alpha(\psi\|\varphi)=\tilde D_\alpha(\Phi(\psi)\|\Phi(\varphi))$.

\end{thm}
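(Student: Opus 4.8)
The plan is to prove both directions, the content being in the implication ``equality $\Rightarrow$ sufficiency''. The converse is immediate: if $\Phi$ is sufficient with respect to $\{\psi,\varphi\}$ there is a $2$-positive (hence positive) trace preserving recovery map $\Psi$ with $\Psi\circ\Phi(h_\psi)=h_\psi$, $\Psi\circ\Phi(h_\varphi)=h_\varphi$, so, since $h_\psi\in L_\alpha(\Me,\varphi)$ keeps everything finite, applying the data processing inequality (Theorem \ref{thm:dtp}) first to $\Phi$ and then to $\Psi$ gives
\[
\tilde D_\alpha(\psi\|\varphi)\ge \tilde D_\alpha(\Phi(\psi)\|\Phi(\varphi))\ge \tilde D_\alpha(\Psi(\Phi(\psi))\|\Psi(\Phi(\varphi)))=\tilde D_\alpha(\psi\|\varphi),
\]
forcing equality throughout.

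For the main implication I would first reduce to the faithful situation. Since $h_\psi\in L_\alpha(\Me,\varphi)$ we have $s(\psi)\le s(\varphi)=:e$, and by the remarks at the beginning of Section \ref{sec:sandwiched} the quantity $\tilde D_\alpha(\psi\|\varphi)$ depends only on the restrictions of $\psi,\varphi$ to $e\Me e$, on which $\varphi$ is faithful; likewise, as shown at the start of the data processing subsection, $\Phi$ maps $L_1(e\Me e)$ into $L_1(e'\Ne e')$ with $e':=s(\Phi(\varphi))$, corestricting to a $2$-positive trace preserving map with $\Phi(\varphi)$ faithful, and a recovery map on the corners extends to one on the full algebras via the standard construction $\Psi'\mapsto\big(h\mapsto \Psi'(e'he')+\Tr((1-e')h)\varphi_0\big)$ for a fixed state $\varphi_0$. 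So I assume henceforth that $\varphi$ and $\Phi(\varphi)$ are faithful.

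Next I would put $h_\psi$ in canonical form. By Theorem \ref{thm:lpfi} we may write $h_\psi=c\,h_\varphi^{1/2\beta}h_\omega^{1/\alpha}h_\varphi^{1/2\beta}$ with $c=\|h_\psi\|_{\alpha,\varphi}$ and $\omega$ a state, and then the function of \eqref{eq:fkp} is $f_{h_\psi,\alpha}(z)=c\,h_\varphi^{(1-z)/2}h_\omega^{z}h_\varphi^{(1-z)/2}$. The hypothesis $\tilde D_\alpha(\psi\|\varphi)=\tilde D_\alpha(\Phi(\psi)\|\Phi(\varphi))$ is exactly $\|\Phi(h_\psi)\|_{\alpha,\Phi(\varphi)}=\|h_\psi\|_{\alpha,\varphi}$ (both sides finite by Proposition \ref{prop:contraction}), so Lemma \ref{lemma:norm} applies with $p=\alpha$ and gives $\|\Phi(f_{h_\psi,\alpha}(\theta))\|_{1/\theta,\Phi(\varphi)}=\|f_{h_\psi,\alpha}(\theta)\|_{1/\theta,\varphi}$ for all $\theta\in(0,1)$. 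Taking $\theta=1/2$ and letting $\psi'\in\states_*(\Me)$ be the state with $h_{\psi'}$ proportional to $f_{h_\psi,\alpha}(1/2)=c\,h_\varphi^{1/4}h_\omega^{1/2}h_\varphi^{1/4}\in L_2(\Me,\varphi)^+$, this yields $\tilde D_2(\psi'\|\varphi)=\tilde D_2(\Phi(\psi')\|\Phi(\varphi))$; by the $\alpha=2$ case of Corollary \ref{coro:equality} this means $\Phi_\varphi\circ\Phi(h_{\psi'})=h_{\psi'}$, whence Theorem \ref{thm:petzrecovery} shows $\Phi$ is sufficient with respect to $\{\psi',\varphi\}$.

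Finally I would transport sufficiency back to $\psi$. Since $h_{\psi'}$ is proportional to $h_\varphi^{1/4}h_\omega^{1/2}h_\varphi^{1/4}$ it has the form required by Lemma \ref{lemma:sufficpsiomega} (with $p=2$ and the same $\omega$), so sufficiency with respect to $\{\psi',\varphi\}$ is equivalent to sufficiency with respect to $\{\omega,\varphi\}$; and since $h_\psi=c\,h_\varphi^{1/2\beta}h_\omega^{1/\alpha}h_\varphi^{1/2\beta}$ likewise has that form (with $p=\alpha$, same $\omega$), sufficiency with respect to $\{\omega,\varphi\}$ is in turn equivalent to sufficiency with respect to $\{\psi,\varphi\}$. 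Chaining these equivalences finishes the proof. I expect the main obstacle to be the passage from the parameter $\alpha$ to the value $2$: its substance is the interpolation/three-lines argument packaged in Lemma \ref{lemma:norm}, which lets equality in DPI at one point of the family $L_{1/\theta}(\Me,\varphi)$ propagate along the whole family, so that it can be read off at $\alpha=2$ where the Petz recovery theorem is available; the support and normalization bookkeeping in the reduction step is routine but has to be handled with care.
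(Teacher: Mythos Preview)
Your proof is correct and follows essentially the same approach as the paper: reduce to faithful $\varphi$ and $\Phi(\varphi)$, use Lemma \ref{lemma:norm} to propagate the equality $\|\Phi(h_\psi)\|_{\alpha,\Phi(\varphi)}=\|h_\psi\|_{\alpha,\varphi}$ along the interpolating family to $\theta=1/2$, apply the $\alpha=2$ case of Corollary \ref{coro:equality} together with Theorem \ref{thm:petzrecovery} to obtain sufficiency for the auxiliary state, and then invoke Lemma \ref{lemma:sufficpsiomega} twice to transfer sufficiency back to $\psi$. The only differences are cosmetic: you spell out the converse via two applications of DPI and make the citation of Theorem \ref{thm:petzrecovery} explicit, whereas the paper leaves these implicit.
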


\begin{proof} By the assumptions, $s(\psi)\le s(\varphi)$ and we may suppose that both $\varphi$ and $\Phi(\varphi)$ are faithful, by restriction to the corresponding compressed algebras. Further, we have
$h_\psi=h_\varphi^{1/2\beta}h_\omega^{1/\alpha}h_\varphi^{1/2\beta}$ for some $\omega\in \Me_*^+$, here $1/\alpha+1/\beta=1$.  Suppose that $\tilde D_\alpha(\psi\|\varphi)=\tilde D_\alpha(\Phi(\psi)\|\Phi(\varphi))$. Then 
$\|\Phi(h_\psi)\|_{\alpha,\Phi(\varphi)}=\|h_\psi\|_{\alpha,\varphi}$ and by Lemma  \ref{lemma:norm}, 
\[
\|\Phi(f_{\alpha,h_\psi}(1/2))\|_{2,\Phi(\varphi)}=\|f_{\alpha,h_\psi}(1/2)\|_{2,\varphi}.
\]
Note that 
\[
f_{\alpha,h_\psi}(1/2)=ch_\varphi^{1/4}h_\omega^{1/2}h_\varphi^{1/4}\in L_1(\Me)^+
\]
for some constant $c>0$, hence  there is  some $\psi_1\in \states_*(\Me)$, such that 
$f_{\alpha,h_\psi}(1/2)=dh_{\psi_1}$, where $d>0$ is obtained by normalization. It follows that 
$h_{\psi_1}\in L_2(\Me,\varphi)$ and we have
\[
\|\Phi(h_{\psi_1})\|_{2,\Phi(\varphi)}=\|h_{\psi_1}\|_{2,\varphi}.
\] 
By Corollary \ref{coro:equality}, this implies that $\Phi$ is sufficient with respect to $\{\psi_1,\varphi\}$ and by Lemma \ref{lemma:sufficpsiomega}, $\Phi$ is sufficient with respect to  $\{\omega_1,\varphi\}$, where $\omega_1=\omega(1)^{-1}\omega$. Using  Lemma \ref{lemma:sufficpsiomega} again, we obtain that $\Phi$ is sufficient with respect to $\{\psi,\varphi\}$.

The converse statement follows immediately from DPI (Theorem \ref{thm:dtp}).

\end{proof}

\section{Concluding remarks}

In this paper, an  extension of the sandwiched R\'enyi relative $\alpha$-entropies
to the setting of von Neumann algebras is defined for $\alpha>1$,  using 
 an interpolating family of  non-commutative $L_p$-spaces with respect to a state.  
For this extension, we proved that it coincides with the previously defined Araki-Masuda divergences \cite{berta2018renyi}
Further, some of the basic properties are shown, 
 in particular the data processing inequality with respect to positive trace-preserving maps. Since the limit $\alpha\to 1$ yields the Araki relative entropy $D_1$, this implies that $D_1$ is monotone under such maps  and not only adjoints of unital Schwarz maps, as 
 previously known \cite{uhlmann1977relative}. For $\Me =B(\Ha)$, this fact was recently observed in \cite{mhre2015monotonicity}.

Another main result of the paper is the fact that preservation of the extended sandwiched entropies characterizes 
 sufficiency of 2-positive trace preserving maps. Note that for most of the proofs 2-positivity was not needed, indeed,  Lemma \ref{lemma:condexp} is the only place where more than positivity is necessary. It would be interesting to see whether 
 similar results can be proved assuming only positivity, since the results  known so far  
on sufficiency of maps need stronger positivity conditions. Note that for $\alpha=2$, an extension to positive maps is proved in  Corollary \ref{coro:equality}.

The Araki-Masuda divergences were defined in \cite{berta2018renyi} also for  $\alpha\in [1/2,1)$. A treatment of $\tilde D_\alpha$ for these values in our setting will be given elsewhere, see \cite{jencova2017renyi2}. 

\section*{Acknowledgement}

I am grateful to Fumio Hiai for useful discussions and for sharing his notes about equality of the two extensions of sandwiched R\'enyi entropies. 

\appendix

\renewcommand{\thesection}{\Alph{section}}
\setcounter{equation}{0}
\renewcommand{\theequation}{\thesection.\arabic{equation}}

\newtheorem{thmap}{Theorem}[section]

\section{Some technical results in Haagerup $L_p$-spaces}

\subsection{Relative modular operator}\label{sec:rmo}
We discuss the definition of the relative modular operator and its form in the standard representation $(\lambda(\Me), L_2(\Me), J, L_2(\Me)^+)$. 

Let $\eta,\xi\in L_2(\Me)$ and let $\psi=\omega_\eta=(\cdot\,\eta,\eta)$, $\varphi=\omega_\xi=(\cdot\,\xi,\xi)$. The conjugate-linear operator $S_{\eta,\xi}$ with domain $\Me \xi+L_2(\Me)(1-s(\varphi))$ is  defined as 
\begin{equation}\label{eq:app_S}
S_{\eta,\xi}:\, x\xi+\zeta \mapsto s(\varphi)x^*\eta,\qquad x\in \Me,\ \zeta \in L_2(\Me)(1-s(\varphi)).
\end{equation}
Let also $F_{\eta,\xi}$ be defined on the domain $\xi\Me +(1-s(\varphi))L_2(\Me)$ as
\begin{equation}\label{eq:app_F}
F_{\eta,\xi}:\, \xi y+\zeta' \mapsto \eta y^*s(\varphi),\qquad y\in \Me,\ \zeta' \in (1-s(\varphi))L_2(\Me).
\end{equation}
Then  $S_{\eta,\xi}$, $F_{\eta,\xi}$ are densely defined  and closable, and we have $\bar S=F^*$, $\bar F=S^*$. The closures have polar decompositions
\[  
\bar S_{\eta,\xi}=J_{\eta,\xi}\Delta^{1/2}_{\eta,\xi},\quad \bar F_{\eta,\xi}=\Delta^{1/2}_{\eta,\xi}J_{\xi,\eta}=J_{\xi,\eta}\Delta_{\xi,\eta}^{-1/2},
\]
 where $J_{\eta,\xi}$ is a partial anti-isometry, $J_{\xi,\eta}=J_{\eta,\xi}^*$ and $\Delta_{\eta,\xi}$ is a positive self-adjoint operator on $L_2(\Me)$, called the relative modular operator. This operator  does not depend on 
the choice of the vector representative $\eta$ of $\psi$ and we may replace $J_{\eta,\xi}$ with $J$ if $\eta,\xi\in L_2(\Me)^+$ (which means that $\eta=h_\psi^{1/2}$, $\xi=h_\varphi^{1/2}$).  See e.g. \cite[Appendix C]{arma1982positive} and \cite{takesaki2003TOAII} for more details. 
We use the notation $\Delta_{\psi,\varphi}:=\Delta_{\eta,h_\varphi^{1/2}}$.

Note that for $z=\alpha+it$, $0\le \alpha\le 1/2$, $t\in \mathbb R$, we have \cite{kosaki1981positive}
\begin{align*}
\mathcal D(\Delta_{\psi,\varphi}^z)=\mathcal D(\Delta_{\psi,\varphi}^\alpha)&=\{k\in L_2(\Me),\ h_\psi^{\alpha}k h_\varphi^{-\alpha}\in L_2(\Me)\}\\
&=\{k\in L_2(\Me), \exists k'\in L_2(\Me), h_\psi^{\alpha}ks(\varphi)=k'h_\varphi^\alpha\}
\end{align*}
and for $k\in \mathcal D(\Delta_{\psi,\varphi}^z)$, 
\begin{equation}\label{eq:app_modular}
\Delta_{\psi,\varphi}^zk=h_\psi^z k h_\varphi^{-z}=h_\psi^{it}k' h_\varphi^{-it}.
\end{equation} 
Moreover, since  $J\Delta_{\psi,\varphi}J=\Delta_{\varphi,\psi}^{-1}$, we have  $\mathcal D(\Delta_{\varphi,\psi}^{-z})=\mathcal D(\Delta_{\varphi,\psi}^{-\alpha})=J\mathcal D(\Delta_{\psi,\varphi}^{\alpha})$
and for $k\in \mathcal D(\Delta_{\psi,\varphi}^{z})$, 
\begin{equation}\label{eq:app_modular_d}
\Delta_{\varphi,\psi}^{-z}k^*= h_\varphi^{-z}k^* h_\psi^z
=h_\varphi^{-it}(k')^*h_\psi^{it} .
\end{equation}

\subsection{The spatial derivative}\label{sec:spatial}

We now recall the definition of the  spatial derivative $\Delta(\eta/\varphi)$ of 
\cite{berta2018renyi} in the above standard representation.  
Let $\Ha_\varphi:=[\Me h_\varphi^{1/2}]=L_2(\Me)s(\varphi)$ and let $\xi\in L_2(\Me)$ be such that the corresponding functional  is  majorized  by $\varphi$: 
\[
\omega_\xi(a^*a)=\|a\xi\|^2\le C_\xi\varphi(a^*a),\qquad \forall a\in \Me,
\]
for some positive constant $C_\xi$. Then
\[
R^\varphi(\xi): ah_\varphi^{1/2}\mapsto a\xi,\qquad a\in \Me
\] 
extends to a bounded linear operator $\Ha_\varphi\to L_2(\Me)$. Obviously, $R^\varphi(\xi)$ extends to a bounded linear operator on $L_2(\Me)$ by putting it equal to 0 on $L_2(\Me)(1-s(\varphi))$. Moreover, this operator commutes with  the left action of $\Me$, so that it belongs to  $\lambda(\Me)'=\rho(\Me)$, where $\rho$ is the right action $\rho(a): h\mapsto ha$, $h\in L_2(\Me)$.  In fact, $\omega_\xi$ is majorized by $\varphi$ if and only if $\xi\in h_\varphi^{1/2}\Me$, so  that there is some $y_\xi\in \Me$ such that $\xi=h_\varphi^{1/2}y_\xi$, $s(\varphi)y_\xi=
y_\xi$  and we have
$R^\varphi(\xi)=\rho(y_\xi)$.    

Let now $\eta\in L_2(\Me)$. The spatial derivative $\Delta(\eta/\varphi)$ is a positive self-adjoint operator associated with the quadratic form 
$\xi\mapsto (\eta, R^\varphi(\xi)R^\varphi(\xi)^*\eta)$ as
\begin{align*}
(\xi,\Delta(\eta/\varphi)\xi)&=(\Delta(\eta/\varphi)^{1/2}\xi,\Delta(\eta/\varphi)^{1/2}\xi)=(\eta, R^\varphi(\xi)R^\varphi(\xi)^*\eta)\\
&=(R^\varphi(\xi)^*\eta,R^\varphi(\xi)^*\eta)=(\eta y_\xi^*s(\varphi),\eta y_\xi^*s(\varphi))=
(F_{\eta,h_\varphi^{1/2}}\xi,F_{\eta,h_\varphi^{1/2}}\xi),
\end{align*}
see \eqref{eq:app_F}. Since $h_\varphi^{1/2}\Me+ (1-s(\varphi))L_2(\Me)$ is a core for both 
$\Delta(\eta/\varphi)$ and $F_{\eta,h_\varphi^{1/2}}$, it follows that 
\[
\Delta(\eta/\varphi)= F_{\eta,h_\varphi^{1/2}}^*F_{\eta,h_\varphi^{1/2}}=J\Delta_{\omega,\varphi}J,
\]
where $\omega:=\omega_\eta$.
This implies that for any $\xi\in L_2(\Me)$ and $\gamma\in \mathbb C$, we have   
\begin{equation}\label{eq:app_spatial}
\|\Delta(\eta/\varphi)^\gamma \xi\|_2=
\|\Delta_{\omega,\varphi}^\gamma J\xi\|_2=\|\Delta_{\omega,\varphi}^\gamma \xi^*\|_2.
\end{equation}

\subsection{Extensions of conditional expectations} \label{sec:ce}

 A conditional expectation $E$ on a von Neumann algebra $\Me$ is a positive contractive normal projection onto a von Neumann subalgebra $\Me_0\subseteq \Me$. A conditional expectation is necessarily completely positive and satisfies the condition
 \begin{equation}\label{eq:app_ce}
E(xay)=xE(a)y,\qquad x,y\in \Me_0,\ a\in \Me.
 \end{equation}
Assume a faithful normal state $\phi$ and a von Neumann subalgebra $\Me_0\subseteq \Me$ are given, 
such that there is a  conditional expectation $E$ satisfying $\phi\circ E=\phi$. Then 
the space $L_p(\Me_0)$ for $1\le p\le\infty$ can be identified with a subspace  in $L_p(\Me)$ and 
$E$ can be extended  to a  contractive projection  $E_p$ of $L_p(\Me)$ onto $L_p(\Me_0)$, \cite{juxu2003noncommutative}.
This  extension is  positive and satisfies
\begin{equation}\label{eq:condexp_holder}
E_s(hlk)=hE_r(l)k,\qquad h\in L_p(\Me_0), k\in L_q(\Me_0), l\in L_r(\Me),
\end{equation}
whenever  $1\le p,q,r\le \infty$ are such that $1/p+1/q+1/r=1/s\le 1$. 
Moreover, for  $p=1$, we have 
\begin{equation}\label{eq:ce_L1}
E_1: h_\psi\mapsto h_{\psi\circ E},\qquad \psi\in \Me_*.
\end{equation}

\section{The complex interpolation method}\label{sec:interp}

 In this paragraph, we  briefly describe the complex interpolation method, following \cite{belo1976interpolation}, see also \cite{kosaki1984applications}.

Let $(X_0,X_1)$ be a compatible pair of Banach spaces, with norms $\|\cdot\|_0$ and $\|\cdot\|_1$. For our purposes, it is enough to assume that $X_0$ is continuously embedded in $X_1$. Let $S\subset \mathbb C$ be the strip $S=\{z\in \mathbb C,\ 0\le Re(z)\le 1\}$ and let $\Fe=\Fe(X_0,X_1)$ be the set of functions $f:S\to X_1$ such that
\begin{enumerate}
\item[(a)] $f$ is bounded, continuous on $S$ and analytic in the interior of $S$
\item[(b)] For $t\in \mathbb R$, $f(it)\in X_0$ and the map 
  $t\in \mathbb R \mapsto f(it) \in X_0$ is continuous and bounded.
\end{enumerate}
For $f\in \Fe$, let 
\[
\vertiii{f}_\Fe=\max\{\sup_t \|f(it)\|_{0}, \sup_t\|f(1+it)\|_1\}
\]
Then $(\Fe,\vertiii{\cdot}_\Fe)$ is a Banach space. For $0<\theta<1$, the  interpolation space is defined as the set
\[
 C_{\theta}(X_0,X_1)=\{f(\theta), f\in \Fe\}
\]
endowed with the norm
\begin{equation}\label{eq:theta}
\|x\|_{\theta}=\inf\{\vertiii{f}_\Fe,\ f(\theta)=x,\ f\in \Fe\}.
\end{equation}

Since $C_\theta(X_0,X_1)$ is the quotient space $\Fe/K_\theta$ with respect to the closed subspace $K_\theta=\{f\in \Fe, f(\theta)=0\}$, we see that $C_\theta(X_0,X_1)$ is a Banach space. Moreover, we have the continuous embeddings
\[
X_0\subseteq C_\theta(X_0,X_1)\subseteq X_1
\]
 and $C_\theta$ defines an exact interpolation functor of exponent $\theta$, which means that  the following abstract version of the Riesz-Thorin interpolation theorem holds.

\begin{thmap}\label{thm:rt} Let $(X_0,X_1)$ and $(Y_0,Y_1)$ be pairs of compatible Banach spaces and let $T:X_1\to Y_1$ be a bounded linear operator such that $T(X_0)\subseteq Y_0$. If $\|Tx\|_{Y_1}\le M_1\|x\|_{X_1}$, $x\in X_1$ and  $\|Tx_0\|_{X_0}\le M_0\|x_0\|_{X_0}$ for $x_0\in X_0$, then for $\theta\in (0,1)$, 
\[
\|Tx\|_\theta\le M_0^{1-\theta}M_1^\theta\|x\|_\theta.
\]

\end{thmap}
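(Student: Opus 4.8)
The plan is to carry out the classical Calder\'on argument: lift an element of the interpolation space to an admissible analytic function, compose it with $T$, and estimate the norm of the resulting function in $\Fe(Y_0,Y_1)$ after multiplying by a suitable exponential factor that converts the maximum of the two boundary bounds into their geometric mean.

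First I would dispose of the degenerate case: replacing $M_0,M_1$ by $M_0+\delta,M_1+\delta$ (the hypotheses are still satisfied) and letting $\delta\downarrow 0$ at the very end, we may assume $M_0,M_1>0$. The hypotheses amount to saying that $T$ is bounded $X_1\to Y_1$ with norm $\le M_1$ and bounded $X_0\to Y_0$ with norm $\le M_0$. Now fix $x\in C_\theta(X_0,X_1)$ and $\varepsilon>0$, and by \eqref{eq:theta} pick $f\in\Fe(X_0,X_1)$ with $f(\theta)=x$ and $\vertiii{f}_\Fe\le\|x\|_\theta+\varepsilon$. Put $g:=T\circ f$. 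Since $T:X_1\to Y_1$ is bounded, $g$ is bounded and continuous on $S$ and analytic in the interior; since $T$ maps $X_0$ boundedly into $Y_0$ and $t\mapsto f(it)$ is continuous and bounded into $X_0$, the map $t\mapsto g(it)=T(f(it))$ is continuous and bounded into $Y_0$. Hence $g\in\Fe(Y_0,Y_1)$ with $g(\theta)=Tx$, so $Tx\in C_\theta(Y_0,Y_1)$, and the boundary estimates give $\sup_t\|g(it)\|_{Y_0}\le M_0\vertiii{f}_\Fe$ and $\sup_t\|g(1+it)\|_{Y_1}\le M_1\vertiii{f}_\Fe$.

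For the geometric-mean refinement, for each $\nu>0$ I would consider $g_\nu(z):=\nu^{\,z-\theta}g(z)$. As $z\mapsto\nu^{\,z-\theta}$ is entire and bounded on $S$, we have $g_\nu\in\Fe(Y_0,Y_1)$ and $g_\nu(\theta)=Tx$; since $|\nu^{\,it-\theta}|=\nu^{-\theta}$ and $|\nu^{\,1+it-\theta}|=\nu^{1-\theta}$,
\[
\|Tx\|_\theta\le\vertiii{g_\nu}_\Fe\le\max\{\nu^{-\theta}M_0,\ \nu^{1-\theta}M_1\}\,\vertiii{f}_\Fe.
\]
Taking $\nu=M_0/M_1$ balances the two terms, and their common value is $M_0^{1-\theta}M_1^{\theta}$, so $\|Tx\|_\theta\le M_0^{1-\theta}M_1^{\theta}(\|x\|_\theta+\varepsilon)$; letting $\varepsilon\downarrow 0$ and then $\delta\downarrow 0$ completes the proof.

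The argument is classical and essentially routine; the two points worth checking carefully are that $T\circ f$ really belongs to $\Fe(Y_0,Y_1)$ -- this is where one uses boundedness of $T$ on $X_0$, not merely on $X_1$, to get continuity of $t\mapsto T(f(it))$ into $Y_0$ -- and the choice of the rescaling factor $\nu^{\,z-\theta}$, which is precisely the device that improves the naive bound $\max\{M_0,M_1\}$ to the interpolated bound $M_0^{1-\theta}M_1^{\theta}$.
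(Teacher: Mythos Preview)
Your argument is the standard Calder\'on proof and is correct. Note, however, that the paper does not actually prove Theorem~\ref{thm:rt}: it is stated in Appendix~\ref{sec:interp} as a known result from complex interpolation theory (with reference to \cite{belo1976interpolation,kosaki1984applications}), so there is no ``paper's own proof'' to compare against. Your write-up would serve perfectly well as a self-contained justification.
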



\end{document}